\newcommand{\cD}{\mathcal{D}}
\newcommand{\N}{\mathbb{N}}
\newcommand{\F}{\mathbb{F}}
\newcommand{\bb}{\bx}
\newcommand{\bA}{\mathbf{A}}
\newcommand{\bt}{\mathbf{t}}
\newcommand{\bzero}{\mathbf{0}}
\newcommand{\bx}{\mathbf{x}}
\newcommand{\by}{\mathbf{y}}
\newcommand{\cP}{\mathcal{P}}
\newcommand{\cA}{\mathcal{A}}
\newcommand{\cX}{\mathcal{X}}
\newcommand{\cY}{\mathcal{Y}}
\newcommand{\cZ}{\mathcal{Z}}
\newcommand{\uniform}{\mathrm{Uniform}}
\newcommand{\yy}{\mathbf{y}}
\newcommand{\davg}{d_\textrm{avg}}
\newcommand{\poly}{\text{poly}}
\newcommand{\inputring}{\mathbb{F}_q}%{\mathbb{Z}_q}
\newcommand{\sumset}{\mathcal{B}}
\newcommand{\enc}{\mathsf{Enc}}
\newcommand{\enclocal}[2]{\mathcal{E}_{#1}^{#2}}
\newcommand{\encdist}[2]{\mathcal{E}_{#1}^{#2}}
\newcommand{\encshufdist}[2]{\mathcal{S}_{#1}^{#2}}
\newcommand{\sd}{\mathrm{SD}}
\newcommand{\invset}{\mathrm{Inv}}
\renewcommand{\epsilon}{\varepsilon}
\newcommand{\analyzer}{\mathcal{A}}
\newtheorem{observation}{Observation}
\newtheorem{fact}{Fact}
\DeclareMathOperator{\defc}{defc}
\DeclareMathOperator{\rank}{rank}
\DeclareMathOperator{\supp}{supp}
\DeclareMathOperator*{\E}{\mathbb{E}}
\DeclareMathOperator*{\Var}{Var}
\DeclareMathOperator*{\argmax}{argmax}
\begin{document}

\title{Private Aggregation\\ from Fewer Anonymous Messages}

%\titlerunning{Private Aggregation\\ from Fewer Anonymous Messages}
% If the paper title is too long for the running head, you can set
% an abbreviated paper title here
%
\author{Badih Ghazi\inst{1} \and
Pasin Manurangsi\inst{1} \and
Rasmus Pagh\inst{1,2}\and
Ameya Velingker\inst{1}}
\authorrunning{B. Ghazi et al.}
% First names are abbreviated in the running head.
% If there are more than two authors, 'et al.' is used.
%
\institute{Google Research, Mountain View CA 94043, USA \and
IT University of Copenhagen, Denmark \\ \email{\{badihghazi,pasin,pagh,ameyav\}@google.com}}

\maketitle              % typeset the header of the contribution
\begin{abstract}

%\badih{Abstract instructions from the template: The abstract should briefly summarize the contents of the paper in 150--250 words.}

Consider the setup where $n$ parties are each given an element~$x_i$ in the finite field $\F_q$ and the goal is to compute the sum $\sum_i x_i$ in a secure fashion and with as little communication as possible. 
We study this problem in the \emph{anonymized model} of Ishai et al.~(FOCS 2006) where each party may broadcast anonymous messages on an insecure channel.

We present a new analysis of the one-round ``split and mix'' protocol of Ishai et al.
In order to achieve the same security parameter, our analysis reduces the required number of messages by a $\Theta(\log n)$ multiplicative factor.

We also prove lower bounds showing that the dependence of the number of messages on the domain size, the number of parties, and the security parameter is essentially tight.

Using a reduction of Balle et al. (2019), our improved analysis of the protocol of Ishai et al. yields, in the same model, an $\left(\varepsilon, \delta\right)$-differentially private protocol for aggregation that, for any constant $\varepsilon > 0$ and any $\delta = \frac{1}{\poly(n)}$, incurs only a constant error and requires only a \emph{constant number of messages} per party. Previously, such a protocol was known only for $\Omega(\log n)$ messages per party.

\keywords{Secure Aggregation \and Anonymous Channel \and Shuffled Model \and Differential Privacy.}
\end{abstract}

% \newpage

% \section{Notation (TO BE REMOVED BEFORE SUBMISSION)}

% \noindent
% \begin{tabular}{c|c|l}
%      Symbol & Macro & Meaning \\
%      \hline
%      $(\varepsilon,\delta)$ & & DP parameters\\
%      $n$ & & Number of parties\\
%      $??$ & & Bit length of integer input\\
%      $m$ & & Number of messages per party\\
%      $\sigma$ & & Logarithm of statistical error\\
%      $q$ & & Field size\\
%      $\inputring$ & \texttt{\textbackslash inputring} & Ring containing party inputs\\
%      $\sumset_s$ & \texttt{\textbackslash sumset} & Set of input vectors whose coordinates sum to $s$ \\
%      $\enc$ & \texttt{\textbackslash enc} & Local encoder \\
%      $\enclocal{x}{\enc}$ & \texttt{\textbackslash enclocal\{x\}\{\textbackslash enc\}} & Distribution of messages output by the encoder on input $x$ \\
%      $\encdist{\bx}{\enc}$ & \texttt{\textbackslash encdist\{\textbackslash bx\}\{\textbackslash enc\}} & Distribution of pre-shuffled output of $\bf{x}$ \\
%      $\encshufdist{\bx}{\enc}$ & \texttt{\textbackslash encshufdist\{\textbackslash bx\}\{\textbackslash enc\}} & Distribution of post-shuffled output of $\bf{x}$ \\
%      $\sd$ & \texttt{\textbackslash sd} & Statistical distance (same as total variation distance) \\
%      $\cloak$ & \texttt{\textbackslash cloak} & Split and mix protocol \\
%      \hline
% \end{tabular}

% \newpage

\section{Introduction}

%\subsection*{Secure Aggregation via Anonymity}
We study one-round  multi-party protocols for the problem of secure aggregation:
Each of $n$ parties holds an element of the field $\F_q$ and we wish to compute the sum of these numbers, while satisfying the security property that for every two inputs with the same sum, their transcripts are ``indistinguishable.'' The protocols we consider work in the \emph{anonymized model}, where parties are able to send anonymous messages through an insecure channel and indistinguishability is in terms of the \emph{statistical distance} between the two transcripts (i.e., this is information-theoretic security rather than computational security). This model was introduced by Ishai et al.~\cite{IKOS06} in their work on cryptography from anonymity\footnote{Ishai et al. in fact considered a more general model in which the adversary is allowed to corrupt some of the parties; please refer to the discussion at the end of Section~\ref{subsec:our-results} for more details.}.
We refer to~\cite{IKOS06,cheu19} for a discussion of cryptographic realizations of an anonymous channel.

The secure aggregation problem in the anonymized model was studied already by Ishai et al.~\cite{IKOS06}, who gave a very elegant one-round ``split and mix'' protocol.
Under their protocol, each party $i$ holds a private input $x_i$ and sends $m$ anonymized messages consisting of random elements of $\inputring$ that are conditioned on summing to $x_i$. Upon receiving these $mn$ anonymized messages from $n$ parties, the server adds them up and outputs the result.
Pseudocode of this protocol is shown as Algorithm~\ref{alg:ishai_et_al}.
%Note that for any value of the number $m$ of messages, this protocol outputs the correct sum.
Ishai et al.~\cite{IKOS06} show that as long as $m$ exceeds a threshold of $\Theta\left(\log n + \sigma + \log q\right)$, this protocol is \emph{$\sigma$-secure} in the sense that the statistical distance between transcripts resulting from inputs with the same sum is at most~$2^{-\sigma}$.

\paragraph{Differentially Private Aggregation in the Shuffled Model.}
An exciting recent development in differential privacy is the \emph{shuffled model}, which is closely related to the aforementioned anononymized model. The shuffled model provides a middle ground between two widely-studied models of differential privacy. 
In the \emph{central model}, the data structure released by the analyst is required to be differentially private, whereas the \emph{local model} enforces the more stringent requirement that the messages sent by each party be private. While protocols in the central model generally allow better accuracy, they require a much greater level of trust to be placed in the analyzer, an assumption that may be unsuitable for certain applications. The \emph{shuffled model} is based on the Encode-Shuffle-Analyze architecture of \cite{bittau17} and was first analytically studied by~\cite{erlingsson2019amplification,cheu19} and further studied in recent work~\cite{balle_privacy_2019,ghazi2019private}. It seeks to bridge the two aforementioned models and assumes the presence of a trusted shuffler that randomly permutes all incoming messages from the parties before passing them to the analyzer (see Section~\ref{sec:prelim} for formal definitions.) The shuffled model is particularly compelling because it allows the possibility of obtaining more accurate communication-efficient protocols than in the local model while placing far less trust in the analyzer than in the central model. Indeed, the power of the shuffled model has been illustrated by a number of recent works that have designed algorithms in this model for a wide range of problems such as privacy amplification, histograms, heavy hitters, and range queries~\cite{cheu19,erlingsson2019amplification,balle_privacy_2019,ghazi2019private}.

The appeal of the shuffled model provides the basis for our study of differentially private protocols for aggregation in this work. Most relevant to the present work are the recent differentially private protocols for aggregation of real numbers in the shuffled model provided by \cite{cheu19,balle_privacy_2019,GPV19,BBGN19}.
The strongest of these results~\cite{BBGN19} shows that an extension of the split and mix protocol yields an $(\varepsilon, \delta)$-differentially private protocol for aggregation with error $O(1+1/\varepsilon)$ and $m = O(\log(n/\delta))$ messages, each consisting of $O(\log n)$ bits.

\subsection{Our Results}
\label{subsec:our-results}

\paragraph{Upper bound.}
We prove that the split and mix protocol is in fact secure for a much smaller number of messages. 
In particular, for the same security parameter $\sigma$, the number of messages required in our analysis is %a multiplicative factor of
$\Theta(\log n)$ times smaller than the bound in~\cite{IKOS06}:

\begin{theorem}[Improved upper bound for split and mix]\label{th:up_bd_sec}
Let $n$ and $q$ be positive integers and $\sigma$ be a positive real number. The split and mix protocol (Algorithm~\ref{alg:ishai_et_al} and~\cite{IKOS06}) with $n$ parties and inputs in $\inputring$ is $\sigma$-secure for $m$ messages, where $m = O\left(1 + \frac{\sigma + \log q}{\log n}\right)$ .
\end{theorem}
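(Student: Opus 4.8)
The plan is to study the distribution of the \emph{multiset} of received messages (equivalently, of the messages in uniformly random order), which is all a transcript reveals, and to reduce $\sigma$-security to a Fourier-analytic estimate over $\F_q^n$. Fix inputs $\bx,\bx'\in\inputring^n$ with $\sum_i x_i=\sum_i x_i'$, write a transcript as the multiset $M$ of the $nm$ received elements (call the resulting distribution $\cD_{\bx}$), let $Z=(Z_{i,j})_{i\in[n],j\in[m]}$ consist of $nm$ i.i.d.\ uniform elements of $\inputring$, and set $\operatorname{sp}(Z):=\bigl(\textstyle\sum_j Z_{1,j},\dots,\sum_j Z_{n,j}\bigr)\in\inputring^n$. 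A direct count shows $\Pr[\text{transcript}=M\mid\bx]=N/q^{n(m-1)}$ and $\Pr_Z[\operatorname{mset}(Z)=M,\operatorname{sp}(Z)=\bx]=N/q^{nm}$, where $N$ is the number of length-$nm$ sequences over $\inputring$ with multiset $M$ and block sums $\bx$ and $\operatorname{mset}(Z)$ denotes the multiset of coordinates of $Z$; hence $\Pr[\text{transcript}=M\mid\bx]=q^n\Pr_Z[\operatorname{mset}(Z)=M,\operatorname{sp}(Z)=\bx]$, so
\[
\sd(\cD_{\bx},\cD_{\bx'})=\frac{q^n}{2}\sum_M\Pr[\operatorname{mset}(Z)=M]\cdot\bigl|\Pr[\operatorname{sp}(Z)=\bx\mid \operatorname{mset}(Z)=M]-\Pr[\operatorname{sp}(Z)=\bx'\mid \operatorname{mset}(Z)=M]\bigr|.
\]
Expanding $\bx\mapsto\Pr[\operatorname{sp}(Z)=\bx\mid M]$ in the additive characters $\chi_{\mathbf c}(\bx)=\chi(\sum_i c_ix_i)$ of $\F_q^n$ and noting that the \emph{constant} vectors $\mathbf c$ contribute equally to $\bx$ and $\bx'$ (these have the same coordinate sum), the triangle inequality followed by Cauchy--Schwarz gives
\[
\sd(\cD_{\bx},\cD_{\bx'})\le\sum_{\mathbf c\text{ non-constant}}\E_M\bigl[\,\bigl|\E[\chi_{\mathbf c}(\operatorname{sp}(Z))\mid M]\bigr|\,\bigr]\le\sum_{\mathbf c\text{ non-constant}}\sqrt{\E_M\bigl[\,\bigl|\E[\chi_{\mathbf c}(\operatorname{sp}(Z))\mid M]\bigr|^2\,\bigr]}.
\]

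Next I would compute the second moment exactly. Since $\E_M[\,|\E[\chi_{\mathbf c}(\operatorname{sp}(Z))\mid M]|^2\,]=\E[\chi_{\mathbf c}(\operatorname{sp}(Z))\,\overline{\chi_{\mathbf c}(\operatorname{sp}(Z'))}]$ for $Z,Z'$ conditionally i.i.d.\ given their common multiset, I realize $Z'$ as $Z$ with its $nm$ coordinates permuted by an independent uniformly random permutation $\pi$. Writing $\chi_{\mathbf c}(\operatorname{sp}(\cdot))$ as a product over the $nm$ coordinates and taking the expectation over the i.i.d.\ uniform $Z$ first annihilates every coordinate at which the two relevant entries of $\mathbf c$ disagree, leaving
\[
\E_M\bigl[\,\bigl|\E[\chi_{\mathbf c}(\operatorname{sp}(Z))\mid M]\bigr|^2\,\bigr]=\Pr\nolimits_\pi\bigl[\pi\text{ preserves the map }(i,j)\mapsto c_i\bigr]=\frac{\prod_{v\in\F_q}\bigl(m\cdot n_v(\mathbf c)\bigr)!}{(nm)!},
\]
where $n_v(\mathbf c)=|\{i:c_i=v\}|$. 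Thus $\sd(\cD_{\bx},\cD_{\bx'})\le\sum_{\mathbf c\text{ non-constant}}\bigl(\prod_v(m\,n_v(\mathbf c))!\,/\,(nm)!\bigr)^{1/2}$.

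It remains to bound this sum. Using $\binom{nm}{mn_{v_1},mn_{v_2},\dots}\ge\binom{n}{n_{v_1},n_{v_2},\dots}^{m}$ (partition each of the $m$ parallel copies of $[n]$ inside $[nm]$ according to the profile $(n_v)$), each summand is at most $\binom{n}{(n_v(\mathbf c))}^{-m/2}$. I would group the non-constant $\mathbf c$ by the size $\ell\ge2$ of the support $\{v:n_v(\mathbf c)>0\}$: there are at most $\binom{q}{\ell}\binom{n-1}{\ell-1}\le q^{\ell}n^{\ell-1}$ of them, and each has $\binom{n}{(n_v(\mathbf c))}\ge\frac{n!}{(n-\ell+1)!}\ge(n/e)^{\ell-1}$, so support size $\ell$ contributes at most $q\cdot\bigl(q\,n\,(e/n)^{m/2}\bigr)^{\ell-1}$. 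Summing this geometric series over $\ell\ge2$ yields $\sd(\cD_{\bx},\cD_{\bx'})\le q^{O(1)}n^{-\Omega(m)}$; both the convergence of the series and the bound $q^{O(1)}n^{-\Omega(m)}\le2^{-\sigma}$ hold once $m\ge C\bigl(1+\tfrac{\sigma+\log q}{\log n}\bigr)$ for a sufficiently large absolute constant $C$ (and for $n$ below an absolute constant, where $\log n=\Theta(1)$, the crude bound $\sd(\cD_{\bx},\cD_{\bx'})\le q^n n^{-m/2}$ already suffices over the same range of $m$).

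The hard part is this last estimate: it must hold uniformly in $q$, including $q\gg n$ where there are as many as $q^n$ non-constant characters, and it must produce the dependence $\tfrac{\log q}{\log n}$ rather than $\log q$. The mechanism is that characters with larger support, although far more numerous, contribute far less because $\binom{n}{(n_v)}$ is then large; these two effects balance so that each support size costs only a factor $q/n^{\Theta(m)}$, and summing the resulting geometric series is precisely what buys the $\Theta(\log n)$ saving over the bound $m=\Theta(\log n+\sigma+\log q)$ of~\cite{IKOS06}.
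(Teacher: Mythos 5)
Your proposal is correct in outline and takes a genuinely different route from the paper's. The paper reduces the statistical distance to that between $\encshufdist{\bx}{}$ and the central distribution $\encshufdist{a}{}$, then bounds the variance of the permutation count $Z_{\bt}$ via Chebyshev; the core combinatorics is packaged as a bound on the rank deficit of the random $2n\times mn$ matrix $\bA_{\pi,\pi'}$ (Theorem~\ref{thm:corank-bound}), obtained through a characterization in terms of matching partitions (Lemma~\ref{lem:corank-condition}). You instead work on the Fourier side of $\F_q^n$: expand the conditional distribution of block sums given the multiset in additive characters, discard the constant characters using $\sum_i x_i=\sum_i x_i'$, apply triangle and Cauchy--Schwarz, and evaluate each character's second moment exactly as the probability that a random permutation of $[nm]$ preserves the coloring $\ell\mapsto c_{\lceil\ell/m\rceil}$, a ratio of multinomial coefficients. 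The two arguments are dual to one another: a rank deficit of $\bA_{\pi,\pi'}$ is witnessed by a dependency vector $\mathbf c$ constant on blocks that match, which is precisely the level-set structure your multinomial ratio detects. Both proofs then finish with the same inequalities (your $\binom{nm}{ma_1,\dots}\geq\binom{n}{a_1,\dots}^m$ is the paper's Fact~\ref{fact:multichoose-additive} iterated, and $\binom{n}{a_1,\dots,a_\ell}\geq(n/e)^{\ell-1}$ plays the role of Fact~\ref{fact:multichoose-ineq}). One genuine benefit of your route is that additive characters exist for any finite abelian group, so your proof extends beyond prime-power $q$ — a limitation of the rank argument that the paper explicitly acknowledges and works around separately.

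There is one accounting slip to fix. The number of non-constant $\mathbf c\in\F_q^n$ whose support has size $\ell$ is not $\binom{q}{\ell}\binom{n-1}{\ell-1}$; that counts \emph{profiles} (a choice of $\ell$ values and a composition of $n$ into $\ell$ positive parts). For each profile $(a_1,\dots,a_\ell)$ there are $\binom{n}{a_1,\dots,a_\ell}$ distinct vectors $\mathbf c$, and this multiplicity must be included when summing. Fortunately it exactly cancels one power of the bound you already carry: grouping by profile gives
\[
\sum_{\mathbf c\text{ non-const}}\binom{n}{(n_v(\mathbf c))}^{-m/2}
=\sum_{\ell\geq 2}\binom{q}{\ell}\sum_{(a_1,\dots,a_\ell)}\binom{n}{a_1,\dots,a_\ell}\cdot\binom{n}{a_1,\dots,a_\ell}^{-m/2}
=\sum_{\ell\geq 2}\binom{q}{\ell}\sum_{(a_i)}\binom{n}{(a_i)}^{-(m-2)/2},
\]
and your estimate $\binom{n}{(a_i)}\geq(n/e)^{\ell-1}$ then yields $\sum_{\ell\geq2}q^\ell n^{\ell-1}(e/n)^{(\ell-1)(m-2)/2}$. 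This is your geometric series with $m$ replaced by $m-2$, so the stated conclusion $m=O\bigl(1+\tfrac{\sigma+\log q}{\log n}\bigr)$ still follows; only the implicit constant changes.
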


An interesting case to keep in mind is when the field size $q$ and the inverse statistical distance $2^{\sigma}$ are bounded by a polynomial in $n$. In this case, Theorem~\ref{th:up_bd_sec} implies that the protocol works already with a \emph{constant} number of messages, improving upon the known $O(\log n)$ bound.

\medskip

\paragraph{Lower bound.}
We show that, in terms of the number of messages $m$ sent by each party, Theorem~\ref{th:up_bd_sec} is essentially tight not only for just the split and mix protocol but also for \emph{every} one-round protocol.

\begin{theorem}[Lower bound for every one-round protocol]\label{th:message_lb}
Let $n$ and $q$ be positive integers, and $\sigma\geq 1$ be a real number.
In any $\sigma$-secure, one-round aggregation protocol over $\inputring$ in the anonymized model, each of the $n$ parties must send  
$\Omega\left(1+\frac{\sigma}{\log(\sigma n)} +\frac{\log q}{\log n} \right)$ 
messages.
\end{theorem}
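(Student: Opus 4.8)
The plan is to establish the three terms in the bound more or less separately, in each case combining two ingredients: \emph{correctness} (the analyzer must recover $\sum_i x_i\in\inputring$ from the transcript, say with probability $1-o(1)$) and \emph{$\sigma$-security} (two configurations with equal sums induce transcripts at statistical distance $\le 2^{-\sigma}$). Throughout I would work with the per-party ``bundle'' distributions: fixing the inputs of all but one party, say party $1$, let $\encdist{1}{a}$ be the distribution of the multiset of $m$ messages broadcast by party $1$ on input $a$; the transcript is then $\encdist{1}{x_1}\uplus\mathcal R$, where $\uplus$ is multiset union and $\mathcal R$ is the (fixed) distribution of the $m(n-1)$ messages of the remaining parties. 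The $\Omega(1)$ term is immediate, so the content is in the other two.

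The first structural fact I would isolate is a consequence of correctness alone: taking the other $n-1$ parties to hold input $0$, the analyzer must read $x_1$ off of $\encdist{1}{x_1}\uplus\mathcal R$, so the $q$ distributions $\{\encdist{1}{a}\}_{a\in\inputring}$ are pairwise at statistical distance $1-o(1)$ and hence have essentially disjoint supports; informally, a bundle produced on input $a$ is ``recognizably an input-$a$ bundle.'' For the $\Omega(\log q/\log n)$ term, consider the sum-preserving family $(a,-a,0,\dots,0)$ for $a\in\inputring$. By security each of these has a transcript within $2^{-\sigma}$ of the all-zero transcript, so a typical all-zero transcript $\tau$ must, for (a large fraction of) each $a$, decompose as $\tau=u\uplus v\uplus\rho$ with $u\in\operatorname{supp}(\encdist{1}{a})$, $v\in\operatorname{supp}(\encdist{2}{-a})$; since the supports $\operatorname{supp}(\encdist{1}{a})$ are near-disjoint across $a$, the $mn$-message multiset $\tau$ then contains $\Omega(q)$ distinct size-$m$ sub-multisets, forcing $\binom{mn}{m}\ge\Omega(q)$ and hence $m=\Omega(\log q/\log n)$. (Some care is needed in the small-$\sigma$ regime, where one instead uses only that the transcript must admit size-$m$ sub-multiset sums realizing all $q$ field elements, which again needs $\binom{mn}{m}\gtrsim q$.)

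The heart of the proof — and the step I expect to be the main obstacle — is the $\Omega(\sigma/\log(\sigma n))$ term, where the tiny but nonzero slack $2^{-\sigma}$ must be converted into a quantitative lower bound on $m$. The approach I would take is to argue that $2^{-\sigma}$-security forces a typical party's bundle distribution $\encdist{i}{1}$ to have min-entropy $\Omega(\sigma)$: if $\encdist{1}{1}$ had a mode $u^\star$ of probability $2^{-o(\sigma)}$, then the distinguisher ``does the transcript contain $u^\star$ as a sub-multiset?'' would separate the configuration $(1,-1,0,\dots,0)$ from the all-zero one with advantage $\gg 2^{-\sigma}$ — here using the first structural fact, so that $u^\star$ is an input-$1$ bundle and is therefore (essentially) never produced by a party holding $0$, making its spurious appearance in the all-zero transcript unlikely. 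Applying this to all parties, and reducing to the (near-)deterministic case when it fails, gives min-entropy $\Omega(\sigma)$; since a bundle of $m$ messages over an effective alphabet of size $s$ has min-entropy at most $m\log s$, one concludes $\sigma\lesssim m\log s$ with $s=\poly(\sigma n)$, i.e.\ $m=\Omega(\sigma/\log(\sigma n))$. Taking the maximum of the three bounds yields the theorem.

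The delicate parts I anticipate are: (i) the ``spurious appearance'' estimate — bounding the probability that the all-zero transcript accidentally contains a fixed input-$1$ bundle $u^\star$, which a priori the other parties' $m(n-1)$ messages could conspire to reproduce, and which seems to require either a careful choice of which sub-multiset to test for or an averaging argument over many candidate bundles; and (ii) pinning down why the relevant logarithmic factor is $\log(\sigma n)$ rather than $\log q$ or $\log n$ — i.e.\ why the ``effective'' message alphabet can be taken of size $\poly(\sigma n)$ after discarding contributions of total statistical weight $\ll 2^{-\sigma}$. These are where the bookkeeping, and the precise matching of the lower bound to Theorem~\ref{th:up_bd_sec}, will be most subtle.
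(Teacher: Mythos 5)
Your decomposition into a field-dependent term $\Omega(\log q/\log n)$ and a security-dependent term $\Omega(\sigma/\log(\sigma n))$ matches the paper's (Theorems~\ref{thm:lognq-lowerbound} and~\ref{thm:sigma-lowerbound}), and you correctly identify the key structural fact (Lemma~\ref{lem:zerr}) that correctness alone forces the $q$ encoding distributions $\enclocal{a}{\enc}$ to have pairwise \emph{exactly} disjoint supports. Your field-dependent argument --- fix a typical all-zero transcript $\tau$, use security against the inputs $(a,-a,0,\dots,0)$ together with support-disjointness to exhibit $\Omega(q)$ distinct size-$m$ sub-multisets of $\tau$, and conclude $\binom{nm}{m}\gtrsim q$ --- is a valid alternative route. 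The paper instead lower-bounds the \emph{average} statistical distance over all pairs of sum-$s$ inputs, via the count (Lemma~\ref{lemma:feasible_inputs}) that at most $n^{nm}$ of the $q^{n-1}$ inputs in $\sumset_s$ can have produced any given output $\yy$; the two counts agree up to constants. Your small-$\sigma$ caveat is moot here since $\sigma\ge 1$ makes the union bound over $a\in\inputring$ lose only a factor $1-2^{-\sigma}\ge 1/2$ in expectation.

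The genuine gap is in the security-dependent term, exactly where you flag it, and it is not mere bookkeeping. Your point (i) is fatal to the min-entropy route as stated: the distinguisher ``does the transcript contain $u^\star$ as a size-$m$ sub-multiset?'' is not controlled by the fact that no individual party holding $0$ emits $u^\star$, because $u^\star$ can be reassembled from messages contributed by several different parties, and nothing in $\sigma$-security bounds that reassembly probability. Your point (ii) is likewise a real obstruction: the message alphabet $[\ell]$ is a free parameter of the protocol, and there is no obvious reason that discarding statistical weight $\ll 2^{-\sigma}$ compresses it to $\poly(\sigma n)$, so the min-entropy ceiling $m\log s$ is not available.

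The paper's Theorem~\ref{thm:lower-bound-n-to-m} sidesteps both difficulties with a different device, which is the crucial missing idea. Rather than testing for a full $m$-message bundle, it works with $t$-message \emph{marginals} $\enclocal{x}{\enc}|_t$ of the encoder and chooses $t$ to be the \emph{smallest} integer with $\max_{x}\sd(\enclocal{0}{\enc}|_t,\enclocal{x}{\enc}|_t)\ge(10nm)^{-4(m-t)}$ (such a $t$ exists since the full $m$-marginals have disjoint supports), taking $x^*$ to be the maximizer and $\bx=(x^*,\dots,x^*,-(n-1)x^*)$. The distinguisher checks whether $t$ random coordinates of the shuffled transcript land in the likelihood-ratio set $H$ for that marginal. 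When those $t$ coordinates all come from a single non-last party (probability $\approx n\binom{m}{t}/\binom{nm}{t}$), the acceptance probabilities differ by $\ge(10nm)^{-4(m-t)}$; when they straddle several parties, each party contributes only a $t'$-marginal with $t'<t$, and \emph{minimality} of $t$ forces each such deviation below $(10nm)^{-4(m-t')}$, so a geometric sum shows the cross-terms cannot cancel the single-party gain. This yields $\sd(\encshufdist{\bzero}{\enc},\encshufdist{\bx}{\enc})\ge(10nm)^{-5m}$ with no appeal to min-entropy or effective alphabet, and hence $m=\Omega(\sigma/\log(\sigma n))$. The minimal-$t$ trick and the maximizing choice of $x^*$ are the two ideas you would need to supply to close your argument.
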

The lower bound holds regardless of the message size and asymptotically matches the upper bound under the very mild assumption that $\sigma$ is bounded by a polynomial in $n$.
%The requirement $\sigma \leq n^{1000}$ may be replaced by $\sigma \leq n^c$ for any constant $c$. This is a very mild condition; moreover, 
Furthermore, when $\sigma$ 
is larger,
% exceeds this threshold
%, we still have a lower bound of the form $\Omega\left(\frac{\sigma}{\log \sigma} + \frac{\log q}{\log n}\right)$, which is 
the bound is
tight up to a factor $O\left(\frac{\log \sigma}{\log n}\right)$.

We point out that Theorem~\ref{th:message_lb} provides a nearly-tight lower bound on the \emph{number of messages}. In terms of the total communication per party, improvements are still possible when $\sigma + \log q = \omega(\log n)$. We discuss this further, along with other interesting open questions, in Section~\ref{sec:conclusion}.

% \medskip

\paragraph{Corollary for Differentially Private Aggregation.}

As stated earlier, the differentially private aggregation protocols of~\cite{BBGN19,GPV19} both use extensions of the split and mix protocol. Moreover, Balle et al.~use the security guarantee of the split and mix protocol as a blackbox and derive a differential privacy guarantee from it~\cite[Lemma~4.1]{BBGN19}.
Specifically, when~$\varepsilon$ is a constant and $\delta \geq \frac{1}{\poly(n)}$, their proof uses the split and mix protocol with field size $q = \poly(n)$. 
Previous analyses required $m = \Omega(\log n)$; however, our analysis works with a \emph{constant} number of messages. 
In general, Theorem~\ref{th:up_bd_sec} implies $(\varepsilon, \delta)$-differential privacy with a factor $\Theta(\log n)$ fewer messages than known before:

\begin{corollary}[Differentially private aggregation in the shuffled model]\label{cor:up_bd_DP}
Let $n$ be a positive integer, and let $\epsilon$, $\delta$ be positive real numbers.
%and $\delta \in [0,1]$. Definitely delta>0, but why require delta <= 1?
There is an $(\epsilon, \delta)$-differentially private aggregation protocol in the shuffled model for inputs in $[0,1]$ having absolute error 
%$O_{\epsilon}(1)$. 
$O(1 + 1/\varepsilon)$ in expectation, using $O\left(1 + \frac{\epsilon + \log(1/\delta)}{\log n}\right)$ messages per party, each consisting of $O(\log n)$ bits.
\end{corollary}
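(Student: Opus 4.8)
The plan is to derive Corollary~\ref{cor:up_bd_DP} by plugging the improved security bound of Theorem~\ref{th:up_bd_sec} into the existing reduction of Balle et al.~\cite[Lemma~4.1]{BBGN19}, which treats the split-and-mix protocol as a black box that outputs a $\sigma$-secure aggregation over $\inputring$. First I would recall the structure of that reduction: to aggregate real numbers in $[0,1]$ with $(\varepsilon,\delta)$-differential privacy, each party discretizes its input to a multiple of $1/q$ for a suitably chosen $q = \poly(n)$, adds a share of discrete noise (a sum of appropriately parametrized discrete random variables that, when aggregated, approximates the continuous Laplace or discrete geometric mechanism up to the rounding error), and then feeds the resulting field element into the split-and-mix protocol. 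The differential privacy of the overall protocol then follows from two facts: (i) the aggregated noise provides $(\varepsilon,\delta/2)$-differential privacy for the \emph{sum} that the analyzer learns, and (ii) the split-and-mix protocol is $\sigma$-secure with $2^{-\sigma} \le \delta/2$, so by the triangle inequality for statistical distance the transcript reveals nothing beyond the sum up to total variation $\delta/2$.

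Given that skeleton, the only quantitative change is in item (ii): instead of invoking the bound $m = \Theta(\log n + \sigma + \log q)$ from~\cite{IKOS06}, I would invoke Theorem~\ref{th:up_bd_sec}, which gives $\sigma$-security with $m = O\!\left(1 + \frac{\sigma + \log q}{\log n}\right)$. Setting $\sigma = \log(2/\delta) = \Theta(\log(1/\delta))$ and $q = \poly(n)$, so that $\log q = \Theta(\log n)$, this yields $m = O\!\left(1 + \frac{\varepsilon + \log(1/\delta)}{\log n}\right)$ messages per party, each an element of $\inputring$, i.e.\ $O(\log q) = O(\log n)$ bits. (The $\varepsilon$ term in the numerator is absorbed into the bound because the discretization level $q$ may need to scale like $\poly(n)/\varepsilon$ or the added noise magnitude scales with $1/\varepsilon$; in either case $\log q = O(\log n + \log(1/\varepsilon))$, and one checks this is dominated by the stated expression after the $O(1+\cdot)$ rounding, or one simply keeps $\varepsilon$ explicit as in the statement.) The error bound $O(1 + 1/\varepsilon)$ in expectation is exactly the accuracy guarantee of the noise mechanism in~\cite{BBGN19} and is unaffected by our change, since we have not altered the discretization or the noise distribution — only the number of messages used to securely transmit each share.

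The main obstacle — really the only point requiring care rather than bookkeeping — is verifying that the Balle et al.\ reduction is genuinely \emph{black-box} in the security parameter, so that swapping in our smaller $m$ does not disturb any other part of their analysis. Concretely I would check that their Lemma~4.1 and the surrounding argument depend on the split-and-mix protocol only through the statement ``the transcript is within statistical distance $2^{-\sigma}$ of a transcript depending on the inputs only through their sum,'' and not, say, through the specific distribution of messages or the value of $m$ entering the accuracy analysis. Inspecting their proof, this is the case: the number of messages $m$ plays no role in their privacy or accuracy accounting beyond the security guarantee it buys, and the per-message bit length is $\log q = O(\log n)$ regardless. Hence the corollary follows immediately, and the remainder is the routine substitution of parameters described above.
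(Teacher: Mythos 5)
Your overall approach matches the paper's: both invoke Lemma~4.1 of Balle et al.~\cite{BBGN19} as a black box and substitute the improved message bound of Theorem~\ref{th:up_bd_sec}. However, there is a genuine error in how you account for the $\varepsilon$ term in the message count.

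You set $\sigma = \log(2/\delta)$ and then try to explain the $\varepsilon/\log n$ contribution by speculating that $q$ scales like $\poly(n)/\varepsilon$. This is wrong on two counts. First, in the Balle et al.\ reduction $q$ is taken to be $O(n^{3/2})$ and does \emph{not} depend on $\varepsilon$; indeed, if $\log q$ grew with $\log(1/\varepsilon)$, the claimed per-message size of $O(\log n)$ bits would fail, so your explanation is self-contradictory. Second, the actual source of the $\varepsilon$ is the privacy parameter that Lemma~4.1 of~\cite{BBGN19} delivers: a $\sigma$-secure protocol yields an $\left(\varepsilon,\, (1+e^{\varepsilon})\,2^{-\sigma-1}\right)$-DP protocol, so to obtain $(\varepsilon,\delta)$-DP one must choose $\sigma \geq \log\!\left(\frac{1+e^{\varepsilon}}{\delta}\right) - 1 = \Theta(1 + \varepsilon + \log(1/\delta))$. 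Your choice $\sigma = \log(2/\delta)$ only gives $(\varepsilon,\delta)$-DP when $1 + e^{\varepsilon} \leq 4$, i.e.\ for $\varepsilon$ bounded by a constant, and breaks for larger $\varepsilon$. Once $\sigma$ is chosen correctly, the message count $m = O\!\left(1 + \frac{\sigma + \log q}{\log n}\right) = O\!\left(1 + \frac{\varepsilon + \log(1/\delta)}{\log n}\right)$ and the $O(\log n)$-bit messages follow immediately with $q = O(n^{3/2})$.

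A smaller point you omit: the paper's security analysis (Theorem~\ref{th:up_bd_sec}) works over a field $\F_q$ and hence needs $q$ to be a prime power, whereas Balle et al.\ use $q = 2\lceil n^{3/2}\rceil$. The paper handles this by taking $q$ to be the smallest prime larger than $2n^{3/2}$, which is still $O(n^{3/2})$ and leaves the rest of the reduction unchanged. This is a technicality but worth noting if you intend to claim the reduction applies verbatim.
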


A more comprehensive comparison between our differentially private aggregation protocol in Corollary~\ref{cor:up_bd_DP} and previous protocols is presented in Figure~\ref{fig:comparison}.

% Under the mild condition that $\varepsilon > 2\ln(n)/n$ the expected error is constant.

We end this subsection by remarking that Ishai et al.~\cite{IKOS06} in fact considered a setting that is more general than what we have described so far. Specifically, they allow the adversary to corrupt a certain number of parties. 
In addition to the transcript of the protocol, the adversary knows the input and messages of these corrupted parties. (Alternatively, one can think of these corrupted parties as if they are colluding to learn the information about the remaining parties.) 
As already observed in~\cite{IKOS06}, the security of the split and mix protocol still holds in this setting except that $n$ is now the number of honest (i.e., uncorrupted) parties. 
In other words, Theorem~\ref{th:up_bd_sec} remains true in this more general setup but with $n$ being the number of honest parties instead of the total number of parties.

\paragraph{Discussion and comparison of parallel work.}
Concurrently and independently of our work, Balle et al.~\cite{balle_privacy_2019constantIKOS} obtained an upper bound that is asymptotically the same as the one in Theorem~\ref{th:up_bd_sec}. They also give explicit constants, whereas we state our theorem in asymptotic notation and do not attempt to optimize the constants in our proof.

A key difference between our work and theirs is that in addition to the analysis of the split and mix protocol, we manage to prove a matching lower bound on the required number of messages for any protocol (see Theorem~\ref{th:message_lb}), which establishes the near-tightness of the algorithmic guarantees in our upper bound. Our lower bound approach could potentially be applied to other problems pertaining to the anonymous model and possibly differential privacy.

The upper bound proofs use different techniques. Balle et al.~reduce the question to an analysis of the number of connected components of a certain random graph, while our proof analyzes the rank deficiency of a carefully-constructed random matrix. While the upper bound of Balle et al.~is shown for summation over any abelian group, our proofs are presented for finite fields. We note, though, that our lower bound proof carries over verbatim to any abelian group.

\subsection{Applications and Related Work}

At first glance it may seem that aggregation is a rather limited primitive for combining data from many sources in order to analyze it.
However, in important approaches to machine learning and distributed/parallel data processing, the mechanism for combining computations of different parties is \emph{aggregation of vectors}.
Since we can build vector aggregation in a straightforward way from scalar aggregation, our results can be applied in these settings.

Before discussing this in more detail, we mention that it is shown in~\cite{IKOS06} that summation protocols can be used as building blocks for realizing \emph{general} secure computations in a specific setup where a server mediates computation of a function on data held by $n$ other parties.
However, the result assumes a somewhat weak security model (see in Appendix D of \cite{IKOS06} for more details).

\paragraph{Machine Learning.}
Secure aggregation has applications in so-called \emph{federated} machine learning \cite{konevcny2016federated}.
The idea is to train a machine learning model without collecting data from any party, and instead compute weight updates in a distributed manner by sending model parameters to all parties, locally running stochastic gradient descent on private data, and aggregating model updates over all parties. 
For learning algorithms based on gradient descent, a secure aggregation primitive can be used to compute global weight updates without compromising privacy~\cite{mcmahan2016communication,GoogleBlog17}.
%A federated learning system based on these principles is currently used by Google to train neural networks on data residing on parties' phones~\cite{GoogleBlog17}.
It is known that gradient descent can work well even if data is accessible only in noised form, in order to achieve differential privacy~\cite{abadi2016deep}.
%Note that in order to run gradient descent in a differentially private manner, privacy parameters need to be chosen in such a way that the combined privacy loss over many iterations is limited.

Beyond gradient descent, as observed in~\cite{cheu19}, we can translate any \emph{statistical query} over a distributed data set to an aggregation problem over numbers in $[0,1]$.
That is, every learning problem solvable using a small number of statistical queries~\cite{kearns1998efficient} can be solved privately and efficiently based on secure aggregation.

\paragraph{Sketching.}
Research in the area of data stream algorithms has uncovered many non-trivial algorithms that are compact \emph{linear sketches}, see, e.g.,~\cite{cormode2011synopses,woodruff2014sketching}.
As noted already in~\cite{IKOS06}, linear sketches can be implemented using secure aggregation by computing linear sketches locally, and then using aggregation to compute their sum which yields the sketch of the whole dataset.
Typically, linear sketches do not reveal much information information about their input, and are robust to the noise needed to ensure differential privacy, though specific guarantees depend on the sketch in question.
We refer to~\cite{corr/abs-1204-2606,MelisDC16,mishra2006privacy} for examples and further discussion.

%This unlocks many differentially private protocols in the shuffled model, e.g.~estimation of $\ell_p$-norms, quantiles, heavy hitters, and number of distinct elements.

\paragraph{Secure aggregation protocols.}
Secure aggregation protocols are well-studied, both under cryptographic assumptions and with respect to differential privacy.
We refer to the survey of Goryczka et al.~\cite{secAggSurvey} for an overview, but note that our approach leads to protocols that use less communication than existing (multi-round) protocols.
The trust assumptions needed for implementing a shuffler (e.g., using a mixnet) are, however, slightly different from the assumptions typically used for secure aggregation protocols.
Practical secure aggregation typically relies on an honest-but-curious assumption, see e.g.~\cite{practicalSecAgg}. In that setting, such protocols typically require five rounds of communication with $\Omega(n)$ bits of communication and $\Omega(n^2)$ computation per party. A more recent work \cite{reyzin2018turning} using homomorphic threshold encryption gives a protocol with three messages and constant communication and computation per party in addition to a (reusable) two-message setup (consisting of $\Omega(n)$ communication per party). By contrast, our aggregation protocol has a single round of constant communication and computation per party, albeit in the presence of a trusted shuffler.
%Different variations of secure aggregation protocols have also been suggested and compared in~\cite{practicalSecAgg}. 

\paragraph{Other related models.}

A very recent work \cite{wang2019practical} has designed an extension of the shuffled model, called \emph{Multi Uniform Random Shufflers} and analyzed its trust model and privacy-utility tradeoffs. Since they consider a more general model, our differentially private aggregation protocol would hold in their setup as well.

There has also been work on aggregation protocols in the multiple servers setting, e.g., the PRIO system~\cite{corrigan2017prio}; here the protocol is secure as long as at least one server is honest.
Thus trust assumptions of PRIO are somewhat different from those underlying shuffling and mixnets.
While each party would be able to check the output of a shuffler, to see if its message is present, such a check is not possible in the PRIO protocol making server manipulation invisible even if the number of parties is known.
On the other hand, PRIO handles malicious parties that try to manipulate the result of a summation by submitting illegal data --- a challenge that has not been addressed yet for summation in the Shuffled model but that would be interesting future work.

\begin{figure}[t]
    \centering
    % \bgroup
    \def\arraystretch{1.5}
    \begin{tabular}{|l|c|c|c|}
        \hline
        {\bf Reference} & 
        {\bf \#messages / $n$} & 
        {\bf Message size} & 
        {\bf Expected error}\\
        \hline
        \hline
        Cheu et al.~\cite{cheu19} & 
        \begin{tabular}{@{}c@{}} $\varepsilon\sqrt{n}$\\ $\ell$\end{tabular} &
        1 & 
        \begin{tabular}{@{}c@{}} $\frac{1}{\varepsilon} \log\frac{n}{\delta}$\\ $\sqrt{n} / \ell + \frac{1}{\varepsilon} \log\frac{1}{\delta}$ \end{tabular}\\
        \hline
        Balle et al.~\cite{balle_privacy_2019} & $1$ & $\log n$ & $\frac{n^{1/6}\log^{1/3}(1/\delta)}{\varepsilon^{2/3}}$\\
        \hline
        Ghazi et al~\cite{GPV19} &
        $\log(\tfrac{n}{\varepsilon \delta})$ & $\log(\tfrac{n}{\delta})$ & $\frac{1}{\varepsilon} \sqrt{\log\frac{1}{\delta}}$\\ 
        \hline
        Balle et al.~\cite{BBGN19} &
        $\log(\tfrac{n}{\delta})$ & $\log n$ & $\frac{1}{\varepsilon}$\\         
        \hline 
        \emph{This work (Corollary~\ref{cor:up_bd_DP})} & $1 + \frac{\log(1/\delta)}{\log n}$ & $\log n$ & $\frac{1}{\varepsilon}$ \\
        \hline
    \end{tabular}
    \caption{Comparison of differentially private aggregation protocols in the shuffled model with $(\varepsilon,\delta)$-differential privacy. 
    The number of parties is $n$, and $\ell$ is an integer parameter.
    Message sizes are in bits. For readability, we assume that $\varepsilon \leq O(1)$, and asymptotic notations are suppressed.
    }
    \label{fig:comparison}
\end{figure}

\subsection{The Split and Mix Protocol}

The protocol of~\cite{IKOS06} is shown in Algorithm~\ref{alg:ishai_et_al}.
To describe the main guarantee proved in~\cite{IKOS06} regarding Algorithm~\ref{alg:ishai_et_al}, we need some notation. 
For any input sequence $\bx \in \mathbb{F}_q^n$, we denote by $\encshufdist{\bx}{}$ the distribution on $\F_q^{mn}$ obtained by sampling $y_{m(i - 1) + 1}, \dots, y_{mi} \in \F_q$ uniformly at random conditioned on $y_{m(i - 1) + 1} + \cdots + y_{mi} = x_i$, sampling a random permutation $\pi: [mn] \to [mn]$, and outputting $(y_{\pi(1)}, \dots, y_{\pi(mn)})$. 
Ishai et al.~\cite{IKOS06} proved that for some $m = O(\log{n} + \sigma + \log{q})$ and for any two input sequences $\bx, \bx' \in \mathbb{F}_q^n$ having the same sum (in $\mathbb{F}_q$), the distributions $\encshufdist{\bx}{}$ and $\encshufdist{\bx'}{}$ are $2^{-\sigma}$-close in statistical distance.

\begin{algorithm}[h]
\KwIn{$x \in \mathbb{F}_q$, positive integer parameter $m$}
\KwOut{Multiset $\{y_1,\dots, y_m\} \subseteq \mathbb{F}_q$} \vspace{5pt}

\For{$j = 1,\dots,m-1$}{
	${y}_j \leftarrow \uniform(\mathbb{F}_q)$
}
${y}_m \leftarrow x - \sum_{j=1}^{m-1} {y}_j$ (in $\mathbb{F}_q$)\\
\Return{$\{{y}_1,\dots, {y}_m\}$}
\caption{Split and mix encoder from~\cite{IKOS06}}%Encoder of Ishai, Kushilevitz, Ostrovsky and Sahai \cite{IKOS06}}
\label{alg:ishai_et_al}
\end{algorithm}

\subsection{Overview of Proofs}
\label{subsec:overview}
We now give a short overview of the proofs of Theorems~\ref{th:up_bd_sec} and \ref{th:message_lb}. For ease of notation, we define $\sumset_s$ to be the set of all input vectors $\bx = (x_1, x_2, \dots, x_n) \in \inputring^n$ with a fixed sum $x_1 + x_2 + \cdots + x_n = s$.

\paragraph{Upper Bound.}
To describe the main idea behind our upper bound, we start with the following notation. For every $x \in \mathbb{F}_q$, we denote by $\encshufdist{x}{}$ the uniform distribution on $\F^{mn}_q$ conditioned on all coordinates summing to $x$.

To prove Theorem~\ref{th:up_bd_sec}, we have to show that for any two input sequences $\bx, \bx' \in \F_q^n$ such that $\sum_{i \in [n]} x_i = \sum_{i \in [n]} x'_i$, the statistical distance between $\encshufdist{\bx}{}$ and $\encshufdist{\bx'}{}$ is at most $\gamma = 2^{-\sigma}$. By the triangle inequality, it suffices to show that the statistical distance between $\encshufdist{\bx}{}$ and $\encshufdist{x_1 + \cdots + x_n}{}$ is at most $\gamma/2$. (Theorem~\ref{thm:main-central}). Note that $\encshufdist{x_1 + \cdots + x_n}{}$ puts equal mass on all vectors in $\F^{mn}_q$ whose sum is equal to $x_1 + \cdots + x_n$. Thus, our task boils down to showing that the mass put by $\encshufdist{\bx}{}$ on a random sample from $\encshufdist{x_1 + \cdots + x_n}{}$ is well-concentrated. We prove this via a second order method (specifically, Chebyshev's inequality). This amounts to computing the mean and bounding the variance. The former is a simple calculation whereas the latter is more technically involved and reduces to proving a probabilistic bound on the rank deficit of a certain random matrix (Theorem~\ref{thm:corank-bound}). A main ingredient in the proof of this bound is a combinatorial characterization of the rank deficit of the relevant matrices in terms of \emph{matching partitions} (Lemma~\ref{lem:corank-condition}).

\paragraph{Lower Bound.}
For the lower bound (Theorem~\ref{th:message_lb}), our proof consists of two parts: a ``security-dependent'' lower bound $m \geq \Omega\left(\frac{\sigma}{\log(\sigma n)}\right)$ and a ``field-dependent'' lower bound $m \geq \Omega\left(\frac{\log q}{\log n}\right)$. Combining these two yields Theorem~\ref{th:message_lb}. We start by outlining the field-dependent bound as it is simpler before we outline the security-dependent lower bound which is technically more challenging.

\paragraph{Field-Dependent Lower Bound.}

To prove the field-dependent lower bound (formally stated in Theorem~\ref{thm:lognq-lowerbound}), the key idea is to show that for any $s\in\inputring$, there exist distinct inputs $\bx, \bx' \in\sumset_s$ such that the statistical distance between $\encshufdist{\bx}{}$ and $\encshufdist{\bx'}{}$ is at least $1- n^{nm}/q^{n-1}$ (see Lemma~\ref{lem:nqtech}). We do so by proving the same quantitative lower bound on the \emph{average} statistical distance between $\encshufdist{\bx}{}$ and $\encshufdist{\bx'}{}$ over all pairs $\bx, \bx' \in \sumset_s$.

The average statistical distance described above can be written as the sum, over all $\bf{y}$, of the average difference in probability mass assigned to $\bf{y}$ by $\bx$ and $\bx'$. Thus, we consider how to lower bound this coordinate-wise probability mass difference for an arbitrary $\bf{y}$.

There are at most $n^{nm}$ ways to associate each of the $nm$ elements of $\by$ with a particular party. Since any individual party's encoding uniquely determines the corresponding input, it follows that any shuffled output $\by$ could have arisen from at most $n^{nm}$ inputs $\bx$. Moreover, since there are exactly $q^{n-1}$ input vectors $\bx\in\sumset_s$, it follows that there are at least $q^{n-1} - n^{nm}$ possible inputs $\bx\in\sumset_s$ that cannot possibly result in $\by$ as an output. This implies that the average coordinate-wise probability mass difference, over all $\bx, \bx' \in \sumset_s$, is at least $\left(1 - \frac{q^{n-1}}{n^{nm}}\right)$ times the average probability mass assigned to $\by$ over all inputs in $\sumset_s$. Summing this up over all $\by$ yields the desired bound.

\paragraph{Security-Dependent Lower Bound.}

To prove the security-dependent lower bound, it suffices to prove the following statement (see Theorem~\ref{thm:lower-bound-n-to-m}): if $\enc$ is the encoder of any aggregation protocol in the anonymized model for $n > 2$ parties with $m$ messages sent per party, then there is a vector $\bx \in \sumset_0$ such that the statistical distance between the distributions of the shuffled output $\by$ corresponding to inputs $\bzero$ and $\bx$ is at least $\frac{1}{(10nm)^{5m}}$. %It is simple to see that this implies that $m \geq \Omega(\frac{\sigma}{\log(\sigma n)})$ as desired.

Let us first sketch a proof for the particular case of the split and mix protocol. In this case, we set $\bx = (\underbrace{1,1, \dots, 1}_{n-1}, -(n - 1))$, and we will bound from below the statistical distance by considering the ``distinguisher'' $\cA$ which chooses a random permutation $\pi: [nm]\to[nm]$ and accepts iff $y_{\pi(1)} + \cdots + y_{\pi(m)} = 0$. We can argue (see Subsection~\ref{subsec:sec_dep_lb}) that the probability that $\cA$ accepts under the distribution $\encshufdist{\bzero}{}$ is larger by an additive factor of $\frac{1}{(en)^m}$ than the probability that it accepts under the distribution $\encshufdist{\bx}{}$. To generalize this idea to arbitrary encoders (beyond Ishai et al.'s protocol), it is natural to consider a distinguisher which accepts iff $y_{\pi(1)}, \dots, y_{\pi(m)}$ is a valid output of the encoder when the input is zero. Unlike the case of Ishai et al., in general when $\pi(1), \dots, \pi(m)$ do not all come from the same party, it is not necessarily true that the acceptance probability would be the same for both distributions. To circumvent this, we pick the smallest integer $t$ such that the $t$-message marginal of the encoding of 0 and that of input 1 are substantially different, and we let the distinguisher perform an analogous check on $y_{\pi(1)}, \dots, y_{\pi(t)}$ (instead of $y_{\pi(1)}, \dots, y_{\pi(m)}$ as before). Another complication that we have to deal with is that we can no longer consider the input vector $(1, \cdots, 1, -(n - 1))$ as in the lower bound for Ishai et al.'s protocol sketched above. This is because the $t$-message marginal of the encoding of $-(n - 1)$ could deviate from that for input $0$ more substantially than from that for input 1, which could significantly affect the acceptance probability. Hence, to overcome this issue, we instead set $x^*$ to the minimizer of this value $t$ among all elements of $\F_q$, and use the input vector $\bx = (x^*, \dots, x^*,- (n - 1)x^*)$ (for more details we refer the reader to the full proof in Subsection~\ref{subsec:sec_dep_lb}).

\subsection*{Organization of the Rest of the Paper}
We start with some preliminaries in Section~\ref{sec:prelim}. We prove our main upper bound (Theorem~\ref{th:up_bd_sec}) in Section~\ref{sec:ub_pf}. We prove our lower bound (Theorem~\ref{th:message_lb}) in Section~\ref{sec:lb_pf}. The proof of Corollary~\ref{cor:up_bd_DP} appears in Appendix~\ref{sec:cor_pf}.

\section{Preliminaries}\label{sec:prelim}
\subsection{Protocols}
In this paper, we are concerned with answering the question of how many messages are needed for protocols to achieve certain security or cryptographic guarantees. We formally define the notion of protocols in the models of interest to us.

We first define the notion of a \emph{secure protocol} in the \emph{shuffled model}. An $n$-user \emph{secure protocol} in the \emph{shuffled model}, $\cP = (\enc, \analyzer)$, consists of a randomized \emph{encoder} (also known as \emph{local randomizer}) $\enc: \cX \to \cY^m$ and an \emph{analyzer} $\analyzer: \cY^{nm} \to \cZ$. Here, $\cY$ is known as the \emph{message alphabet}, $\cY^m$ is the \emph{message space} for each user, and $\cZ$ is the \emph{output space} of the protocol. The protocol $\cP$ implements the following mechanism: each party $i$ holds an input $x_i\in\cX$ and encodes $x_i$ as $\enc_{x_i}$. (Note that $\enc_{x_i}$ is possibly random based on the private randomness of party $i$.) The concatenation of the encodings, $\by = (\enc_{x_1}, \enc_{x_2}, \dots, \enc_{x_n}) \in \cY^{nm}$ is then passed to a trusted \emph{shuffler}, who chooses a uniformly random permutation $\pi$ on $nm$ elements and applies $\pi$ to $\by$. The output is submitted to the analyzer, which then outputs $\cP(\bx) = \analyzer(\pi(\by)) \in \cZ$.

In this paper, we will be concerned with protocols for \emph{aggregation}, in which $\cX = \cZ = \inputring$ (a finite field on $q$ elements) and $\cY = [\ell] = \{1,2,\dots, \ell\}$, and
\[
\analyzer(\pi(\enc_{x_1}, \enc_{x_2}, \dots, \enc_{x_n})) = \sum_{i=1}^n x_i,
\]
i.e., the protocol always outputs the sum of the parties' inputs, regardless of the randomness over the encoder and the shuffler.

A related notion that we consider in this work is a one-round protocol $\cP = (\enc, \analyzer)$ in the \emph{anonymized model}. The notion is similar to that of a secure protocol in the shuffled model except that there is no shuffler. Rather, the analyzer $\analyzer$ receives a \emph{multiset} of $nm$ messages obtained by enumerating all $m$ messages of each of the $n$ parties' encodings. It is straightforward to see that the two models are equivalent, in the sense that a protocol in one model works in the other and the distributions of the view of the analyzer are the same.

\subsection{Distributions Related to a Protocol}

%We now define some notation for natural probability distributions resulting from the application of a protocol for aggregation.

To study a protocol and determine its security and privacy, it is convenient to define notations for several probability distributions related to the protocol. First, we use $\enclocal{x}{\enc}$ to denote the distribution of the (random) encoding of $x$:

\begin{definition}
For a protocol $\cP$ with encoding function $\enc$, we let $\enclocal{x}{\enc}$ denote the distribution of outputs over $\cY^m$ obtained by applying $\enc$ to $x\in\cX$.
\end{definition}

Furthermore, for a vector $\bx \in \cX^n$, we use $\encdist{\bx}{\enc}$ to denote the distribution of the concatenation of encodings of $x_1, \dots, x_n$, as stated more formally below.

\begin{definition}
For an $n$-party protocol $\cP$ with encoding function $\enc$ and $\bx\in\cX^n$, we let $\encdist{\bx}{\enc}$ denote the distribution over $\cY^{nm}$ obtained by applying $\enc$ individually to each element of $\bx$, i.e.,
\[
 \encdist{\bx}{\enc} \sim \left(\enclocal{x_1}{\enc}, \enclocal{x_2}{\enc}, \dots, \enclocal{x_n}{\enc}\right).
\]
\end{definition}

Finally, we define $\encshufdist{\bx}{\enc}$ to be $\encdist{\bx}{\enc}$ after random shuffling. Notice that $\encshufdist{\bx}{\enc}$ is the distribution of the transcript seen at the analyzer.

\begin{definition}
For an $n$-party protocol $\cP$ with encoding function $\enc$ and $\bx\in\cX^n$, we let $\encshufdist{\bx}{\enc}$ denote the distribution over $\cY^{nm}$ obtained by applying $\enc$ to the elements of $\bx$ and then shuffling the resulting $nm$-tuple, i.e.,
\[
 \encshufdist{\bx}{\enc} \sim \pi \circ \encdist{\bx}{\enc}
\]
for $\pi$ a uniformly random permuation over $nm$ elements.
\end{definition}

\subsection{Security and Privacy}
Given two distributions $\cD_1$ and $\cD_2$, we let $\sd(\cD_1, \cD_2)$ denote the \emph{statistical distance} (aka the total variation distance) between $\cD_1$ and $\cD_2$.

We begin with a notion of $\sigma$-security for computation of a function $f$, which essentially says that distinct inputs with a common function value should be (almost) indistinguishable:
\begin{definition}[$\sigma$-security]
An $n$-user one-round protocol $\cP = (\enc, \cA)$ in the anonymized model is said to be $\sigma$-secure for computing a function $f:\cX^n \to \cZ$ if for any $\bx, \bx' \in \cX^n$ such that $f(\bx)=f(\bx')$, we have
\[
  \sd\left(\encshufdist{\bx}{\enc}, \encshufdist{\bx'}{\enc}\right) \leq 2^{-\sigma}.
\]
\end{definition}
In this paper, we will primarily be concerned with the function that sums the inputs of each party, i.e., $f:\inputring^n \to \inputring$ given by $f(x_1, x_2, \dots, x_n) = \sum_{i=1}^n x_i$.

We now define the notion of \emph{$(\varepsilon,\delta)$-differential privacy}. We say that two input vectors $\bx = (x_1,x_2, \dots, x_n) \in \cX^n$ and $\bx' = (x_1', x_2', \dots, x_n') \in \cX^n$ are \emph{neighboring} if they differ on at most one party's data, i.e., $x_i = x_i'$ for all but one value of $i$.
\begin{definition}[$(\varepsilon, \delta)$-differential privacy]
 An algorithm $M: \cX^* \to \cZ$ is $(\varepsilon,\delta)$-differentially private if for every neighboring input vectors $\bx, \bx' \in \cX^n$ and every $S \subseteq \cZ$, we have
 \[
   \Pr[M(\bx)\in S] \leq e^\varepsilon \cdot \Pr[M(\bx')\in S] + \delta,
 \]
 where probability is over the randomness of $M$.
\end{definition}

We now define $(\epsilon,\delta)$-differential privacy specifically in the \emph{shuffled model}.
\begin{definition}
 A protocol $\cP$ with encoder $\enc: \cX\to\cZ^m$ is $(\varepsilon, \delta)$-differentially private in the shuffled model if the algorithm $M: \cX^n \to \cZ^{nm}$ given by
 \[
   M(x_1, x_2, \dots, x_n) = \pi(\enc_{x_1}, \enc_{x_2}, \dots, \enc_{x_n})
 \]
 is $(\epsilon,\delta)$-differentially private, where $\pi$ is a uniformly random permutation on $nm$ elements.
\end{definition}

\section{Proof of Theorem~\ref{th:up_bd_sec}}\label{sec:ub_pf}

In this section, we prove Theorem~\ref{th:up_bd_sec}, i.e., that the split and mix protocol of Ishai et al. is $\sigma$-secure even for $m = \Theta\left(1 + \frac{\sigma + \log q}{\log n}\right)$ messages, improving upon the known bounds of $O(\log{n} + \sigma + \log{q})$~\cite{IKOS06,BBGN19,GPV19}.

Since we only consider Ishai et al.'s split and mix protocol in this section, we will drop the superscript from $\encshufdist{\bx}{\enc}$ and simply write $\encshufdist{\bx}{}$ to refer to the shuffled output distribution of the protocol. Recall that, by the definition of the protocol, $\encshufdist{\bx}{}$ is generated as follows: for every $i \in [n]$, sample $y_{m(i - 1) + 1}, \dots, y_{mi} \in \F_q$ uniformly at random conditioned on $y_{m(i - 1) + 1} + \cdots + y_{mi} = x_i$. Then, pick a random permutation $\pi: [mn] \to [mn]$ and output $(y_{\pi(1)}, \dots, y_{\pi(mn)})$.

Showing that the protocol is $\sigma$-secure is by definition equivalent to showing that $\sd(\encshufdist{\bx}{}, \encshufdist{\bx'}{}) \leq 2^{-\sigma}$ for all inputs $\bx, \bx' \in \F_q^n$ such that $\sum_{i \in [n]} x_i = \sum_{i \in [n]} x_i$.

%\begin{theorem} \label{thm:main-pairwise}
%Let $n, m, q$ be positive integers, such that $q$ is a prime power. For every $\bx = (x_1, \dots, x_n) \in \F_q^n$, let $\encshufdist{\bx}{}$ denote the distribution on $\F_q^{mn}$ generated as follows: for every $i \in [n]$, sample $y_{m(i - 1) + 1}, \dots, y_{mi} \in \F_q$ uniformly at random conditioned on $y_{m(i - 1) + 1} + \cdots + y_{mi} = x_i$. Then, pick a random permutation $\pi: [mn] \to [mn]$ and output $(y_{\pi(1)}, \dots, y_{\pi(mn)})$. For any parameter $\gamma > 0$ and any $m \geq \Theta(1+\log_n(q/\gamma))$, the following holds: for every $\bx, \bx' \in \F_q^n$ such that $\sum_{i \in [n]} x_i = \sum_{i \in [n]} x'_i$, the statistical distance between $\encshufdist{\bx}{}$ and $\encshufdist{\bx'}{}$ is at most $\gamma$. 
%\end{theorem}

%Specifically, the above result gives a quantitative improvement over previous results from~\cite{IKOS06,GPV19,BBGN19}, which requires $m$ to be at least $\Theta(\log(q/\gamma))$, instead of $\Theta(\log_n(q/\gamma))$ as in our theorem. In particular, when $q \geq n^{\Omega(1)}$ or $\gamma \leq 1/n^{\Omega(1)}$, our bound yields a saving of order $\log n$ in terms of the number of messages $m$.

In fact, we prove a stronger statement, that each $\encshufdist{\bx}{}$ is $\gamma$-close (in statistical distance) to the distribution that is uniform over all vectors in $\F_q^{m n}$ whose sum of all coordinates is equal to $\sum_{i \in [n]} x_i$, as stated below.

\begin{theorem} \label{thm:main-central}
For every $a \in \F_q$, let $\encshufdist{a}{}$ denote the distribution on $\F^{mn}_q$ generated uniformly at random conditioned on all coordinates summing to $a$. For any parameter $\gamma > 0$ and any $m \geq \Theta(1 + \log_n(q/\gamma))$, the following holds: for every $\bx \in \F_q^n$, the statistical distance between $\encshufdist{\bx}{}$ and $\encshufdist{x_1 + \cdots + x_n}{}$ is at most $\gamma$.
\end{theorem}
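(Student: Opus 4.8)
The plan is to follow the second-moment strategy sketched in Section~\ref{subsec:overview}. Fix $\bx \in \F_q^n$ and write $s = x_1 + \cdots + x_n$. Since $\encshufdist{s}{}$ is uniform on the set $U_s$ of vectors in $\F_q^{mn}$ summing to $s$, and $|U_s| = q^{mn-1}$, the statistical distance equals
\[
\sd\left(\encshufdist{\bx}{}, \encshufdist{s}{}\right) = \frac12 \sum_{\by \in U_s} \left| \Pr_{\encshufdist{\bx}{}}[\by] - \frac{1}{q^{mn-1}} \right| = \frac12 \, \E_{\by \sim \encshufdist{s}{}}\left[ \left| q^{mn-1} \Pr_{\encshufdist{\bx}{}}[\by] - 1 \right| \right].
\]
So it suffices to show the random variable $Z(\by) := q^{mn-1}\Pr_{\encshufdist{\bx}{}}[\by]$, for $\by$ drawn from $\encshufdist{s}{}$, concentrates around $1$; by Cauchy--Schwarz, $\E|Z - 1| \le \sqrt{\E[(Z-1)^2]} = \sqrt{\Var(Z) + (\E Z - 1)^2}$, and we need this to be $O(\gamma)$.

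First I would compute $\E[Z]$. Writing $\Pr_{\encshufdist{\bx}{}}[\by]$ as an average over permutations $\pi$ of the product of single-party conditional probabilities, $\E_{\by\sim\encshufdist{s}{}}[Z(\by)]$ telescopes to exactly $1$ (this is just the statement that $\encshufdist{\bx}{}$ is a probability distribution supported on $U_s$, renormalized), so the bias term vanishes and the whole task is to bound $\Var(Z) = \E[Z^2] - 1$. Expanding $Z(\by)^2$ introduces a \emph{pair} of independent permutations $\pi, \pi'$ (equivalently, two independent ways of partitioning the $mn$ coordinates of $\by$ into $n$ blocks of size $m$), and after taking the expectation over $\by \sim \encshufdist{s}{}$ one is left with a sum over such pairs of partitions of a quantity that depends only on how the two block-partitions interact combinatorially. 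The key reduction — this is where I would invoke the paper's own machinery — is that this combinatorial quantity is governed by the rank deficit (corank) of a certain random matrix encoding the "overlap pattern" of the two partitions, which is exactly \verb|Lemma~\ref{lem:corank-condition}| characterizing it via matching partitions, and \verb|Theorem~\ref{thm:corank-bound}| bounds the probability that this corank is large. Concretely I expect $\E[Z^2] = \E_{\pi,\pi'}[q^{\,\text{corank}(\pi,\pi')}]$ (up to normalization), so a tail bound of the form $\Pr[\text{corank} \ge r] \le (n^{O(1)}/\,\cdot\,)^{\text{something}\cdot r}$ combined with the geometric weight $q^r$ yields $\E[Z^2] \le 1 + O(\gamma^2)$ precisely when $m \gtrsim 1 + \log_n(q/\gamma)$.

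The final step is bookkeeping: plug $\Var(Z) = O(\gamma^2)$ into $\sd \le \frac12\sqrt{\Var(Z)} = O(\gamma)$, adjust the hidden constant in the hypothesis $m \ge \Theta(1 + \log_n(q/\gamma))$ so that the conclusion holds with the stated $\gamma$ (rather than some constant multiple), and then derive Theorem~\ref{th:up_bd_sec} from Theorem~\ref{thm:main-central} by the triangle inequality: for $\bx, \bx'$ with equal sum $s$, $\sd(\encshufdist{\bx}{}, \encshufdist{\bx'}{}) \le \sd(\encshufdist{\bx}{}, \encshufdist{s}{}) + \sd(\encshufdist{s}{}, \encshufdist{\bx'}{}) \le 2\gamma$, so taking $\gamma = 2^{-\sigma}/2$ gives $\sigma$-security with $m = O(1 + (\sigma + \log q)/\log n)$.

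I expect the main obstacle to be the variance computation — specifically, correctly setting up the double sum over pairs of block-partitions of $\by$, identifying the right random matrix whose corank controls each term, and verifying that the geometric series $\sum_r q^r \Pr[\text{corank} = r]$ converges to $1 + O(\gamma^2)$ under the stated bound on $m$. The mean computation and the final triangle-inequality wrap-up are routine; the combinatorial heart is the rank-deficit estimate, which is why the paper isolates it as a separate theorem (Theorem~\ref{thm:corank-bound}) with its own structural lemma (Lemma~\ref{lem:corank-condition}).
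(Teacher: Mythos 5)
Your proposal is correct and takes essentially the same approach as the paper: both compute $\E Z$ exactly, bound the second moment via the rank deficit of $\bA_{\pi,\pi'}$ through Proposition~\ref{prop:second-moment-v-corank} and Theorem~\ref{thm:corank-bound}, and conclude by a second-moment concentration argument. The only (minor) divergence is your finishing step — Cauchy--Schwarz, giving $\sd = \tfrac12\E|Z-1| \le \tfrac12\sqrt{\Var(Z)}$ — in place of the paper's Chebyshev-plus-case-split on the event $\{Z_{\bt} \le (1-0.5\gamma)\mu\}$, which is a marginally cleaner route (needing only $\Var(Z)=O(\gamma^2)$ rather than $O(\gamma^4)$) to the identical conclusion.
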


When plugging in $\gamma = 2^{-\sigma - 1}$, Theorem~\ref{thm:main-central} immediately implies Theorem~\ref{th:up_bd_sec} via the triangle inequality.

We now outline the overall proof approach. First, observe that $\encshufdist{x_1 + \cdots + x_n}{}$ puts probability mass equally across all vectors $\bt \in \F_q^{mn}$ whose sum of all coordinates is $x_1 + \cdots + x_n$, whereas $\encshufdist{\bx}{}$ puts mass proportional to the number of permutations $\pi:[mn] \to [mn]$ such that $\by := (t_{\pi^{-1}(1)}, \dots, t_{\pi^{-1}(mn)})$ satisfies $y_{m(i - 1) + 1} + \cdots + y_{mi} = x_i$ for all $i \in [n]$. Thus, our task boils down to proving that this latter number of is well-concentrated (for a random $\bt \in \supp(\encshufdist{x_1 + \cdots + x_n}{})$). We prove this via a second moment method (specifically Chebyshev's inequality). Carrying this out amounts to computing the first moment and upper-bounding the second moment of this number. The former is a simple calculation, whereas the latter involves proving an inequality regarding the rank of a certain random matrix (Theorem~\ref{thm:corank-bound}). We do so by providing a combinatorial characterization of the rank deficit of the relevant matrices (Lemma~\ref{lem:corank-condition}).

The rest of this section is organized as follows. In Subsection~\ref{sec:basic-setup-moment}, we define appropriate random variables, state the bound we want for the second moment (Lemma~\ref{eq:prod-bound}), and show how it implies our main theorem (Theorem~\ref{thm:main-central}). Then, in Subsection~\ref{sec:moment-v-rank}, we relate the second moment to the rank of a random matrix (Proposition~\ref{prop:second-moment-v-corank}). Finally, we give a probabilistic bound on the rank of such a random matrix in Subsection~\ref{sec:rank-bound} (Theorem~\ref{thm:corank-bound}).

\subsection{Bounding Statistical Distance via Second Moment Method}
\label{sec:basic-setup-moment}

From now on, let us fix $\bx \in \F_q^n$, and let $a = x_1 + \cdots + x_n$. The variables we define below will depend on $\bx$ (or $a$), but, for notational convenience, we avoid indicating these dependencies in the variables' names.

For every $\bt \in \F_q^{mn}$, let $Z_{\bt}$ denote the number of permutations $\pi: [mn] \to [mn]$ such that $t_{\pi(m(i - 1) + 1)} + \cdots + t_{\pi(mi)} = x_i$ for all $i \in [n]$. From the definition\footnote{Note that, if derived directly from the definition of $\encshufdist{\bx}{}$, $\pi$ here should be replaced by $\pi^{-1}$. However, these two definitions are equivalent since $\pi \mapsto \pi^{-1}$ is a bijection.} of $\encshufdist{\bx}{}$, its probability mass function is
\begin{align}
f_{\encshufdist{\bx}{}}(\bt) = \frac{Z_{\bt}}{(mn)! \cdot q^{(m - 1)n}}.
\end{align}
As stated earlier, Theorem~\ref{thm:main-central} is essentially about the concentration of $Z_{\bt}$, which we will prove via the second moment method. To facilitate the proof, for every $\pi: [mn] \to [mn]$, let us also denote by $Y_{\bt, \pi}$ the indicator variable of ``$t_{\pi(r(i - 1) + 1)} + \cdots + t_{\pi(ri)} = x_i$ for all $i \in [n]$''. Note that by definition we have
\begin{align} \label{eq:sum-over-perm}
Z_{\bt} = \sum_{\pi \in \Pi_{mn}} Y_{\bt, \pi}
\end{align}
where $\Pi_{mn}$ denotes the set of all permutations of $[mn]$.

When we think of $\bt$ as a random variable distributed according to $\encshufdist{a}{}$, the mean of $Y_{\bt, \pi}$ (and hence of $Z_{\bt}$) can be easily computed: the probability that $\bt$ satisfies ``$t_{\pi(m(i - 1) + 1)} + \cdots + t_{\pi(mi)} = x_i$'' is exactly $1/q$ for each $i \in [n - 1]$, and these events are independent. Furthermore, when these events are true, it is automatically the case that the condition holds for $i = n$. Hence, we immediately have:

\begin{observation} \label{obs:exp-y}
For every $\pi \in \Pi_{mn}$,
\begin{align} \label{eq:exp-y}
\E_{\bt \sim \encshufdist{a}{}}\left[Y_{\bt, \pi}\right] = \frac{1}{q^{n - 1}}.
\end{align}
\end{observation}

The more challenging part is upper-bounding the second moment of $Z_{\bt}$ (where we once again think of $\bt$ as a random variable drawn from $\encshufdist{a}{}$). This is equivalent to upper-bounding the expectation of $Y_{\bt, \pi} \cdot Y_{\bt, \pi'}$, where $\pi, \pi'$ are independent uniformly random permutations of $[mn]$ and $\bt$ is once again drawn from $\encshufdist{a}{}$. On this front, we will show the following bound in the next subsections.

\begin{lemma} \label{lem:y-prod-bound}
For every $\pi \in \Pi_{mn}$, we have
\begin{align} \label{eq:prod-bound}
\E_{\pi, \pi' \sim \Pi_{mn}, \bt \sim \encshufdist{a}{}}\left[Y_{\bt, \pi} \cdot Y_{\bt, \pi'}\right] \leq \sum_{k \geq 1} \frac{q^{k}}{q^{2n - 1}} \cdot \left(\frac{n^{2}}{(n/2)^{m - 2}}\right)^{\frac{k - 1}{2}}.
\end{align}
\end{lemma}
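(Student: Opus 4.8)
The plan is to establish~\eqref{eq:prod-bound} in two stages, matching the structure of Subsections~\ref{sec:moment-v-rank} and~\ref{sec:rank-bound}. In the first stage we fix $\pi,\pi' \in \Pi_{mn}$ and rewrite the inner expectation over $\bt$ as $q$ raised to a rank deficit. Write $B_i = \{m(i-1)+1,\dots,mi\}$ and let $M_{\pi,\pi'} \in \F_q^{2n \times mn}$ be the matrix whose rows are the indicator vectors $\mathbf{1}_{\pi(B_1)},\dots,\mathbf{1}_{\pi(B_n)},\mathbf{1}_{\pi'(B_1)},\dots,\mathbf{1}_{\pi'(B_n)}$, so that $Y_{\bt,\pi}\cdot Y_{\bt,\pi'} = 1$ exactly when $M_{\pi,\pi'}\bt = (x_1,\dots,x_n,x_1,\dots,x_n) =: c$. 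The first $n$ rows have pairwise disjoint supports and are hence linearly independent, and likewise for the last $n$; since each of the two groups of rows sums to the all-ones vector $\mathbf{1}_{[mn]}$, we get $\rank(M_{\pi,\pi'}) \le 2n-1$, and we set $\corank(M_{\pi,\pi'}) := (2n-1) - \rank(M_{\pi,\pi'}) \ge 0$. Now $\bt \sim \encshufdist{a}{}$ is uniform on the hyperplane $\{\bt : \sum_j t_j = a\}$, which contains the solution set of $M_{\pi,\pi'}\bt = c$ whenever the latter is consistent (summing the first $n$ equations forces $\sum_j t_j = \sum_i x_i = a$). Hence
\begin{equation*}
\E_{\bt \sim \encshufdist{a}{}}\bigl[Y_{\bt,\pi}\cdot Y_{\bt,\pi'}\bigr] \;\le\; q^{-(\rank(M_{\pi,\pi'})-1)} \;=\; \frac{q^{\corank(M_{\pi,\pi'})}}{q^{2n-2}} \;=\; \frac{q^{k}}{q^{2n-1}},
\end{equation*}
where $k := \corank(M_{\pi,\pi'})+1 \ge 1$; this is (a form of) Proposition~\ref{prop:second-moment-v-corank}. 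Taking expectations over $\pi,\pi'$ and grouping by the value of $k$, Lemma~\ref{lem:y-prod-bound} reduces to the tail estimate
\begin{equation*}
\Pr_{\pi,\pi' \sim \Pi_{mn}}\!\bigl[\corank(M_{\pi,\pi'}) \ge j\bigr] \;\le\; \left(\frac{n^2}{(n/2)^{m-2}}\right)^{j/2} \qquad\text{for all } j \ge 0.
\end{equation*}

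The second stage is where the work lies. Since $\rank(M_{\pi,\pi'})$ is invariant under a simultaneous relabeling of the $mn$ coordinates, we may take $\pi = \mathrm{id}$ and treat $\{\pi'(B_i)\}_i$ as a uniformly random partition of $[mn]$ into $n$ blocks of size $m$. A linear dependency among the rows of $M_{\pi,\pi'}$ amounts to a set $S$ of $\pi$-blocks and a set $T$ of $\pi'$-blocks with $\bigcup_{i\in S}\pi(B_i) = \bigcup_{i\in T}\pi'(B_i)$; Lemma~\ref{lem:corank-condition} identifies the dimension of the dependency space with the number of connected components of the bipartite ``block-intersection graph'' (left vertices the $\pi$-blocks, right vertices the $\pi'$-blocks, an edge for each nonempty intersection), equivalently with the largest number of parts of a \emph{matching partition} (a grouping of the $\pi$-blocks and $\pi'$-blocks such that the two families induce the same union within every group); so $\corank(M_{\pi,\pi'})+1$ equals exactly this number of components. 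Consequently $\corank \ge j$ forces $j$ ``extra'' components beyond the first, and a union bound reduces matters to showing that each extra component contributes a factor at most $\bigl(n^2/(n/2)^{m-2}\bigr)^{1/2}$: there are $O(n)$ choices for which $\pi$-blocks and which $\pi'$-block close off a new component, and, choosing the smaller of the two sides of the resulting common-union group (whose support then has size at most $mn/2$), the probability that all $m$ messages of the relevant $\pi'$-block fall inside that support is at most $(1/2)^m$ up to constants, with two of the messages effectively ``free'' because pinning a common-union group removes two degrees of freedom — this is the source of the exponent $m-2$. Summing the resulting geometric-type series over $j \ge 1$ and combining with the first stage yields~\eqref{eq:prod-bound}.

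I expect the second stage to be the main obstacle: proving the combinatorial characterization of $\corank(M_{\pi,\pi'})$ via matching partitions (Lemma~\ref{lem:corank-condition}), and then carrying out the union bound over matching sub-collections carefully enough that the per-component cost is genuinely of order $n^2/(n/2)^{m-2}$ (this is Theorem~\ref{thm:corank-bound}). The delicate points are handling the mild negative correlations inside a uniformly random balanced block-partition and correctly identifying the two ``free'' degrees of freedom, since that $-2$ in the exponent is precisely what permits $m$ as small as $\Theta(1+\log_n q)$ in Theorem~\ref{thm:main-central}. By comparison, the first stage and the bookkeeping that turns the tail estimate into~\eqref{eq:prod-bound} are routine.
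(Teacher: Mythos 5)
Your first stage is essentially the paper's Proposition~\ref{prop:second-moment-v-corank}, up to a harmless shift of one in the normalization (your $\corank = \defc - 1$, where the paper's $\defc = 2n - \rank$). The reduction from the second moment to a tail bound on rank deficit, and the identification of that tail bound with Theorem~\ref{thm:corank-bound} via Lemma~\ref{lem:corank-condition}, is also correct and matches the structure of Subsections~\ref{sec:moment-v-rank}--\ref{sec:rank-bound}. You correctly flag that all of the work is in the second stage. Two caveats on the side: the paper proves and uses only the direction of Lemma~\ref{lem:corank-condition} that says $\defc \geq k$ implies a $k$-part matching partition exists (the exact-equality claim ``$\corank + 1$ equals the number of components'' is not established there, though it would be true), and the statement you are asked to prove fixes only one $\pi$ in the outer ``For every $\pi$'' clause while the displayed formula averages over both $\pi, \pi'$ --- this is a typo in the paper, not an issue with your argument.

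The genuine gap is that your sketch of Theorem~\ref{thm:corank-bound} is not the paper's argument and does not clearly close. The paper union-bounds directly over \emph{all} ordered pairs of matching partitions $S_1 \sqcup \cdots \sqcup S_k = [n]$, $S'_1 \sqcup \cdots \sqcup S'_k = [n]$ (with $|S_j|=|S'_j|=a_j$), using the exact matching probability $1/\binom{mn}{ma_1,\ldots,ma_k}$ for a single pair, and then bounds the resulting sum $\sum_{a_1+\cdots+a_k=n} \binom{n}{a_1,\ldots,a_k}^2 / \binom{mn}{ma_1,\ldots,ma_k}$ via two multinomial-coefficient inequalities: $\binom{mn}{ma_1,\ldots,ma_k} \geq \binom{n}{a_1,\ldots,a_k}^m$ (Fact~\ref{fact:multichoose-additive} iterated), and $\binom{n}{a_1,\ldots,a_k} \geq (n/2)^{\lfloor k/2\rfloor}$ (Fact~\ref{fact:multichoose-ineq}). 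The $m-2$ exponent comes from subtracting the squared multinomial count of partitions from the $m$-th power, and the exponent $(k-1)/2$ from $\lfloor k/2\rfloor \geq (k-1)/2$ together with the $\leq n^{k-1}$ compositions of $n$ into $k$ positive parts. Your per-component sketch is a different, ``sequential'' decomposition: you claim $O(n)$ choices of which blocks ``close off'' a new component and a per-component cost of roughly $(1/2)^m$, with ``two free messages.'' This does not obviously reproduce the numbers --- a single connected component of the block-intersection graph can contain many $\pi$-blocks and many $\pi'$-blocks, so the combinatorial count per component is not $O(n)$, and the $-2$ in the exponent is an artifact of the $\binom{n}{\cdot}^2$ vs. $\binom{n}{\cdot}^m$ comparison rather than of dropping two cells from a block. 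You also do not address the $\lfloor k/2\rfloor$ phenomenon (only about half the components contribute the full $(n/2)^{m-2}$ factor), which is precisely what forces the exponent to be $(k-1)/2$ rather than $k-1$. Your framing in terms of connected components of a bipartite block-intersection graph is in fact closer to the independent analysis of Balle et al. that the paper contrasts with its own; making that route rigorous is a real undertaking, and your proposal currently gestures at it rather than carrying it out.
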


Since there are many parameters, the bound might look a bit confusing. However, the only property we need in order to show concentration of $Z_{\bt}$ is that the right-hand side of~\eqref{eq:prod-bound} is dominated by the $k = 1$ term. This is the case when the term inside the parenthesis is $q^{-\Omega(1)}$, which indeed occurs when $m \geq 4 + \Omega(\log_n q)$.

The bound in Lemma~\ref{lem:y-prod-bound} will be proved in the subsequent sections. For now, let us argue why such a bound implies our main theorem (Theorem~\ref{thm:main-central}).

\begin{proof}[Proof of Theorem~\ref{thm:main-central}]
First, notice that~\eqref{eq:sum-over-perm} and Observation~\ref{obs:exp-y} together imply that
\begin{align} \label{eq:exp-z}
\E_{\bt \sim \encshufdist{a}{}}[Z_{\bt}] = \frac{(mn)!}{q^{n - 1}}.
\end{align}
For convenience, let us define $\mu$ as $\frac{(mn)!}{q^{n - 1}}$.

We now bound the second moment of $Z_{\bt}$ as follows:
\begin{align*}
\E_{\bt \sim \encshufdist{a}{}}[Z_{\bt}^2] &= \E_{\bt \sim \encshufdist{a}{}}\left[\left(\sum_{\pi \in \Pi_{mn}} Y_{\bt, \pi}\right)^2\right] \\
&= \left((mn)!\right)^2 \cdot \E_{\pi, \pi' \sim \Pi_{mn}, \bt \sim \encshufdist{a}{}}\left[Y_{\bt, \pi} \cdot Y_{\bt, \pi'}\right] \\
&\overset{~\eqref{eq:prod-bound}}{\leq} \left((mn)!\right)^2 \cdot \left(\sum_{k \geq 1} \frac{q^{k}}{q^{2n - 1}} \cdot \left(\frac{n^{2}}{(n/2)^{m - 2}}\right)^{\frac{k - 1}{2}}\right) \\
&= \left((mn)!\right)^2 \cdot \frac{1}{q^{2(n - 1)}} \cdot \left(1 + \sum_{k \geq 2} q^{k - 1} \cdot \left(\frac{n^{2}}{(n/2)^{m - 2}}\right)^{\frac{k - 1}{2}}\right) \\
&= \mu^2 \cdot \left(1 + \sum_{k \geq 2} \left(\frac{(qn)^{2}}{(n/2)^{m - 2}}\right)^{\frac{k - 1}{2}}\right).
\end{align*}
Now, let $p = \left(\frac{(qn)^{2}}{(n/2)^{m - 2}}\right)^{\frac{1}{2}}$. If $m \geq 4 + 100\log_{n/2}(q/\gamma))$,
then we have $p \leq 0.01\gamma^4$. Plugging this back in the above inequality gives
\begin{align*}
\E_{\bt \sim \encshufdist{a}{}}[Z_{\bt}^2] &\leq \mu^2 \left(\frac{1}{1 - p}\right) \leq \mu^2 \left(\frac{1}{1 - 0.01\gamma^4}\right) \leq \mu^2(1 + 0.02\gamma^4). 
\end{align*}
In other words, we have
\begin{align*}
\Var_{\bt \sim \encshufdist{a}{}}(Z_{\bt}) \leq (0.2\gamma^2 \cdot \mu)^2.
\end{align*}
Hence, by Chebyshev's inequality, we have
\begin{align} \label{eq:chebyshev-result}
\Pr_{\bt \sim \encshufdist{a}{}}[Z_{\bt} \leq (1 - 0.5\gamma) \mu] \leq 0.5\gamma.
\end{align}

Finally, notice that the statistical distance between $\encshufdist{\bx}{}$ and $\encshufdist{a}{}$ is
\begin{align*}
\sum_{\bt \in \F_q^{mn}} \max\{f_{\encshufdist{a}{}}(\bt) - f_{\encshufdist{\bx}{}}(\bt) , 0\} &= \sum_{\substack{\bt \in \F_q^{mn}\\ t_1 + \cdots + t_{mn} = a}} \max\left\{\frac{1}{q^{mn - 1}} - \frac{Z_{\bt}}{(mn)! \cdot q^{(m - 1)n}}, 0\right\} \\
&= \sum_{\substack{\bt \in \F_q^{mn} \\ t_1 + \cdots + t_{mn} = a}} f_{\encshufdist{a}{}}(\bt) \cdot \max\left\{1 - Z_{\bt} / \mu, 0\right\} \\
&= \E_{\bt \sim \encshufdist{a}{}}[\max\left\{1 - Z_{\bt} / \mu, 0\right\}] \\
&\leq \Pr_{\bt \sim \encshufdist{a}{}}[Z_{\bt} \leq (1 - 0.5\gamma)\mu] \cdot 1 + \Pr_{\bt \sim \encshufdist{a}{}}[Z_{\bt} > (1 - 0.5\gamma)\mu] \cdot (0.5 \gamma) \\
&\overset{\eqref{eq:chebyshev-result}}{\leq} (0.5\gamma) \cdot 1 + 1 \cdot (0.5\gamma) \\
&= \gamma. \qedhere
\end{align*}
\end{proof}

\subsection{Relating Moments to Rank of Random Matrices}
\label{sec:moment-v-rank}

Having shown how Lemma~\ref{lem:y-prod-bound} implies our main theorem (Theorem~\ref{thm:main-central}), we now move on to prove Lemma~\ref{lem:y-prod-bound} itself. In this subsection, we deal with the first half of the proof by relating the quantity on the left-hand side of~\eqref{eq:prod-bound} to a quantity involving the rank of a certain random matrix. 

\subsubsection{Warm-Up: (Re-)Computing the First Moment}
\label{subsec:first-moment}

As a first step, let us define below a class of matrices that will be used throughout.

\begin{definition}
For every permutation $\pi: [mn] \to [mn]$, let us denote by $\bA_{\pi} \in \F_q^{n \times mn}$ the matrix whose $i$-th row is the indicator vector for $\pi(\{m(i - 1) + 1, \dots, mi\})$. More formally,
\begin{align*}
(\bA_{\pi})_{i, j} = 
\begin{cases}
1 & \text{ if } j \in \pi(\{m(i - 1) + 1, \dots, mi\}), \\
0 & \text{ otherwise.}
\end{cases}
\end{align*}
\end{definition}

Before we describe how these matrices relate to the second moment, let us illustrate their relation to the first moment, by sketching an alternative way to prove Observation~\ref{obs:exp-y}. To do so, let us rearrange the left-hand side of~\eqref{eq:exp-y} as
\begin{align*}
\E_{\bt \sim \encshufdist{a}{}} [Y_{\bt, \pi}] = \frac{1}{q^{mn - 1}} \sum_{\bt \in \F_q^{mn}} Y_{\bt, \pi}.
\end{align*}
Now, observe that $Y_{\bt, \pi} = 1$ iff $\bA_{\pi} \bt = \bb$. Since the rows of the matrix $\bA_{\pi}$ have pairwise-disjoint supports, the matrix is always full rank (over $\F_q$), i.e., $\rank(\bA_{\pi}) = n$. This means that the number of values of $\bt$ satisfying the aforementioned equation is $q^{mn - n}$. Plugging this into the above expansion gives
\begin{align*}
\E_{\bt \sim \encshufdist{a}{}} [Y_{\bt, \pi}] = \frac{q^{mn - n}}{q^{mn - 1}} = \frac{1}{q^{n - 1}}.
\end{align*}
Hence, we have rederived~\eqref{eq:exp-y}.

\subsubsection{Relating Second Moment to Rank}

%\begin{definition}
%A matrix $\bA \in \F^{m \times n}$ where $m \leq n$ is said to have a \emph{rank deficit} of $n - \rank(\bA)$. Equivalently, the rank deficit of $\bA$ is equal to the corank of $\bA^T$.
%\end{definition}

In the previous subsection, we have seen the relation of matrix $\bA_{\pi}$ to the first moment. We will now state such a relation for the second moment. Specifically, we will rephrase the left-hand side of~\eqref{eq:prod-bound} as a quantity involving matrices $\bA_{\pi}$ and $\bA_{\pi'}$. To do so, we will need the following additional notations:

\begin{definition}
For a pair of permutations $\pi, \pi': [mn] \to [mn]$, we let $\bA_{\pi, \pi'} \in \F_q^{2n \times mn}$ denote the 
(column-wise) concatenation of $\bA_{\pi}$ and $\bA_{\pi'}$, i.e.,
\begin{align*}
\bA_{\pi, \pi'} =
\begin{bmatrix}
\bA_{\pi} \\
\bA_{\pi'}
\end{bmatrix}.
\end{align*}
Furthermore, let\footnote{Note that $\defc(\bA_{\pi, \pi'})$ is equal to the \emph{corank} of $\bA_{\pi, \pi'}^T$.} the \emph{rank deficit} of $\bA_{\pi, \pi'}$ be $\defc(\bA_{\pi, \pi'}) := 2n - \rank(\bA_{\pi, \pi'})$.
\end{definition}

Analogous to the relationship between the first moment and $\bA_{\pi}$ seen in the previous subsection, the quantity $\E_{\bt \sim \encshufdist{a}{}}\left[Y_{\bt, \pi} \cdot Y_{\bt, \pi'}\right]$ is in fact proportional to the number of solutions to certain linear equations, which is represented by $\bA_{\pi, \pi'}$. This allows us to give the bound to the former, as formalized below.

\begin{proposition} \label{prop:second-moment-v-corank}
For every pair of permutations $\pi, \pi': [mn] \to [mn]$, we have
\begin{align*}
\E_{\bt \sim \encshufdist{a}{}}\left[Y_{\bt, \pi} \cdot Y_{\bt, \pi'}\right] \leq \frac{q^{\defc(\bA_{\pi, \pi'})}}{q^{2n - 1}}.
\end{align*}
\end{proposition}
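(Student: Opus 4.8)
The plan is to mirror the first-moment computation done in the warm-up (Subsection~\ref{subsec:first-moment}), but now with the pair of permutations. First I would rewrite the expectation as an average over all $\bt \in \F_q^{mn}$, since $\encshufdist{a}{}$ is uniform over the $q^{mn-1}$ vectors summing to $a$:
\begin{align*}
\E_{\bt \sim \encshufdist{a}{}}\left[Y_{\bt, \pi} \cdot Y_{\bt, \pi'}\right] = \frac{1}{q^{mn - 1}} \sum_{\substack{\bt \in \F_q^{mn} \\ t_1 + \cdots + t_{mn} = a}} Y_{\bt, \pi} \cdot Y_{\bt, \pi'}.
\end{align*}
The key observation is that $Y_{\bt,\pi} \cdot Y_{\bt,\pi'} = 1$ if and only if $\bt$ simultaneously satisfies $\bA_\pi \bt = \bb$ and $\bA_{\pi'} \bt = \bb$, i.e. $\bA_{\pi,\pi'} \bt = \begin{bmatrix} \bb \\ \bb \end{bmatrix}$, where $\bb = (x_1,\dots,x_n)^T$. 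So the inner sum counts the number of solutions $\bt$ to this linear system that \emph{also} satisfy the constraint $\sum_j t_j = a$.

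Next I would bound the number of such solutions. The system $\bA_{\pi,\pi'}\bt = \binom{\bb}{\bb}$ has $2n$ equations, and the solution set, if nonempty, is an affine subspace of dimension exactly $mn - \rank(\bA_{\pi,\pi'}) = mn - 2n + \defc(\bA_{\pi,\pi'})$. Thus the number of $\bt$ satisfying it is at most $q^{mn - 2n + \defc(\bA_{\pi,\pi'})}$ (with equality when the system is consistent, and zero otherwise — in either case this is an upper bound). Adding the constraint $\sum_j t_j = a$ only cuts this down further, so a fortiori the inner sum is at most $q^{mn - 2n + \defc(\bA_{\pi,\pi'})}$. Plugging this into the displayed equation gives
\begin{align*}
\E_{\bt \sim \encshufdist{a}{}}\left[Y_{\bt, \pi} \cdot Y_{\bt, \pi'}\right] \leq \frac{q^{mn - 2n + \defc(\bA_{\pi,\pi'})}}{q^{mn - 1}} = \frac{q^{\defc(\bA_{\pi,\pi'})}}{q^{2n - 1}},
\end{align*}
which is exactly the claimed bound.

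There is essentially no hard obstacle here — the statement is a clean linear-algebra/counting argument, and the only mild subtlety is being careful that we want an \emph{upper} bound, so we need not worry about whether the combined system (including $\sum_j t_j = a$) is consistent or about the precise rank of the augmented matrix; dropping the sum constraint and using the crude bound $q^{mn - \rank(\bA_{\pi,\pi'})}$ on the number of solutions to the $2n$-equation system suffices. (One could note for intuition that the all-ones row is always in the row span of $\bA_\pi$, so the sum constraint is in fact already implied when the system is consistent, but this is not needed for the inequality.) The real work is deferred to the next subsection, where Lemma~\ref{lem:y-prod-bound} will follow from this proposition together with a probabilistic bound on $\defc(\bA_{\pi,\pi'})$ for random $\pi,\pi'$ (Theorem~\ref{thm:corank-bound}).
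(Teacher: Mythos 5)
Your proof is correct and is essentially the paper's argument: rewrite the expectation as a normalized count of $\bt$ with $\bA_{\pi,\pi'}\bt = \begin{bmatrix}\bb\\\bb\end{bmatrix}$, bound that count by $q^{mn - \rank(\bA_{\pi,\pi'})}$, and simplify. The only cosmetic difference is that the paper sums over all of $\F_q^{mn}$ from the start (valid because $Y_{\bt,\pi}=0$ whenever $\sum_j t_j \neq a$), whereas you retain the sum constraint and then discard it as only decreasing the count; your parenthetical remark that the all-ones row lies in the row span of $\bA_\pi$ correctly explains why the two bookkeeping choices coincide.
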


\begin{proof}
First, let us rearrange the left-hand side term as
\begin{align} \label{eq:y-expand}
\E_{\bt \sim \encshufdist{a}{}}\left[Y_{\bt, \pi} \cdot Y_{\bt, \pi'}\right] = \frac{1}{q^{mn - 1}} \sum_{\bt \in \F_q^{mn}} Y_{\bt, \pi} \cdot Y_{\bt \cdot \pi'}.
\end{align}
Now, notice that $Y_{\bt, \pi} = 1$ iff $\bA_{\pi} \bt = \bx$. Similarly, $Y_{\bt, \pi'} = 1$ iff $\bA_{\pi'} \bt = \bx$. In other words, $Y_{\bt, \pi} \cdot Y_{\bt \cdot \pi'} = 1$ iff
\begin{align*}
\bA_{\pi, \pi'} \bt = 
\begin{bmatrix}
\bx \\
\bx
\end{bmatrix}.
\end{align*}
The number of solutions $\bt \in \F_q^{mn}$ to the above equation is at most $q^{mn - \rank(\bA_{\pi, \pi'})} = q^{(m - 2)n + \defc(\bA_{\pi, \pi'}^T)}$. Plugging this back into~\eqref{eq:y-expand}, we get
\begin{align*}
\E_{\bt \sim \encshufdist{a}{}}\left[Y_{\bt, \pi} \cdot Y_{\bt, \pi'}\right] \leq \frac{1}{q^{mn - 1}} \cdot q^{(m - 2)n + \defc(\bA_{\pi, \pi'})} = \frac{q^{\defc(\bA_{\pi, \pi'})}}{q^{2n - 1}},
\end{align*}
as desired.
\end{proof}

\subsection{Probabilistic Bound on Rank Deficit of Random Matrices}
\label{sec:rank-bound}

The final step of our proof is to bound the probability that the rank deficit of $\bA_{\pi, \pi'}$ is large. Such a bound is encapsulated in Theorem~\ref{thm:corank-bound} below. Notice that Proposition~\ref{prop:second-moment-v-corank} and Theorem~\ref{thm:corank-bound} immediately yield Lemma~\ref{lem:y-prod-bound}.

\begin{theorem} \label{thm:corank-bound}
For all $m \geq 3$ and $k \in \N$, we have
\begin{align*}
\Pr_{\pi, \pi' \sim \Pi_{mn}}[\defc(\bA_{\pi, \pi'}) \geq k] \leq \left(\frac{n^{2}}{(n/2)^{m - 2}}\right)^{\frac{k - 1}{2}}.
\end{align*}
\end{theorem}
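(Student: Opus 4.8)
The plan is to reduce the probabilistic bound on $\defc(\bA_{\pi,\pi'})$ to a purely combinatorial statement about the intersection pattern of the two partitions of $[mn]$ induced by $\pi$ and $\pi'$. Since $\defc(\bA_{\pi,\pi'}) = 2n - \rank(\bA_{\pi,\pi'})$ depends only on the row spaces, and those rows are indicator vectors of the blocks $\pi(\{m(i-1)+1,\dots,mi\})$ and $\pi'(\{m(j-1)+1,\dots,mj\})$, the matrix $\bA_{\pi,\pi'}$ is determined (up to relabeling rows) by the unordered pair of partitions $\cP = \{P_1,\dots,P_n\}$ and $\cP' = \{P'_1,\dots,P'_n\}$ of $[mn]$, each into $n$ blocks of size $m$. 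The first step is therefore to invoke the promised combinatorial characterization (Lemma~\ref{lem:corank-condition}): I expect it to say that $\defc(\bA_{\pi,\pi'}) \geq k$ iff there is a sub-collection of the $2n$ blocks, of total ``excess'' at least $k$, that forms a \emph{matching partition} — roughly, a set $S$ of blocks from $\cP$ and $S'$ of blocks from $\cP'$ covering exactly the same ground set, with $|S| + |S'| - (\text{number of connected components of the block-intersection bipartite graph on } S \cup S')$ large. Equivalently, the corank counts the number of independent cycles, i.e.\ $|S|+|S'| - |\bigcup S|/1$ adjusted for components, in the bipartite intersection graph; I will phrase the event $\{\defc \geq k\}$ as the existence of a union of $c$ ``matched components,'' where the $j$-th component uses $a_j$ blocks of $\cP$ and $a'_j$ blocks of $\cP'$ spanning a common set $U_j$ with $|U_j| = m\,a_j = m\,a'_j$ and contributing $a_j + a'_j - 2$ to the deficit, so that $\sum_j (a_j + a'_j - 2) \geq k$.

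The second step is a union bound over such structures. Condition on $\pi$ (which fixes the partition $\cP$; by symmetry we may take it to be the ``canonical'' partition) and let $\pi'$ be uniformly random. A single matched component with parameters $(a, a')$ (so $a$ blocks of $\cP$, $a'$ blocks of $\cP'$, spanning a common set of size $ma = ma'$, hence $a = a'$) requires: choosing the $a$ blocks of $\cP$ (at most $\binom{n}{a} \le n^a$ ways), and then the event that the corresponding $a$ blocks of the random partition $\cP'$ land inside that union. The probability that a uniformly random partition into $m$-blocks has $a$ of its blocks covering a prescribed set of $ma$ elements is at most the probability that the first $ma$ elements under $\pi'^{-1}$ map into that set, which is $\binom{ma}{ma}^{-1}$-type — more carefully, it is at most $\left(\frac{ma}{mn}\right)\left(\frac{ma-1}{mn-1}\right)\cdots \le \left(\frac{ma}{mn}\right)^{ma} \le (a/n)^{ma}$, and we then lose a factor for which $a$ blocks of $\cP'$ we pick (again $\le n^a$). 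So one component of size $a$ on each side costs roughly $n^{2a} \cdot (a/n)^{ma} \le n^{2a} \cdot n^{-(m-1)\cdot ? }$... the exponent bookkeeping here is exactly where the $\frac{n^2}{(n/2)^{m-2}}$ shows up: a component contributes $2a - 2 = 2(a-1)$ to the deficit while its probability $\times$ count is at most $\left(n^2 / (n/2)^{m-2}\right)^{a-1}$ — note $a=1$ is free (a block of $\cP$ equal to a block of $\cP'$ contributes $0$ to the deficit, consistent with the $k-1$ exponent). Multiplying independent-ish contributions across components, and summing the geometric-type series over all ways to split $k$ into component excesses $2(a_j-1)$, the whole probability telescopes to $\left(n^2/(n/2)^{m-2}\right)^{(k-1)/2}$, with the $\tfrac{k-1}{2}$ arising because each unit of ``$a_j - 1$'' buys $2$ units of deficit, and one spanning component is essentially free (hence $k-1$ rather than $k$, halved).

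For the counting/probability estimate in the middle I would be careful to handle overlapping or nested candidate structures correctly: the cleanest route is to bound the probability that there \emph{exists} a matched sub-structure of total excess $\geq k$ by first extracting a \emph{minimal} one (so components are genuine tree-plus-one-edge pieces and the excess is exactly controlled), then union-bounding only over minimal structures, whose number is polynomially controlled per unit of excess. The main obstacle I anticipate is precisely making the union bound tight enough: a naive bound over all $\binom{n}{a}$ choices on each side times the landing probability must come out to at most $\left(\frac{n^2}{(n/2)^{m-2}}\right)^{a-1}$ per component, which forces me to be slightly clever about the landing probability — I should use that once $a$ blocks of $\cP'$ are forced into a set of size $ma$, the probability is at most $\frac{(ma)!\,(mn-ma)!}{(mn)!}\big/(\text{normalization})$, and to extract the clean bound $\le \big((ma)/(mn)\big)^{ma}\cdot(\text{block-choice factor})$ one wants $ma \ge 2$, i.e.\ $m \ge 2$ when $a\ge1$, and the stated hypothesis $m \ge 3$ gives the needed slack (the $(n/2)^{m-2}$ rather than $n^{m-2}$ presumably absorbs a $2^{ma}$-type loss from crudely estimating binomial coefficients in the characterization lemma). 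Once the per-component bound is in hand, assembling it into the stated inequality is a routine multinomial summation.
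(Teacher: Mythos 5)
Your high-level plan --- reduce to a combinatorial statement about how the two block-partitions of $[mn]$ intersect and then union-bound --- is the right one and does match the paper's strategy, but the key bookkeeping step relating $\defc(\bA_{\pi,\pi'})$ to the component structure of the bipartite block-intersection graph is backwards, and this is the crux of the argument. A connected component with $a$ blocks of $\cP$ and $a'=a$ blocks of $\cP'$ covering a common ground set contributes exactly \emph{one} dimension to the left null space of $\bA_{\pi,\pi'}$: the unique dependency assigns $+1$ to each $\cP$-row and $-1$ to each $\cP'$-row in that component, and by connectivity this is the only dependency supported there, regardless of $a$. So $\defc(\bA_{\pi,\pi'})$ equals the \emph{number of components} $c$, not $\sum_j(a_j + a'_j - 2)$. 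In particular your claim that a component with $a=1$ (a $\cP$-block identical to a $\cP'$-block) is ``free'' is reversed: two identical indicator rows cost one unit of rank deficit, and the extreme case $\pi=\pi'$ gives $c = n$ components and the maximal $\defc = n$. Since $\sum_j (a_j - 1) = n - c$, your event $\sum_j(2a_j - 2) \geq k$ is equivalent to $c \leq n - k/2$, i.e.\ \emph{few} components, which is essentially the complement of $\defc \geq k$; the probabilities you would be bounding point in the wrong direction, and the per-component telescoping cannot be patched to give the stated inequality.

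With the correct characterization --- Lemma~\ref{lem:corank-condition} gives that $\defc \geq k$ implies a pair of matching partitions of $[n]$ into $k$ non-empty parts (equivalently $c \geq k$, since any matching partition coarsens the component partition) --- the union bound is cleaner than what you anticipate. Fix sizes $a_1 + \cdots + a_k = n$ with $|S_i| = |S_i'| = a_i$; for each of the $\binom{n}{a_1,\dots,a_k}^2$ pairs of such partitions the probability over random $\pi,\pi'$ that they match is \emph{exactly} $1/\binom{mn}{ma_1,\dots,ma_k}$, and the resulting ratio is handled by super-multiplicativity of multinomials (Facts~\ref{fact:multichoose-additive} and~\ref{fact:multichoose-ineq}). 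There is no minimality or extraction argument needed --- the union bound is over all matching partitions, not component-finest ones, so no double-counting issue arises. Finally, the exponent $(k-1)/2$ does not come from ``one spanning component being free'': the $n^{k-1}$ is the count of compositions of $n$ into $k$ positive parts, the $\lfloor k/2 \rfloor$ in the denominator comes from Fact~\ref{fact:multichoose-ineq}, and the bound then relaxes to exponent $(k-1)/2$ because $\lfloor k/2 \rfloor \geq (k-1)/2$.
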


\subsubsection{Characterization of Rank Deficit via Matching Partitions.}

To prove Theorem~\ref{thm:corank-bound}, we first give a ``compact'' and convenient characterization of the rank deficit of $\bA_{\pi, \pi'}$. In order to do this, we need several additional notations: we say that a partition $S_1 \sqcup \cdots \sqcup S_k = U$ of a universe $U$ is \emph{non-empty} if $S_1, \dots, S_k \ne \emptyset$. Moreover, for a set $S \subseteq [n]$, we use $S^{\to m} \subseteq [mn]$ to denote the set $\cup_{i \in S} \{m(i - 1) + 1, \dots, mi\}$. Finally, we need the following definition of \emph{matching partitions}.

\begin{definition}
Let $\pi, \pi'$ be any pair of permutations of $[mn]$.
A pair of non-empty partitions $S_1 \sqcup \cdots \sqcup S_k = [n]$ and $S'_1 \sqcup \cdots \sqcup S'_k = [n]$ is said to \emph{match with respect to $\pi, \pi'$} iff
\begin{align} \label{eq:corank-condition}
\pi\left(S_j^{\to m}\right) = \pi'\left((S'_j)^{\to m}\right)
\end{align}
for all $j \in [k]$. When $\pi, \pi'$ are clear from the context, we may omit ``with respect to $\pi, \pi'$'' from the terminology.
\end{definition}

Condition~\eqref{eq:corank-condition} might look a bit mysterious at first glance. However, there is a very simple equivalent condition in terms of the matrices $\bA_{\pi}, \bA_{\pi'}$: $S_1 \sqcup \cdots \sqcup S_k = [n]$ and $S'_1 \sqcup \cdots \sqcup S'_k = [n]$ match iff the sum of rows $i \in S_j$ of $\bA_{\pi}$ coincides with the sum of rows $i' \in S'_j$ of $\bA_{\pi'}$, i.e.,  $\sum_{i \in S_j} (\bA_{\pi})_i = \sum_{i' \in S'_j} (\bA_{\pi'})_{i'}$.

An easy-to-use equivalence of $\defc(\bA_{\pi, \pi'}) = k$ is that a pair of matching partitions $S_1 \sqcup \cdots \sqcup S_k = [n]$ and $S'_1 \sqcup \cdots \sqcup S'_k = [n]$ exists. We only use one direction of this relation, which we prove below.

\begin{lemma} \label{lem:corank-condition}
For any permutations $\pi, \pi': [mn] \to [mn]$, if $\defc(\bA_{\pi, \pi'}) \geq k$, then there exists a pair of matching partitions $S_1 \sqcup \cdots \sqcup S_k = [n]$ and $S'_1 \sqcup \cdots \sqcup S'_k = [n]$. 
\end{lemma}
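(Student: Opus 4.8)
The plan is to work with the transpose picture. Recall $\defc(\bA_{\pi,\pi'}) = \corank(\bA_{\pi,\pi'}^T)$, i.e. it equals the dimension of the left kernel of $\bA_{\pi,\pi'}$: the space of vectors $\mathbf{v} = (\mathbf{v}^{(1)}, \mathbf{v}^{(2)}) \in \F_q^n \times \F_q^n$ with $\mathbf{v}^{(1)} \bA_\pi + \mathbf{v}^{(2)} \bA_{\pi'} = \bzero$. So if $\defc(\bA_{\pi,\pi'}) \ge k$, this kernel has dimension at least $k$, and I can pick $k$ linearly independent vectors $\mathbf{v}_1, \dots, \mathbf{v}_k$ in it. The first key step is to convert these into the combinatorial data of the matching partitions, so I would not work with arbitrary kernel vectors but instead exploit the special structure of $\bA_\pi, \bA_{\pi'}$ — each column of $\bA_\pi$ has exactly one $1$ (column $j$ has its $1$ in row $i$ where $\pi^{-1}(j) \in \{m(i-1)+1,\dots,mi\}$), and likewise for $\bA_{\pi'}$. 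Thus the relation $\mathbf{v}^{(1)} \bA_\pi + \mathbf{v}^{(2)} \bA_{\pi'} = \bzero$ says precisely: for every column $j \in [mn]$, if $j$ ``belongs to'' block $i$ under $\pi$ and to block $i'$ under $\pi'$, then $v^{(1)}_i = -v^{(2)}_{i'}$.

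The second key step — and the heart of the argument — is to see that this system of equalities links the coordinates of $\mathbf{v}^{(1)}$ and $\mathbf{v}^{(2)}$ into connected pieces, and that these connected pieces \emph{are} the matching partitions. Concretely, build a bipartite-style graph on vertex set $[n] \sqcup [n]$ (left copy indexing rows of $\bA_\pi$, right copy indexing rows of $\bA_{\pi'}$) with an edge between left-$i$ and right-$i'$ whenever some column $j$ lands in block $i$ under $\pi$ and block $i'$ under $\pi'$ — equivalently whenever $\pi(\{m(i-1)+1,\dots\}) \cap \pi'(\{m(i'-1)+1,\dots\}) \ne \emptyset$. On each connected component, any left-kernel vector is forced to be constant up to sign: $v^{(1)}_i$ is determined by one free scalar per component (with $v^{(2)}_{i'} = -v^{(1)}_i$ along edges). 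Hence $\dim(\text{left kernel}) = (\text{number of connected components})$ — actually I only need one direction, namely $\defc \le \#\text{components}$, which gives $\#\text{components} \ge k$. Now since the blocks $\{m(i-1)+1,\dots,mi\}$ partition $[mn]$ and $\pi,\pi'$ are bijections, for any component $C$ the union of left-blocks it touches and the union of right-blocks it touches must map to the \emph{same} subset of $[mn]$: that is, letting $S_j \subseteq [n]$ be the left vertices in component $j$ and $S'_j$ the right vertices, we get $\pi(S_j^{\to m}) = \pi'((S'_j)^{\to m})$, which is exactly~\eqref{eq:corank-condition}, and the $S_j$ (resp. $S'_j$) partition $[n]$ and are non-empty. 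Finally, if the number of components exceeds $k$, merge some of them arbitrarily (taking unions of $S_j$'s and the corresponding $S'_j$'s preserves~\eqref{eq:corank-condition}) to get down to exactly $k$ non-empty matching partitions.

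The main obstacle I anticipate is the step asserting that each connected component $C$ satisfies $\pi(S_C^{\to m}) = \pi'((S'_C)^{\to m})$, i.e. that the component is ``closed'' under the adjacency in both directions. This follows because each column $j \in [mn]$ belongs to \emph{exactly one} left-block and \emph{exactly one} right-block, so the set of columns $\bigcup_{i \in S_C} \pi(\{m(i-1)+1,\dots\})$ is exactly the set of columns whose right-block is among $\{S'_C\}$ — no column can ``leak out'' of a component — and then a counting/cardinality check (both sides are disjoint unions over the same index set, each contributing $m$ columns) closes it. One must be slightly careful that the components are genuinely vertex-disjoint on both sides and that no block is split between components, but this is immediate from exact-one-membership. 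I would present the argument via the graph/connected-components language since it makes the "forced constant on each component" and the "closed under adjacency" facts transparent, and I would only verify the single inequality $\defc(\bA_{\pi,\pi'}) \le \#\text{components}$ that I actually need.
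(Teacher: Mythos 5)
Your proof is correct, and it takes a genuinely different route from the paper's. The paper proves the contrapositive: assuming no matching partition of size $k$ exists, it takes a matching partition $S_1\sqcup\cdots\sqcup S_t = [n]$, $S'_1\sqcup\cdots\sqcup S'_t=[n]$ with the \emph{maximum} number of parts (so $t<k$), drops one row $i_j\in S_j$ of $\bA_\pi$ per part, and shows the remaining $2n-t$ rows are linearly independent; the independence argument works by supposing a dependency, restricting to the columns $S_t^{\to m}$, and using the equality of supports to split $S_t,S'_t$ into two matching pieces each, contradicting maximality. You instead compute the defect directly: you observe that $\defc(\bA_{\pi,\pi'})$ equals the dimension of the space of $(\mathbf{v}^{(1)},\mathbf{v}^{(2)})\in\F_q^n\times\F_q^n$ with $\mathbf{v}^{(1)}\bA_\pi+\mathbf{v}^{(2)}\bA_{\pi'}=\bzero$, encode the column-wise constraints $v^{(1)}_i=-v^{(2)}_{i'}$ as a bipartite graph on $[n]\sqcup[n]$, and read off that the kernel dimension is bounded by the number of connected components, each component yielding one matching pair $(S_j,S'_j)$ by the ``no column leaks out'' argument; merging components then produces exactly $k$ parts. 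The two proofs buy slightly different things: the paper's maximality argument avoids introducing any auxiliary combinatorial object but only yields the one-sided implication stated in the lemma, whereas your connected-components computation actually gives the exact equality $\defc(\bA_{\pi,\pi'}) = \#\text{components}$ and makes the structure of the kernel transparent. Amusingly, your route essentially rediscovers the connected-components framing that the paper attributes to the concurrent work of Balle et al., which the paper explicitly contrasts against its own matrix-rank presentation.

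One minor point: when you invoke the ``forced constant on each component'' fact you should note (as you implicitly do) that consistency around cycles is automatic because the graph is bipartite, so each left$\to$right$\to$left hop produces an even number of sign flips; and your cardinality check at the end is unnecessary, since one-sided containment $\pi(S_j^{\to m})\subseteq\pi'((S'_j)^{\to m})$ already follows from the fact that every column's unique right-block index stays in the component, and the reverse containment is symmetric.
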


\begin{proof}
We will prove the contrapositive.
Let $\pi, \pi': [mn] \to [mn]$ be any permutations, and suppose that there is no pair of matching partitions $S_1 \sqcup \cdots \sqcup S_k = [n]$ and $S'_1 \sqcup \cdots \sqcup S'_k = [n]$. We will show that $\defc(\bA_{\pi, \pi'}) < k$, or equivalently $\rank(\bA_{\pi, \pi'}) > 2n - k$.

Consider any pair of matching partitions\footnote{Note that at least one matching partition always exists: $S_1 = [n] = S'_1$.} $S_1 \sqcup \cdots \sqcup S_t = [n]$ and $S'_1 \sqcup \cdots \sqcup S'_t = [n]$ that maximizes the number of parts $t$. From our assumption, we must have $t < k$. 

For every part $j \in [t]$, let us pick an arbritrary element $i_j \in S_j$.
Consider all rows of $\bA_{\pi, \pi'}$, except the $i_j$-th rows for all $j \in [t]$ (i.e. $\{(\bA_{\pi, \pi'})_i\}_{i \notin \{i_1, \dots, i_t\}}$). We claim that these rows are linearly independent. Before we prove this, note that this imply that the rank of $\bA_{\pi, \pi'}$ is at least $2n - t > 2n - k$, which would complete our proof.

We now move on to prove the linear independence of $\{(\bA_{\pi, \pi'})_i\}_{i \notin \{i_1, \dots, i_t\}}$. Suppose for the sake of contradiction that these rows are not linearly independent. Since the matrix $\bA_{\pi, \pi'}$ is simply a concatenation of $\bA_{\pi}$ and $\bA_{\pi'}$, we have that $\{(\bA_{\pi, \pi'})_i\}_{i \notin \{i_1, \dots, i_t\}} = \{(\bA_{\pi})_i\}_{i \in [n] \setminus \{i_1, \dots, i_t\}} \cup \{(\bA_{\pi'})_{i'}\}_{i' \in [n]}$. The linear dependency of these rows mean that there exists a non-zero vector of coefficients $(c_1, \dots, c_n, c'_1, \dots, c'_n) \in \F_q^{2n}$ with $c_{i_1} = \cdots = c_{i_t} = 0$ such that 
\begin{align} \label{eq:zero-comb}
\bzero = \sum_{i \in [n]} c_i \cdot (\bA_{\pi})_i + \sum_{i' \in [n]} c'_{i'}\cdot (\bA_{\pi'})_{i'}.
\end{align}
Since the rows of $\bA_{\pi'}$ are linearly independent, there must exist $i^* \in [n]$ such that $c_{i^*} \ne 0$. Let $j \in [t]$ denote the index of the partition to which $i^*$ belongs, i.e., $i^* \in S_j$. For notational convenience, we will assume, without loss of generality, that $j = t$.

Let $P_t: \F_q^{mn} \to \F_q^{(S_t^{\to m})}$ denote the projection operator that sends a vector $(v_\ell)_{\ell \in [mn]}$ to its restriction on coordinates in $S_t^{\to m}$, i.e., $(v_{\ell})_{\ell \in S_t^{\to m}}$. Observe that $P_t((\bA_{\pi})_i)$ is non-zero iff $i \in S_t$ and $P_t((\bA_{\pi'})_{i'})$ is non-zero iff $i' \in S'_t$. Thus, by taking $P_t$ on both sides of~\eqref{eq:zero-comb}, we have
\begin{align} \label{eq:zero-comb-restricted}
\bzero &= \sum_{i \in S_t} c_i \cdot P_t((\bA_{\pi})_i) + \sum_{{i'} \in S'_t} c_{i'} \cdot P_t((\bA_{\pi})_{i'})
\end{align}
Now, let $T = \{i \in S_t \mid c_i \ne 0\}$ and $T' = \{i' \in S'_t \mid c_{i'} \ne 0\}$. Notice that $\supp\left(\sum_{i \in S_t} c_i \cdot P_t((\bA_{\pi})_i)\right) = \pi(T^{\rightarrow m})$ and $\supp\left(\sum_{i' \in S'_t} c_{i'} \cdot P_t((\bA_{\pi})_{i'})\right) = \pi'((T')^{\rightarrow m})$. Hence, from~\eqref{eq:zero-comb-restricted}, we have 
\begin{align} \label{eq:restricted-equal}
\pi(T^{\rightarrow m}) = \pi'((T')^{\rightarrow m}).
\end{align}
Consider the pair of partitions $S_1 \sqcup \cdots S_{t - 1} \sqcup T \sqcup (S_t \setminus T) = [n]$ and $S'_1 \sqcup \cdots S'_{t - 1} \sqcup T' \sqcup (S'_t \setminus T') = [n]$. From the definition of $T$, we must have $T \ne \emptyset$ because $i^*$ belongs to $T$, and $(S_t \setminus T) \ne \emptyset$ becase $i_t$ does not belong to $T$. From this and~\eqref{eq:restricted-equal}, these partitions are non-empty and they match. However, these matching partitions have $t + 1$ parts, which contradicts the maximality of the number of parts of $S_1 \sqcup \cdots \sqcup S_t$ and $S'_1 \sqcup \cdots \sqcup S'_t$. This concludes our proof.
\end{proof}

\subsubsection{Proof of Theorem~\ref{thm:corank-bound}}

With the characterization from the previous subsection ready, we can now easily prove our main theorem of this section (Theorem~\ref{thm:corank-bound}). We will also use two simple inequalities regarding the multinomial coefficients stated below. For completeness, we provide their proofs in the appendix.

\begin{fact} \label{fact:multichoose-additive}
For every $a_1, \dots, a_k, a'_1, \dots, a'_k \in \N$, we have
\begin{align*}
\binom{a_1 + \cdots + a_k + a'_1 + \cdots + a'_k}{a_1 + a'_1, \dots, a_k + a'_k} \geq \binom{a_1 + \cdots + a_k}{a_1, \dots, a_k} \cdot \binom{a'_1 + \cdots + a'_k}{a'_1, \dots, a'_k}
\end{align*}
\end{fact}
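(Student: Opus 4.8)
The plan is to give a one-line combinatorial proof via an injection, since both sides of Fact~\ref{fact:multichoose-additive} are counting partitions and the left-hand side counts a larger family.

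First I would introduce the shorthand $A := a_1 + \cdots + a_k$ and $A' := a'_1 + \cdots + a'_k$ and fix two disjoint ground sets $U, U'$ with $|U| = A$ and $|U'| = A'$. Recall that $\binom{A}{a_1,\dots,a_k}$ equals the number of ways to write $U$ as an ordered disjoint union $U = P_1 \sqcup \cdots \sqcup P_k$ with $|P_j| = a_j$ for every $j \in [k]$, and likewise $\binom{A'}{a'_1,\dots,a'_k}$ counts the ordered partitions $U' = P'_1 \sqcup \cdots \sqcup P'_k$ with $|P'_j| = a'_j$. Hence the right-hand side of the claimed inequality counts pairs $\big((P_j)_{j \in [k]}, (P'_j)_{j \in [k]}\big)$ of such partitions. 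The left-hand side $\binom{A + A'}{a_1 + a'_1, \dots, a_k + a'_k}$ counts the ordered partitions of the $(A+A')$-element set $U \sqcup U'$ into parts of sizes $a_1 + a'_1, \dots, a_k + a'_k$.

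Next I would define the map $\Phi$ that sends a pair of partitions as above to the ordered partition $\big(P_j \sqcup P'_j\big)_{j \in [k]}$ of $U \sqcup U'$. Since $|P_j \sqcup P'_j| = a_j + a'_j$, the image is indeed an object counted by the left-hand side, and $\Phi$ is injective because one recovers $P_j = Q_j \cap U$ and $P'_j = Q_j \cap U'$ from its output $(Q_j)_{j \in [k]}$. An injection from the set counted by the right-hand side into the set counted by the left-hand side gives the inequality, completing the proof.

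There is no real obstacle here; the only choice is presentational. One could equally well compare the coefficient of the monomial $x_1^{a_1 + a'_1} \cdots x_k^{a_k + a'_k}$ on the two sides of the polynomial identity $(x_1 + \cdots + x_k)^A \cdot (x_1 + \cdots + x_k)^{A'} = (x_1 + \cdots + x_k)^{A + A'}$: the right side contributes exactly the multinomial coefficient on the left of Fact~\ref{fact:multichoose-additive}, while expanding the left side yields a sum of products of multinomial coefficients, one of whose (nonnegative) terms is precisely the right-hand side of Fact~\ref{fact:multichoose-additive}. I would present the injection argument, as it is the shortest and needs no appeal to generating functions.
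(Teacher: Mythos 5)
Your injection argument is exactly the paper's proof, just with relabeled ground sets: the paper fixes $U = [a_1 + a'_1 + \cdots + a_k + a'_k]$, splits it as $A \sqcup B$, and maps a pair of partitions of $A$ and of $B$ to the partition $S_i = T_i \cup T'_i$ of $U$, which is the same map $\Phi$ you describe (and the same injectivity observation). Your alternative generating-function remark is a nice aside, but the argument you chose to present coincides with the paper's.
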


\begin{fact} \label{fact:multichoose-ineq}
For every $k \in \N$ and $a_1, \dots, a_k \in \N$, we have
\begin{align*}
\binom{a_1 + \cdots + a_k}{a_1, \dots, a_k} \geq \left(\frac{a_1 + \cdots + a_k}{2}\right)^{\lfloor k/2 \rfloor}
\end{align*}
\end{fact}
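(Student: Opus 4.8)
The plan is to prove Fact~\ref{fact:multichoose-ineq} by a direct ``peeling'' argument. (Note that the inequality does require every $a_i \geq 1$ — with a zero part the right-hand side can exceed the left — but this is exactly the regime in which the fact is invoked, since all parts of a matching partition are non-empty.) Write $N = a_1 + \cdots + a_k$. Since the multinomial coefficient is invariant under permuting $a_1, \dots, a_k$, I would first reorder the parts so that $a_1 \le a_2 \le \cdots \le a_k$, and then expand
\[
\binom{a_1 + \cdots + a_k}{a_1, \dots, a_k} \;=\; \prod_{j=1}^{k} \binom{a_j + a_{j+1} + \cdots + a_k}{a_j}.
\]

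The next step uses the elementary bound $\binom{m}{r} \ge m$ for $1 \le r \le m-1$ (the coefficients $\binom{m}{1}, \dots, \binom{m}{m-1}$ are all at least $\binom{m}{1} = m$ by unimodality). For each $j$ with $1 \le j \le \lfloor k/2 \rfloor$ this applies to the $j$-th factor above: there $r = a_j \ge 1$, and $r \le (a_j + \cdots + a_k) - 1$ because $j < k$ leaves at least one further positive part, so the $j$-th factor is at least $a_j + \cdots + a_k$. Then I would invoke the sorting: the $j-1$ smallest parts have average at most $N/k$, hence $a_1 + \cdots + a_{j-1} \le \tfrac{j-1}{k} N < \tfrac12 N$ (using $j - 1 \le \lfloor k/2\rfloor - 1 < k/2$), so $a_j + \cdots + a_k = N - (a_1 + \cdots + a_{j-1}) \ge N/2$. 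Multiplying the first $\lfloor k/2 \rfloor$ factors (each $\ge N/2$) and bounding the remaining factors by $1$ yields the claimed inequality.

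The only place I would expect to need care is the bookkeeping that makes the exponent $\lfloor k/2 \rfloor$ come out right: one must peel the \emph{smallest} parts first, so that each of the first $\lfloor k/2 \rfloor$ running tails $a_j + \cdots + a_k$ stays above $N/2$. A more naive approach — peeling in arbitrary order, or pairing parts and lower-bounding each pair's binomial factor by its sum — loses too much, since those tails or pair-sums can drop well below $N/2$. Everything else is a one-line estimate, and Fact~\ref{fact:multichoose-additive} is not needed for this proof.
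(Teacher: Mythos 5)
Your proof is correct and takes essentially the same approach as the paper: sort the parts increasingly, expand the multinomial coefficient as the telescoping product $\prod_{j=1}^k \binom{a_j + \cdots + a_k}{a_j}$, keep only the first $\lfloor k/2 \rfloor$ factors, and lower-bound each by the running tail $a_j + \cdots + a_k \geq N/2$, which holds because the $j-1$ discarded parts are the smallest. Your remark that the inequality silently requires every $a_i \geq 1$ (so that $\binom{a_j + \cdots + a_k}{a_j} \geq a_j + \cdots + a_k$ is applicable) is a correct caveat the paper glosses over, and it is indeed satisfied where the fact is invoked in the proof of Theorem~\ref{thm:corank-bound}, since the parts there are sizes of non-empty blocks of a matching partition.
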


\begin{proof}[Proof of Theorem~\ref{thm:corank-bound}]
Let us fix a pair of non-empty partitions $S_1 \sqcup \cdots \sqcup S_k = [n]$ and $S'_1 \sqcup \cdots \sqcup S'_k = [n]$ such that\footnote{We may assume that $|S_i| = |S'_i|$; otherwise, $\pi(S_i^{\to m})$ and $\pi'((S'_i)^{\to m})$ are obviously not equal and hence $S_1 \sqcup \cdots \sqcup S_k = [n]$ and $S'_1 \sqcup \cdots \sqcup S'_k = [n]$ do not match.} $|S_i| = |S'_i|$ for all $i \in [k]$. Notice that, when we pick $\pi: [mn] \to [mn]$ uniformly at random, $\left(\pi\left(S_1^{\to m}\right), \cdots, \pi\left(S_k^{\to m}\right)\right)$ is simply a random partition of $[mn]$ into subsets of size $m|S_1|, \dots, m|S_k|$. Hence, the probability that these partitions match is equal to
\begin{align*}
\frac{1}{\binom{mn}{m|S_1|, \dots, m|S_k|}}.
\end{align*}
Hence, by evoking Lemma~\ref{lem:corank-condition} and taking union bound over all pairs of partitions $S_1 \sqcup \cdots \sqcup S_k = [n]$ and $S'_1 \sqcup \cdots \sqcup S'_k = [n]$, we have
\begin{align*}
\Pr_{\pi, \pi' \sim \Pi_{mn}}[\defc(\bA_{\pi, \pi'}^T) \geq k] 
&\leq \sum_{\substack{S_1 \sqcup \cdots \sqcup S_k = [n], S'_1 \sqcup \cdots \sqcup S'_k = [n] \\ |S_1| = |S'_1| > 0, \dots, |S_k| = |S'_k| > 0}} \frac{1}{\binom{mn}{m|S_1|, \dots, m|S_k|}} \\
&= \sum_{\substack{a_1, \dots, a_k \in \N \\ a_1 + \cdots + a_k = n}} \sum_{\substack{S_1 \sqcup \cdots \sqcup S_k = [n], S'_1 \sqcup \cdots \sqcup S'_k = [n] \\ |S_1| = |S'_1| = a_1, \dots, |S_k| = |S'_k| = a_k}} \frac{1}{\binom{mn}{m a_1, \dots, m a_k}} \\
&= \sum_{\substack{a_1, \dots, a_k \in \N \\ a_1 + \cdots + a_k = n}} \frac{\binom{n}{a_1, \dots, a_k}^2}{\binom{mn}{ma_1, \dots, ma_k}} \\
\text{(Fact~\ref{fact:multichoose-additive})} &\leq \sum_{\substack{a_1, \dots, a_k \in \N \\ a_1 + \cdots + a_k = n}} \frac{1}{\binom{n}{a_1, \dots, a_k}^{(m - 2)}} \\
(\text{Fact}~\ref{fact:multichoose-ineq}) &\leq \sum_{\substack{a_1, \dots, a_k \in \N \\ a_1 + \cdots + a_k = n}} \frac{1}{\left(n / 2\right)^{(m - 2) \cdot \lfloor k/2 \rfloor}} \\
&\leq \frac{n^{k - 1}}{\left(n / 2\right)^{(m - 2) \cdot \lfloor k/2 \rfloor}} \\
&\leq \left(\frac{n^{2}}{(n/2)^{m - 2}}\right)^{\frac{k - 1}{2}} \qedhere
\end{align*}
\end{proof}

\section{Lower Bound Proofs}\label{sec:lb_pf}

In this section, we prove our lower bound on the number of messages (Theorem~\ref{th:message_lb}), which is a direct consequence of the following two theorems:
\begin{theorem} \label{thm:lognq-lowerbound}
     Suppose $\sigma \geq 1$. Then, for any $\sigma$-secure $n$-party aggregation protocol over $\inputring$ in which each party sends $m$ messages, we have $m = \Omega(\log_n q)$.
\end{theorem}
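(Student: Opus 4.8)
The plan is to follow the ``field-dependent'' strategy sketched in Subsection~\ref{subsec:overview}: fix an arbitrary target sum $s \in \inputring$, show that a typical pair $\bx, \bx' \in \sumset_s$ gives output distributions $\encshufdist{\bx}{\enc}, \encshufdist{\bx'}{\enc}$ that are nearly at statistical distance $1$, and then contrast this with $\sigma$-security to force $m$ to be large. The starting point is a fact about \emph{correctness} alone: a single party's encoding determines its input, i.e.\ $\supp(\enclocal{a}{\enc}) \cap \supp(\enclocal{b}{\enc}) = \emptyset$ whenever $a \neq b$. Indeed, a shared encoding $e$ could be used by party $1$ while the remaining parties use fixed encodings of fixed inputs; the analyzer would then see the very same multiset of $nm$ messages in two executions whose input sums differ by $a - b \neq 0$, contradicting that it must output the correct sum for every setting of its randomness.

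This structural fact gives the key counting bound: every $\by \in \cY^{nm}$ lies in $\supp(\encshufdist{\bx}{\enc})$ for at most $n^{nm}$ vectors $\bx \in \inputring^n$, since reconstructing a possible $\bx$ from $\by$ amounts to assigning each of the $nm$ coordinates of $\by$ to one of the $n$ parties ($n^{nm}$ choices), and each assignment fixes every party's multiset of messages and hence, by the structural fact, its input. I would then run the averaging argument (this is Lemma~\ref{lem:nqtech}). Writing $f_{\bx}$ for the mass function of $\encshufdist{\bx}{\enc}$, using $\sd(\cD_1, \cD_2) = \tfrac12 \sum_{\by} |f_1(\by) - f_2(\by)|$ and the pointwise inequality $|\alpha - \beta| \geq \alpha \cdot \mathbf{1}[\beta = 0] + \beta \cdot \mathbf{1}[\alpha = 0]$ for $\alpha, \beta \geq 0$, and noting $|\sumset_s| = q^{n-1}$ so that a uniformly random $\bx' \in \sumset_s$ has $f_{\bx'}(\by) = 0$ with probability at least $1 - n^{nm}/q^{n-1}$, one obtains
\begin{align*}
\E_{\bx, \bx' \in \sumset_s}\!\left[\sd\!\left(\encshufdist{\bx}{\enc}, \encshufdist{\bx'}{\enc}\right)\right]
&\;\geq\; \left(1 - \frac{n^{nm}}{q^{n-1}}\right) \sum_{\by} \E_{\bx \in \sumset_s}\!\left[f_{\bx}(\by)\right]\\
&\;=\; 1 - \frac{n^{nm}}{q^{n-1}},
\end{align*}
the last step because $\sum_{\by} f_{\bx}(\by) = 1$. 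Hence there is a pair $\bx, \bx' \in \sumset_s$ with $\sd(\encshufdist{\bx}{\enc}, \encshufdist{\bx'}{\enc}) \geq 1 - n^{nm}/q^{n-1}$, and this pair is distinct whenever the bound is positive (as $\sd$ vanishes on the diagonal).

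To finish, if $n^{nm} \geq q^{n-1}$ then directly $m \geq \tfrac{n-1}{n}\log_n q = \Omega(\log_n q)$. Otherwise the pair above is distinct and lies in $\sumset_s$, so $\sigma$-security together with $\sigma \geq 1$ gives $\tfrac12 \geq 2^{-\sigma} \geq 1 - n^{nm}/q^{n-1}$, i.e.\ $n^{nm} \geq \tfrac12 q^{n-1}$; taking logarithms yields $nm \log n \geq (n-1)\log q - 1$, which gives $m = \Omega(\log_n q)$ after routine casework on small $q$ (when $q \leq n$ the claim is immediate since $m \geq 1 \geq \log_n q$).

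The only step I expect to need genuine care is the preimage count ``at most $n^{nm}$'': it must be derived purely from correctness of the protocol, not from any security assumption, and one must note that a randomized analyzer causes no difficulty because it is required to output the sum for \emph{every} fixing of its coins. The remaining ingredients — the factor $2$ in the definition of statistical distance and the elementary casework relating $q$ to $n$ — are bookkeeping.
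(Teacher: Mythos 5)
Your proposal is correct and follows essentially the same path as the paper: zero-error disjointness of per-party encoding supports (Lemma~\ref{lem:zerr}), the $n^{nm}$ preimage count (Lemma~\ref{lemma:feasible_inputs}), and the averaging argument over $\sumset_s$ (Lemmas~\ref{lemma:column_distance} and~\ref{lem:nqtech}), finished off with $\sigma$-security. Your rewriting of the averaging step via the pointwise inequality $|\alpha-\beta|\geq \alpha\mathbf{1}[\beta=0]+\beta\mathbf{1}[\alpha=0]$ and your explicit casework on distinctness of the witnessing pair are just a streamlined presentation of what the paper does with $p_{\yy}$, $d_{\yy}$, and $\davg$.
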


\begin{theorem} \label{thm:sigma-lowerbound}
For any $\sigma$-secure $n$-party aggregation protocol over $\inputring$ in which each party sends $m$ messages, we have $m = \Omega\left(\frac{\sigma}{\log(\sigma n)}\right)$.
\end{theorem}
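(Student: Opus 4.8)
The plan is to reduce Theorem~\ref{thm:sigma-lowerbound} to the sharper statement $(\star)$: for every $n>2$ and every encoder $\enc$ sending $m$ messages per party, there is an input $\bx\in\sumset_0$ with $\sd\!\left(\encshufdist{\bzero}{\enc},\encshufdist{\bx}{\enc}\right)\ge (10nm)^{-5m}$. Granting $(\star)$, $\sigma$-security forces $2^{-\sigma}\ge(10nm)^{-5m}$, i.e.\ $\sigma\le 5m\log_2(10nm)$; if $m\ge\sigma$ the claimed bound is immediate (since $\log_2(\sigma n)\ge1$), and otherwise $m<\sigma$, so $\log_2(10nm)=O(\log(\sigma n))$ and rearranging gives $m=\Omega\!\left(\sigma/\log(\sigma n)\right)$. (The cases $n\le 2$ are disposed of directly.) So the task is to prove $(\star)$.

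\textbf{Warm-up: split and mix.} For intuition, take $\bx=(1,\dots,1,-(n-1))$ and let the distinguisher apply a uniformly random permutation $\pi$ to the $nm$ received messages and accept iff $y_{\pi(1)}+\dots+y_{\pi(m)}=0$. Conditioned on $\{\pi(1),\dots,\pi(m)\}$ being exactly one party's $m$ messages --- probability $n/\binom{nm}{m}\ge(en)^{-m}$ --- this sum equals that party's input, which is $0$ under $\bzero$ and always nonzero under $\bx$; conditioned on the complement, the partial sum contributed by each involved party is uniform and independent, hence the total is uniform over $\F_q$ under both inputs. Thus the acceptance gap is exactly $n/\binom{nm}{m}\ge(en)^{-m}$, already beating $(10nm)^{-5m}$.

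\textbf{General encoders.} For $a\in\F_q$ and $1\le\tau\le m$, let $D_{a,\tau}$ be the law of a uniformly random length-$\tau$ sub-tuple of $\enclocal{a}{\enc}$, and let $t(a)$ be the least $\tau$ at which $D_{a,\tau}$ and $D_{0,\tau}$ are ``substantially different.'' This is well defined with $t(a)\le m$: since the analyzer always returns the correct sum, $\encshufdist{(a,0,\dots,0)}{\enc}$ and $\encshufdist{\bzero}{\enc}$ have disjoint supports for $a\ne0$, which already forces $\enclocal{a}{\enc}\ne\enclocal{0}{\enc}$, i.e.\ $D_{a,m}\ne D_{0,m}$. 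Put $t:=\min_{a\ne0}t(a)$, let $x^*$ be a suitably chosen minimizer (see below), and take $\bx=(x^*,\dots,x^*,-(n-1)x^*)\in\sumset_0$. The distinguisher applies a random permutation and accepts iff $(y_{\pi(1)},\dots,y_{\pi(t)})$ passes an appropriate ``$\enc_0$-validity'' test for length-$t$ outputs. Split on whether $\pi(1),\dots,\pi(t)$ all come from one party: when they do --- probability $n\binom{m}{t}/\binom{nm}{t}\ge(nm)^{-m}$ --- under $\bzero$ the $t$-tuple follows $D_{0,t}$, while under $\bx$ it follows $\tfrac{n-1}{n}D_{x^*,t}+\tfrac1n D_{-(n-1)x^*,t}$; when they do not, every involved party contributes fewer than $t$ messages, so by minimality of $t$ over all $a$ the conditional law of the $t$-tuple essentially agrees under $\bzero$ and $\bx$ and this part (essentially) cancels in the acceptance gap. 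Hence the gap is, up to the controlled multi-party error, $\big(n\binom{m}{t}/\binom{nm}{t}\big)$ times the ``difference'' on the test event between $D_{0,t}$ and $\tfrac{n-1}{n}D_{x^*,t}+\tfrac1n D_{-(n-1)x^*,t}$, and it remains to choose the test and $x^*$ so that this second factor has an explicit lower bound whose product with $(nm)^{-m}$ still exceeds $(10nm)^{-5m}$.

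\textbf{Main obstacle.} The crux is exactly this last point, in two linked forms. First, the ``bad'' party $-(n-1)x^*$ might push the $t$-marginal \emph{away from} $D_{0,t}$ in the direction that makes the test event \emph{more} likely, cancelling the pull of the $n-1$ good parties; this is where $n>2$ enters. If, for every nonzero $a$, the mixture $\tfrac{n-1}{n}D_{a,t}+\tfrac1n D_{-(n-1)a,t}$ equalled $D_{0,t}$, then with $\nu_a:=D_{a,t}-D_{0,t}$ we would get $\nu_{-(n-1)a}=-(n-1)\,\nu_a$; iterating along the finite multiplicative orbit of $a$ under $x\mapsto -(n-1)x$ in $\F_q^{\times}$ (treating the degenerate case $n-1\equiv0$ in $\F_q$ separately, where it is easier) would force $(-(n-1))^{d}=1$ in $\Z$ for some $d\ge1$, impossible once $n-1\ge2$. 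So some $x^*$ makes the second factor nonzero; a quantitative version of this orbit argument, combined with a careful choice of the ``substantially different'' threshold --- small enough that the multi-party leakage stays well below the single-party signal, yet large enough that the signal survives --- turns ``nonzero'' into an explicit bound. Balancing these two quantities is the technical heart of the proof, and the slack it requires is exactly what inflates the bound from the split-and-mix value $(en)^{-m}$ to $(10nm)^{-5m}$.
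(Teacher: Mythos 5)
Your reduction to the sharper claim $(\star)$ is exactly the paper's Theorem~\ref{thm:lower-bound-n-to-m}, and the rest of the scaffolding — warm-up for split-and-mix, defining a minimal ``$t$-marginal'' $t$ at which single-party encodings become distinguishable, the distinguisher that tests a length-$t$ prefix of the shuffled output, the split of $\pi$ according to whether $\{\pi(1),\dots,\pi(t)\}$ lands inside a single party's block — all lines up with the paper. But at what you correctly identify as the crux, the proposal goes off on an unnecessary and unfinished tangent. You frame the handling of the ``bad'' last party $-(n-1)x^*$ as requiring an iteration along the multiplicative orbit of $a\mapsto -(n-1)a$ in $\F_q^\times$, and you explicitly leave the quantitative version of that argument open (``balancing these two quantities is the technical heart of the proof''). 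That is a genuine gap: the qualitative orbit contradiction shows some $a$ fails, but turning it into a uniform numerical bound requires controlling error accumulation over orbits of length up to $q-1$, and you never do so.

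The paper's resolution is much simpler and sidesteps the orbit entirely: after choosing the smallest $t$ at which $\max_{x}\sd(\enclocal{0}{\enc}|_t,\enclocal{x}{\enc}|_t)$ crosses a $t$-dependent threshold $\tfrac{1}{(10nm)^{4(m-t)}}$, one sets $x^*$ to be the \emph{argmax} of $\sd(\enclocal{0}{\enc}|_t,\enclocal{x}{\enc}|_t)$ over $x\in\F_q$, and takes the test set $H$ to be the optimal (maximum-likelihood) distinguisher for $\enclocal{0}{\enc}|_t$ vs.\ $\enclocal{x^*}{\enc}|_t$. Then the permutations with $C_\pi=t$ landing in the last party's block contribute at most $\sd(\enclocal{0}{\enc}|_t,\enclocal{-(n-1)x^*}{\enc}|_t)\le\sd(\enclocal{0}{\enc}|_t,\enclocal{x^*}{\enc}|_t)$ to the gap by maximality of $x^*$ — so the one bad party can cancel at most one of the $n-1$ good parties, and $n>2$ finishes the job (this is exactly Lemmas~\ref{lem:dist-lower}--\ref{lem:prob-last-party} in the paper). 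In fact, if you reformulate your orbit idea as a one-step inequality — set $\nu_a=D_{a,t}-D_{0,t}$, let $x^*$ maximize $\|\nu_a\|$, then $\|\nu_{-(n-1)x^*}\|\le\|\nu_{x^*}\|$ gives the needed bound immediately — you recover the paper's argument without any iteration. You also leave the threshold for ``substantially different'' unspecified; the $t$-dependent choice $\tfrac{1}{(10nm)^{4(m-t)}}$ in the paper is what makes the Category III (``cross-term'') error, which scales like $\sum_{j<t}(nm)^{3(t-j)}m/(10nm)^{4(m-j)}$, provably subdominant to the Category I signal, and that balance is essential to the final $(10nm)^{-5m}$ bound.
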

We prove Theorem~\ref{thm:lognq-lowerbound} in Section~\ref{subsec:field}, while we prove Theorem~\ref{thm:sigma-lowerbound} in Section~\ref{subsec:sec_dep_lb}. Before we proceed to the proofs, let us start by proving the following fact that will be used in both proofs: the output of the encoder on a party's input must uniquely determine the input held by the party.
\begin{lemma} \label{lem:zerr}
 For any $n$-party aggregation protocol $\cP$ with encoder $\enc: \inputring \to [\ell]^m$, we have that for any $x, x' \in \inputring$ with $x\neq x'$, the distributions $\enclocal{x}{\enc}$ and $\enclocal{x'}{\enc}$ have disjoint supports. 
 
 As a consequence, for any output vector $\yy \in [\ell]^{nm}$, there exists at most one $\bx = (x_1, x_2, \dots, x_n) \in\inputring^n$ such that $\yy$ is a possible output $(\enc_{x_1}, \enc_{x_2}, \dots, \enc_{x_n})$.
\end{lemma}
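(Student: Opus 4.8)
The plan is to prove the support-disjointness claim by contradiction, using the only global property of an aggregation protocol that we have: on every input the analyzer outputs the \emph{exact} sum, with probability $1$ over the encoder and shuffler randomness. Suppose, for contradiction, that there are $x \neq x'$ in $\inputring$ and a tuple $\bm \in [\ell]^m$ with $\bm \in \supp(\enclocal{x}{\enc}) \cap \supp(\enclocal{x'}{\enc})$. Fix any tuples $\bm_2, \dots, \bm_n \in \supp(\enclocal{0}{\enc})$; these exist because $\enclocal{0}{\enc}$ is a probability distribution on the finite set $[\ell]^m$. Now compare the two input vectors $\bx = (x, 0, \dots, 0)$ and $\bx' = (x', 0, \dots, 0)$, whose coordinate sums are $x$ and $x'$ and therefore differ.

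Since the $n$ parties use mutually independent private randomness, in the execution on $\bx$ the event ``party $1$ outputs $\bm$, and party $i$ outputs $\bm_i$ for every $i \in \{2, \dots, n\}$'' occurs with strictly positive probability; conditioned on it, whatever permutation $\pi$ the shuffler picks the analyzer is handed the transcript $\pi(\bm, \bm_2, \dots, \bm_n)$, and correctness forces its value on this argument to be $x$. The identical transcript arises with positive probability in the execution on $\bx'$, because $\bm \in \supp(\enclocal{x'}{\enc})$ allows party $1$ to emit $\bm$ while parties $2, \dots, n$ emit $\bm_2, \dots, \bm_n$ exactly as before; now correctness forces the analyzer's value on $\pi(\bm, \bm_2, \dots, \bm_n)$ to be $x'$. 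Since the analyzer is a fixed function (and in any case must be correct with probability $1$), this gives $x = x'$, a contradiction. For $n = 1$ the argument degenerates to the observation that the analyzer applied to $\pi(\bm)$ would have to equal both $x$ and $x'$.

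For the stated consequence, write a candidate output $\yy \in [\ell]^{nm}$ in per-party blocks $\yy = (y^{(1)}, \dots, y^{(n)})$ with each $y^{(i)} \in [\ell]^m$. If $\yy$ is a possible value of $(\enc_{x_1}, \dots, \enc_{x_n})$, then $y^{(i)} \in \supp(\enclocal{x_i}{\enc})$ for every $i \in [n]$. By the first part, the family $\{\supp(\enclocal{a}{\enc})\}_{a \in \inputring}$ is pairwise disjoint, so each block $y^{(i)}$ lies in the support of $\enclocal{a}{\enc}$ for at most one $a \in \inputring$; this pins down each $x_i$ uniquely, and hence $\bx$ is unique.

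I expect essentially no technical obstacle here: the argument is a short collision/pigeonhole argument requiring no estimates. The only point that needs care is the way correctness is invoked — the definition of an aggregation protocol guarantees $\analyzer(\pi(\enc_{x_1}, \dots, \enc_{x_n})) = \sum_i x_i$ with probability $1$, so it suffices to exhibit a single shuffled transcript reachable with positive probability from two input vectors with different sums. Padding the ``collision coordinate'' with the all-zero inputs of the remaining $n - 1$ parties is precisely what makes the two global sums differ while keeping the resulting transcript the same.
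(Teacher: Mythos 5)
Your proof is correct and takes essentially the same route as the paper's: pick a collision point $\bm$ in the overlapping supports, pad the other $n-1$ coordinates with zero-input encodings, and observe that the analyzer, which must be correct with probability $1$, would be forced to output two distinct sums on the same transcript. The only cosmetic differences are that you spell out the $n=1$ case and give an explicit per-block argument for the consequence (which the paper treats as immediate).
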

\begin{proof}
For the sake of contradiction, suppose there exist $x,x'\in\inputring$ with $x\neq x'$ such that $\enclocal{x}{\enc}$ and $\enclocal{x'}{\enc}$ have a common element in the support, say $\mathbf{z}$. Then, let $\mathbf{z}' \in [\ell]^m$ be an element in the support of $\enclocal{0}{\enc}$. Note that it follows that $(\mathbf{z}, \underbrace{\mathbf{z}', \mathbf{z}', \dots, \mathbf{z}'}_{n-1})$ is a possible output of inputs $(x, \bzero^{n-1})$ and $(x', \bzero^{n-1})$, which means that the analyzer cannot uniquely determine the parties' inputs from the output, thereby contradicting the correctness of the protocol. This completes the proof.
\end{proof}

\subsection{Field-Dependent Bound} \label{subsec:field}
We now present the proof of Theorem~\ref{thm:lognq-lowerbound}. Recall from Section~\ref{subsec:overview} that $\sumset_s$ is defined as $\{\bx \in \F_q^n \mid \sum_i x_i = s\}$. The key technical lemma is the following.
\begin{lemma}\label{lem:nqtech}
    For each $s\in \inputring$ and every $n$-user one-round aggregation protocol $\cP$ in the anonymized model with encoder $\enc: \inputring \to [\ell]^m$, there exists a pair of inputs $\bb, \bb' \in \sumset_s$ such that $\sd\left(\encshufdist{\bb}{\enc}, \encshufdist{\bb'}{\enc}\right) \geq 1 - n^{nm}/q^{n-1}$.
\end{lemma}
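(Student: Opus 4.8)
The plan is to prove the stronger claim that the \emph{average} of $\sd\bigl(\encshufdist{\bb}{\enc},\encshufdist{\bb'}{\enc}\bigr)$, taken over a pair $\bb,\bb'$ drawn independently and uniformly at random from $\sumset_s$, is at least $1 - n^{nm}/q^{n-1}$. Since a maximum is always at least an average, this immediately produces a pair $\bb,\bb' \in \sumset_s$ attaining the claimed bound; moreover, whenever $1 - n^{nm}/q^{n-1} > 0$ the attaining pair is automatically distinct, because the statistical distance from a distribution to itself is $0$.

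Two ingredients feed into this. First, $|\sumset_s| = q^{n-1}$, since $x_1,\dots,x_{n-1}$ may be chosen freely and $x_n$ is then forced by the constraint $\sum_i x_i = s$. Second --- and this is the heart of the matter --- \emph{for every transcript $\yy \in [\ell]^{nm}$ there are at most $n^{nm}$ inputs $\bb \in \inputring^n$ with $\yy \in \supp(\encshufdist{\bb}{\enc})$}. To see this, observe that if $\yy$ is a possible shuffled output of $\bb$ then there is an assignment of each of the $nm$ messages comprising $\yy$ to one of the $n$ parties such that, for every $i$, the bundle of messages assigned to party $i$ is a possible output of $\enc_{b_i}$. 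By Lemma~\ref{lem:zerr}, the bundle of messages received from a party uniquely determines that party's input, so the assignment determines $\bb$ completely. As there are at most $n^{nm}$ functions from the $nm$ messages to the $n$ parties, at most $n^{nm}$ inputs can give rise to $\yy$; a fortiori at most $n^{nm}$ inputs in $\sumset_s$ can.

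For the averaging step I would first lower-bound a single statistical distance by keeping only its ``one-sided'' contributions: for any $\bb,\bb'$,
\[
\sd\bigl(\encshufdist{\bb}{\enc},\encshufdist{\bb'}{\enc}\bigr)
\;=\; \sum_{\yy} \max\bigl\{ f_{\encshufdist{\bb}{\enc}}(\yy) - f_{\encshufdist{\bb'}{\enc}}(\yy),\ 0 \bigr\}
\;\ge\; \sum_{\yy \,\notin\, \supp(\encshufdist{\bb'}{\enc})} f_{\encshufdist{\bb}{\enc}}(\yy),
\]
since for such $\yy$ the quantity inside the maximum is exactly $f_{\encshufdist{\bb}{\enc}}(\yy) \ge 0$. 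Taking the expectation over independent $\bb,\bb' \sim \sumset_s$ and exchanging it with the sum over $\yy$, the contribution of each $\yy$ becomes $\E_{\bb}\bigl[ f_{\encshufdist{\bb}{\enc}}(\yy) \bigr] \cdot \Pr_{\bb'}\bigl[ \yy \notin \supp(\encshufdist{\bb'}{\enc}) \bigr]$. By the second ingredient, $\Pr_{\bb' \sim \sumset_s}\bigl[ \yy \in \supp(\encshufdist{\bb'}{\enc}) \bigr] \le n^{nm}/q^{n-1}$, so this probability is at least $1 - n^{nm}/q^{n-1}$ for every $\yy$. Pulling out this uniform factor and using $\sum_{\yy} \E_{\bb}\bigl[ f_{\encshufdist{\bb}{\enc}}(\yy) \bigr] = 1$ gives $\E_{\bb,\bb'}\bigl[ \sd(\encshufdist{\bb}{\enc},\encshufdist{\bb'}{\enc}) \bigr] \ge 1 - n^{nm}/q^{n-1}$, which finishes the argument.

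I expect the main obstacle to be the second ingredient. The count $|\sumset_s| = q^{n-1}$ and the averaging manipulation are routine, but obtaining the tight bound $n^{nm}$ (rather than the weaker $(nm)!$ that one gets by counting orderings of the transcript and applying the uniqueness consequence of Lemma~\ref{lem:zerr} block-by-block) requires counting assignments of \emph{messages to parties} and invoking the uniqueness guarantee in the anonymized model, where only the unordered collection of messages is revealed to the analyzer.
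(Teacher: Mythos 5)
Your proposal is correct and follows essentially the same route as the paper: lower bound the average of $\sd\left(\encshufdist{\bb}{\enc}, \encshufdist{\bb'}{\enc}\right)$ over pairs in $\sumset_s$, with the heart of the argument being the exact same counting lemma (at most $n^{nm}$ inputs can produce a given transcript, via message-to-party assignments together with Lemma~\ref{lem:zerr}). The paper organizes the averaging through intermediate per-transcript quantities $d_{\yy}$ and $p_{\yy}$ and works with $|p_{\bb,\yy}-p_{\bb',\yy}|$, whereas you work with the one-sided form of the statistical distance and pull out a uniform probability factor, but these are just two bookkeeping styles for the same calculation.
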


Throughout this subsection, let us fix $s \in \F_q$.
Before proving Lemma~\ref{lem:nqtech}, we first define some notation. For every possible shuffler output vector $\yy$ and input $\bb\in \sumset_s$, let $p_{\bb,\yy}$ denote the probability that on input $\bb$ the encoder outputs $\yy$, i.e., $\Pr_{Y \sim \encshufdist{\bb}{\enc}}[Y = \by]$. Moreover, let $\invset_{\yy} = \{ \bb\in \sumset_s \; | \; p_{\bb,\yy} > 0\}$ denote the set of sum-$s$ inputs that are possible given that the output is $\yy$.

\begin{lemma}\label{lemma:feasible_inputs}
    $|\invset_{\yy}| \leq n^{nm}$.
\end{lemma}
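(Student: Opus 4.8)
The plan is to bound $|\invset_{\yy}|$ by counting the ways a fixed shuffled output $\yy$ could have been produced. I would start by recalling the consequence of Lemma~\ref{lem:zerr}: the encoder of a single party is injective on inputs, so a party's $m$-message encoding (as a tuple in $[\ell]^m$) uniquely determines that party's input. The key observation is that once we decide which $m$ of the $nm$ message-slots of $\yy$ came from party $1$, which $m$ came from party $2$, and so on, the actual input vector $\bx \in \inputring^n$ is completely pinned down (if it is consistent with $\enc$ at all; otherwise that assignment contributes nothing). So every $\bx \in \invset_{\yy}$ arises from at least one such assignment of the $nm$ slots to the $n$ parties.

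Concretely, I would argue as follows. Fix any $\bb \in \invset_{\yy}$. Since $p_{\bb,\yy} > 0$, there is some way of writing $\yy$ (up to the shuffle permutation) as a concatenation $(\enc_{b_1}, \dots, \enc_{b_n})$; equivalently, there is a function $g : [nm] \to [n]$ assigning each coordinate of $\yy$ to a party such that the multiset of coordinates mapped to party $i$, read in the appropriate order, is a valid output of $\enc_{b_i}$. The number of such functions $g$ (before imposing any consistency) is at most $n^{nm}$, since each of the $nm$ coordinates gets one of $n$ labels. By Lemma~\ref{lem:zerr}, the tuple of coordinates assigned to party $i$ determines $b_i$ uniquely, so $g$ determines the whole vector $\bb$. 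Hence the map $\bb \mapsto (\text{some witnessing } g)$ is an injection from $\invset_{\yy}$ into the set of all functions $[nm]\to[n]$, which has size $n^{nm}$. Therefore $|\invset_{\yy}| \leq n^{nm}$, as claimed.

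One should be a little careful about whether "order within a party" matters and about the role of the shuffle permutation. The cleanest phrasing: a witnessing configuration for $\bb \in \invset_{\yy}$ is a bijection $\pi$ on $[nm]$ together with the induced block structure; but what we really extract is just which coordinates of $\yy$ form party $i$'s block, i.e., a set partition of $[nm]$ into $n$ labeled blocks of size $m$ — and the number of those is $\binom{nm}{m,m,\dots,m} \le n^{nm}$, which suffices. Since $\enc_{b_i}$ outputs an element of $[\ell]^m$, the labeled block (as a multiset, or in any fixed canonical order) is enough to recover $b_i$ via injectivity of the single-party encoder. This is really the only subtle point, and it is minor; the counting itself is immediate. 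I do not expect any genuine obstacle here — the lemma is a straightforward pigeonhole/counting argument, and the substantive work has already been done in establishing injectivity of the encoder (Lemma~\ref{lem:zerr}).
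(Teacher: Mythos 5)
Your proof is correct and follows essentially the same approach as the paper's: define a labeling $g:[nm]\to[n]$ of message slots to parties, observe there are at most $n^{nm}$ such labelings, and use the injectivity from Lemma~\ref{lem:zerr} to conclude that $\yy$ together with $g$ determines at most one input vector. Your closing remark that the count could be tightened to $\binom{nm}{m,\dots,m}$ is a nice observation but not needed for the stated bound.
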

\begin{proof}
    Suppose $\yy$ is an output vector consisting of $nm$ messages with $|\invset_{\yy}|>0$. Consider a function $g: [nm] \rightarrow [n]$ that associates each of the $mn$ messages to a single party. Note that $\yy$ and $g$ uniquely identify the set of messages $Y_i$ sent by each party $i$.
    In turn, $Y_i$ must correspond to a unique input $x_i$ to party $i$ by Lemma~\ref{lem:zerr}. Then, it follows that $\yy$ and $g$ can determine at most one input $\bb\in \invset_{\yy}$.
    Since there are at most $n^{nm}$ valid functions $g$, the desired bound on $|\invset_{\yy}|$ follows.
\end{proof}

Let $p_{\yy} = \sum_{\bb\in\sumset_s} p_{\bb,\yy}$, and define $d_{\yy} = \tfrac{1}{q^{2n - 2}} \sum_{\bb\in \sumset_s}\sum_{\bb'\in \sumset_s} |p_{\bb,\yy} - p_{\bb',\yy}|$ as the average difference between probabilities $p_{\bb,\yy}$ and $p_{\bb',\yy}$ over all pairs of inputs $\bb, \bb'$ with sum $s$. Then, we have the following lemma.
\begin{lemma}\label{lemma:column_distance}
    %If $N^{n-1} > n^{nm}$ then 
    $d_{\yy} \geq 2\left(1-\frac{n^{nm}}{q^{n - 1}}\right) p_{\yy} / q^{n - 1}$.
\end{lemma}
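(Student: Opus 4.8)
The plan is to fix an arbitrary output vector $\yy$ and lower-bound the ``average pairwise difference'' $d_{\yy}$ by exploiting the scarcity of feasible inputs guaranteed by Lemma~\ref{lemma:feasible_inputs}. First I would observe that for any pair $\bb, \bb' \in \sumset_s$ with $\bb \in \invset_{\yy}$ and $\bb' \notin \invset_{\yy}$, we have $|p_{\bb, \yy} - p_{\bb', \yy}| = p_{\bb, \yy}$ (since $p_{\bb', \yy} = 0$). Restricting the double sum defining $d_{\yy}$ to just these ``mixed'' pairs therefore gives the lower bound
\begin{align*}
d_{\yy} \;\geq\; \frac{1}{q^{2n - 2}} \sum_{\bb \in \invset_{\yy}} \sum_{\bb' \in \sumset_s \setminus \invset_{\yy}} |p_{\bb, \yy} - p_{\bb', \yy}| \;=\; \frac{1}{q^{2n - 2}} \left(|\sumset_s| - |\invset_{\yy}|\right) \sum_{\bb \in \invset_{\yy}} p_{\bb, \yy}.
\end{align*}

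Next I would plug in the two quantitative facts: $|\sumset_s| = q^{n-1}$ (since fixing the sum leaves $n-1$ free coordinates over $\F_q$), and $|\invset_{\yy}| \leq n^{nm}$ from Lemma~\ref{lemma:feasible_inputs}. Also, since $p_{\bb, \yy} = 0$ for $\bb \notin \invset_{\yy}$, we have $\sum_{\bb \in \invset_{\yy}} p_{\bb, \yy} = \sum_{\bb \in \sumset_s} p_{\bb, \yy} = p_{\yy}$. Substituting all of this in yields
\begin{align*}
d_{\yy} \;\geq\; \frac{q^{n-1} - n^{nm}}{q^{2n-2}} \cdot p_{\yy} \;=\; \left(1 - \frac{n^{nm}}{q^{n-1}}\right) \frac{p_{\yy}}{q^{n-1}},
\end{align*}
which is half of the claimed bound; the factor of $2$ in the statement presumably comes from also counting the symmetric mixed pairs (with the roles of $\bb$ and $\bb'$ swapped), which doubles the restricted sum since $d_{\yy}$ is defined with an unordered-looking but actually ordered double sum over all of $\sumset_s \times \sumset_s$. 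I would simply note that each unordered mixed pair $\{\bb, \bb'\}$ is counted twice in the full double sum, giving the extra factor of $2$.

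I do not anticipate a serious obstacle here; the only subtlety is bookkeeping the factor of $2$ correctly (ordered vs.\ unordered pairs in the definition of $d_{\yy}$), and being careful that the bound is vacuous but still true when $n^{nm} \geq q^{n-1}$ (the right-hand side becomes nonpositive, so the inequality holds trivially since $d_{\yy} \geq 0$). Once Lemma~\ref{lemma:column_distance} is established, summing over all $\yy$ and relating $\sum_{\yy} d_{\yy}$ to the average statistical distance $\frac{1}{q^{2n-2}}\sum_{\bb, \bb'} \sd(\encshufdist{\bb}{\enc}, \encshufdist{\bb'}{\enc})$ (using $\sd(\cD_1, \cD_2) = \frac12 \sum_{\yy} |p_{1,\yy} - p_{2,\yy}|$) together with $\sum_{\yy} p_{\yy} = q^{n-1}$ will give Lemma~\ref{lem:nqtech} by an averaging argument, so that some concrete pair $\bb, \bb'$ achieves at least the average.
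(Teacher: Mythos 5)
Your proposal is correct and follows essentially the same argument as the paper: restrict the double sum to ``mixed'' pairs $(\bb,\bb')$ with $\bb \in \invset_{\yy}$ and $\bb' \notin \invset_{\yy}$, note that $|p_{\bb,\yy}-p_{\bb',\yy}| = p_{\bb,\yy}$ for such pairs, and invoke Lemma~\ref{lemma:feasible_inputs} together with $|\sumset_s| = q^{n-1}$ and $\sum_{\bb} p_{\bb,\yy} = p_{\yy}$. The only cosmetic difference is that the paper builds the factor of $2$ into the first inequality by implicitly counting both orderings, whereas you introduce it afterwards by symmetry --- both are fine.
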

\begin{proof} We have
    \begin{align*}
    q^{2n - 2} d_{\yy} &\geq 2 \sum_{\bb\in \invset_{\yy}}\sum_{\bb'\in \sumset_s\setminus \invset_{\yy}} |p_{\bb,\yy}- \bf{0}|\\
    &= 2\, |\sumset_s \setminus \invset_{\yy}| \sum_{\bb\in \invset_{\yy}} p_{\bb,\yy}\\
    &= 2 \left(q^{n-1} - |\invset_{\yy}|\right) p_{\yy}\\
    \text{(Lemma~\ref{lemma:feasible_inputs})} &\geq 2 \left(q^{n-1} - n^{nm}\right) p_\yy. \qedhere
    \end{align*}
\end{proof}
We now prove Lemma~\ref{lem:nqtech}.
\begin{proof}[Proof of Lemma~\ref{lem:nqtech}]
We will in fact show the stronger statement that the (scaled) \emph{average} statistical distance for pairs of inputs in $\sumset_s$ is lower bounded by $1 - n^{nm}/q^{n-1}$, i.e.,
\[
  \davg \geq 1 - \frac{n^{nm}}{q^{n-1}},
\]
where
\begin{equation}
  \davg = \frac{1}{q^{2n-2}} \sum_{\bb\in\sumset_s} \sum_{\bb'\in\sumset_s} \sd\left(\encshufdist{\bb}{\enc}, \encshufdist{\bb'}{\enc}\right). \label{eq:davg}
\end{equation}
Note that by Lemma~\ref{lemma:column_distance}, we have
\begin{align*}
    \davg &= \sum_{\yy} \frac{d_\yy}{2}\\
          &\geq \frac{1}{q^{n-1}} \left(1-\frac{n^{nm}}{q^{n - 1}}\right) \sum_{\yy} p_{\yy}\\
          &\geq \frac{1}{q^{n-1}} \left(1-\frac{n^{nm}}{q^{n - 1}}\right) \sum_{\yy} \sum_{\bb\in\sumset_s} p_{\bb,\yy}\\
          &= 1-\frac{n^{nm}}{q^{n - 1}},
\end{align*}
where the last line follows from the fact that $|\sumset_s| = q^{n-1}$. To conclude, note that it follows that at least one of the summands in \eqref{eq:davg} must be at least $1-\frac{n^{nm}}{q^{n - 1}}$, as desired.
\end{proof}

Theorem~\ref{thm:lognq-lowerbound} now follows easily from Lemma~\ref{lem:nqtech}.
\begin{proof}[Proof of Theorem~\ref{thm:lognq-lowerbound}]
 Suppose $\cP$ is such a $\sigma$-secure $n$-party aggregation protocol with encoder $\enc: \F_q \to [\ell]^m$. Then, choose an arbitrary $s\in\inputring$. Note that by Lemma~\ref{lem:nqtech}, there exist $\bb, \bb' \in \sumset_s$ such that
 \[
   2^{-\sigma} \geq \sd\left(\encshufdist{\bb}{\enc}, \encshufdist{\bb'}{\enc}\right) \geq 1 - \frac{n^{nm}}{q^{n-1}}.
 \]
Thus, if $\sigma \geq 1$, it follows that $m = \Omega(\log_n q)$, as desired.
\end{proof}

\subsection{Security-Dependent Bound}\label{subsec:sec_dep_lb}
We now turn to the proof of Theorem~\ref{thm:sigma-lowerbound}, which follows from the next theorem.
\begin{theorem} \label{thm:lower-bound-n-to-m}
Let $\enc$ be the encoder of any summation protocol for $n > 2$ parties with $m$ messages sent per party. Then, there exists a vector $\bx \in \sumset_0$ such that the statistical distance between $\encshufdist{\bzero}{\enc}$ and $\encshufdist{\bx}{\enc}$ is at least $\frac{1}{(10nm)^{5m}}$.
\end{theorem}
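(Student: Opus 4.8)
The plan is to construct an explicit distinguisher that separates $\encshufdist{\bzero}{\enc}$ from $\encshufdist{\bx}{\enc}$ for a suitable $\bx \in \sumset_0$. Following the overview, I would first identify the "threshold" $t \le m$ at which the encoder's marginals start to differ: for $x \in \F_q$ and $t \in [m]$, let $\enclocal{x}{\enc}|_t$ denote the distribution of an (unordered) size-$t$ sub-multiset of a random encoding of $x$, i.e., the marginal on $t$ of the $m$ messages. Since $\enclocal{0}{\enc}$ and $\enclocal{x}{\enc}$ have disjoint supports for $x \ne 0$ (Lemma~\ref{lem:zerr}), the full $t=m$ marginals are maximally far apart, so for each $x \ne 0$ there is a smallest $t_x$ with $\sd(\enclocal{0}{\enc}|_{t_x}, \enclocal{x}{\enc}|_{t_x})$ bounded below by some fixed constant (say $\ge 1/\ell^{m}$, or a cleaner combinatorial bound). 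I would then set $x^* \in \F_q \setminus \{0\}$ to be a minimizer of $t_x$, let $t := t_{x^*}$, and take the input vector $\bx = (x^*, \dots, x^*, -(n-1)x^*)$, which lies in $\sumset_0$ (here $n > 2$ is needed so that $\bx$ has at least two coordinates equal to $x^*$; actually we need enough $x^*$-coordinates, so set $\bx = (\underbrace{x^*,\dots,x^*}_{n-1}, -(n-1)x^*)$).

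The distinguisher $\cA$ works as follows: given the shuffled output $\by \in [\ell]^{nm}$, pick a uniformly random size-$t$ sub-multiset of the $nm$ messages and accept iff that sub-multiset lies in the support of $\enclocal{0}{\enc}|_t$ (equivalently, is a possible $t$-message marginal of an encoding of $0$). The key computation is to compare $\Pr[\cA \text{ accepts} \mid \by \sim \encshufdist{\bzero}{\enc}]$ with $\Pr[\cA \text{ accepts} \mid \by \sim \encshufdist{\bx}{\enc}]$. In both cases, condition on which parties the $t$ sampled messages come from. With probability $\ge 1/(nm)^{t} \ge 1/(nm)^m$ (roughly), all $t$ sampled messages come from a single fixed party, say party $i$; conditioned on that event, under $\encshufdist{\bzero}{\enc}$ the sub-multiset is distributed as $\enclocal{0}{\enc}|_t$ (acceptance probability $1$ minus the mass of $\enclocal{0}{\enc}|_t$ outside its own support, i.e., exactly $1$), while under $\encshufdist{\bx}{\enc}$ party $i$ holds either $x^*$ or $-(n-1)x^*$, so the conditional acceptance probability is the support-mass of $\enclocal{0}{\enc}|_t$ under $\enclocal{x^*}{\enc}|_t$ or $\enclocal{-(n-1)x^*}{\enc}|_t$, both strictly less than $1$ by the minimality of $t$ (for $x^*$ this is immediate; for $-(n-1)x^*$ we use that $t_{-(n-1)x^*} \ge t = t_{x^*}$, so its $t$-marginal still has disjoint... no — rather its $t$-marginal differs from that of $0$, giving a gap). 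For the mixed events (sampled messages coming from $\ge 2$ parties), one argues these contribute the \emph{same} acceptance probability to both distributions up to lower-order terms, because any proper sub-multiset split across parties, for messages corresponding to $t' < t$ messages from a given party, has identical marginals for input $0$ and input $x^*$ (by minimality of $t$ again). Collecting terms, the acceptance-probability gap is at least (prob.\ of the all-one-party event) $\times$ (conditional gap) $\ge \frac{1}{(nm)^{m}} \cdot \frac{1}{\text{poly}(\ell^m)}$, which after bookkeeping is $\ge \frac{1}{(10nm)^{5m}}$.

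The main obstacle is the mixed-split analysis: when the $t$ sampled messages are distributed as $t_1, \dots, t_r$ messages from $r \ge 2$ distinct parties, we must show the joint distribution of these message-groups is the same under $\bzero$ and under $\bx$. This reduces to: for every $t' < t$ and every $x \in \{x^*, -(n-1)x^*\}$ (the values appearing in $\bx$), the $t'$-message marginal $\enclocal{x}{\enc}|_{t'}$ equals $\enclocal{0}{\enc}|_{t'}$ — but wait, this need not be an equality, only that $\sd(\enclocal{0}{\enc}|_{t'}, \enclocal{x}{\enc}|_{t'})$ is small (by minimality of $t = t_{x^*} \le t_{x}$ for $x = -(n-1)x^*$, and $t' < t$). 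So the mixed-split contributions to the two acceptance probabilities differ by at most a sum of these small statistical distances, which is bounded by $t \cdot \max_{t' < t} \sd(\enclocal{0}{\enc}|_{t'}, \enclocal{x}{\enc}|_{t'})$, and by the choice of threshold this is at most (a constant fraction of) the main-term gap. Making the constant-vs-threshold tradeoff quantitatively work out to the stated $\frac{1}{(10nm)^{5m}}$ — in particular pinning down what "substantially different" means for the threshold $t$ (I would define $t_x$ via $\sd(\enclocal{0}{\enc}|_{t_x}, \enclocal{x}{\enc}|_{t_x}) > 1/2$ or a similarly clean cutoff) and verifying the union bound over the $\le n^m$ (or fewer) party-split patterns still leaves a positive gap — is where the careful accounting lies; everything else is a direct expansion of the shuffled distribution into per-party contributions.
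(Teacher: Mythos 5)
Your proposal follows the same high-level blueprint as the paper --- find a threshold $t$ at which some marginal separates, take $\bx = (x^*,\dots,x^*,-(n-1)x^*)$, build a distinguisher that examines a random size-$t$ tuple, and case-analyze by how many of those $t$ messages come from a single party --- and you correctly identify the mixed-split analysis as the crux. But there is a genuine gap precisely there, and it is not just bookkeeping.

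The problem is the fixed-cutoff definition of $t_x$ (``$\sd > 1/2$ or a similarly clean cutoff''). With any constant threshold $\tau$, the main-term gap coming from single-party samples is only about $\Pr[\text{all }t\text{ from one party}]\cdot\tau \approx \frac{n\binom{m}{t}}{\binom{nm}{t}}\cdot\tau$, which is exponentially small in $t$. Meanwhile, for $j<t$ the marginal distances $\sd(\enclocal{0}{\enc}|_j, \enclocal{x}{\enc}|_j)$ are merely bounded by the \emph{same} constant $\tau$ (just below, not tiny), and the total probability mass of mixed patterns is close to $1$. So the bound you claim for the cross-terms, roughly $\sum_{j<t}\Pr[C_\pi=j]\cdot m\cdot\tau$, can be order $m\tau$, which completely swamps the main term. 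There is no ``constant fraction'' argument; the accounting simply does not close. The paper's fix is to use a $t$-\emph{dependent} threshold, defining $t$ as the least value for which $\max_x \sd(\enclocal{0}{\enc}|_t, \enclocal{x}{\enc}|_t) \ge \frac{1}{(10nm)^{4(m-t)}}$. Because this threshold shrinks geometrically as $t$ decreases, the conditional distance when $C_\pi = j < t$ is $< \frac{1}{(10nm)^{4(m-j)}}$, i.e., smaller than the main-term distance by a factor $(10nm)^{4(t-j)}$, which is more than enough to absorb the $(nm)^{O(t-j)}$ blow-up in the probability $\Pr[C_\pi = j]$ relative to $\Pr[C_\pi = t]$ (Lemma~\ref{lem:prob-bound-small-marginal}). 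A secondary issue: your distinguisher accepts iff the sampled tuple lies in $\supp(\enclocal{0}{\enc}|_t)$, but for $t<m$ this support can contain all of $\supp(\enclocal{x^*}{\enc}|_t)$, giving gap $0$ even when $\sd$ is large. You should instead accept on $H = \{\by : \enclocal{0}{\enc}|_t(\by) > \enclocal{x^*}{\enc}|_t(\by)\}$, so the conditional single-party gap is exactly $\sd(\enclocal{0}{\enc}|_t, \enclocal{x^*}{\enc}|_t)$ rather than some possibly vanishing support overlap.
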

It is not hard to see that Theorem~\ref{thm:sigma-lowerbound} follows from Theorem~\ref{thm:lower-bound-n-to-m}:
\begin{proof}[Proof of Theorem~\ref{thm:sigma-lowerbound}]
 Simply note that by Theorem~\ref{thm:lower-bound-n-to-m} and the definition of $\sigma$-security, we can find $\bx\in\sumset_0$ such that
 \[
   2^{-\sigma} \geq \sd\left(\encshufdist{\bzero}{\enc}, \encshufdist{\bx}{\enc}\right) \geq \frac{1}{(10nm)^{5m}},
 \]
 which immediately implies that $m = \Omega\left(\frac{\sigma}{\log(\sigma n)}\right)$, as desired.
\end{proof}
Henceforth, we focus on proving Theorem~\ref{thm:lower-bound-n-to-m}.

\subsubsection*{Warm-up: Proof of Theorem~\ref{thm:lower-bound-n-to-m} for Ishai et al.'s protocol.}
Before we prove Theorem~\ref{thm:lower-bound-n-to-m} for the general case, let us sketch a proof specific to Ishai et al.'s protocol. The input vector $\bx$ we will use is simply $\bx = (1, \cdots, 1, -(n - 1))$.

To lower bound $\sd(\encshufdist{\bzero}{}, \encshufdist{\bx}{})$, we give a ``distinguisher'' $\cA$ that takes in the output $(y_1, \dots, y_{\pi(mn)})$ of the shuffler and outputs either 1 (i.e. ``accept'') or 0 (i.e. ``reject''). Its key property will be that the probability that $\cA$ accepts when $(y_{\pi(1)}, \dots, y_{\pi(mn)}) \sim \encshufdist{\bzero}{}$ is more than that of when $(y_{\pi(1)}, \dots, y_{\pi(mn)}) \sim \encshufdist{\bx}{}$ by an additive factor of $\frac{1}{(en)^m}$. This immediately implies that the distributions $\encshufdist{\bzero}{}$ and $\encshufdist{\bx}{}$ are at a statistical distance of at least $\frac{1}{(en)^m}$ as well. (Note that this bound is slightly better than the one in Theorem~\ref{thm:lower-bound-n-to-m}.)

The distinguisher $\cA$ is incredibly simple here: $\cA$ accepts iff $y_{\pi(1)} + \cdots + y_{\pi(m)} = 0$. To see that it satisfies the claim property, observe that, when $\pi(1), \dots, \pi(m)$ not all come from the same party, $y_{\pi(1)} + \cdots + y_{\pi(m)}$ is simply a random number in $\F_q$, meaning that $\cA$ accepts with probability $1/q$ (in both distributions). On the other hand, when $\pi(1), \dots, \pi(m)$ come from the same party, $y_{\pi(1)} + \cdots + y_{\pi(m)}$ is always zero in the distribution $\encshufdist{\bzero}{}$ and hence $\cA$ always accept. For the distribution $\encshufdist{\bx}{}$, if $\pi(1), \dots, \pi(m)$ comes from the same party $i \ne n$, then the sum $y_{\pi(1)} + \cdots + y_{\pi(m)}$ is always one and hence $\cA$ rejects. Thus, the probability that $\cA$ accepts in the former distribution is more than that of the latter by an additive factor of $\frac{n - 1}{\binom{nm}{m}} \geq \frac{1}{(en)^m}$. (The -1 factor corresponds to the case where $p(1), \cdots, p(m)$ comes from party $i = n$; here $\cA$ might accept if $- (n - 1) = 0$ in $\F_q$.) This concludes the proof sketch.

\subsubsection*{From Ishai et al.'s protocol to general protocols.}
Having sketched the argument for Ishai et al.'s protocol, one might wonder whether the same approach would work for general protocols. In particular, here instead of checking if $y_{\pi(1)} + \cdots + y_{\pi(m)} = 0$, we would check whether $y_{\pi(1)}, \dots, y_{\pi(m)}$ is a valid output of the encoder when the input is zero. Now, the statement for when $\pi(1), \dots, \pi(m)$ comes from the same party remains true. However, the issue is that, when $\pi(1), \dots, \pi(m)$ do not all come from the same party, it is not necessarily true that the acceptance probability of $\cA$ would be the same for both distributions.

To avoid having these ``cross terms'' affect the probability of acceptance of $\cA$ too much, we pick the smallest integer $t$ such that the ``$t$-message marginals'' (defined formally below) of $\enclocal{0}{\enc}$ and $\enclocal{1}{\enc}$ differ ``substantially''. Then, we modify $\cA$ so that it performs an analogous check on $y_{\pi(1)}, \dots, y_{\pi(t)}$ (instead of $y_{\pi(1)}, \dots, y_{\pi(m)}$ as before). Once again, we will have that, if $\pi(1), \dots, \pi(t)$ corresponds to the same party, then the probability that $\cA$ accepts differs significantly between the two cases. On the other hand, due to the minimality of $t$, we can also argue that, when $\pi(1), \dots, \pi(t)$ are not all from the same parties (i.e. ``cross terms''), the difference is small. Hence, the former case would dominate and we can get a lower bound on the difference as desired. This is roughly the approach we take in the proof of Theorem~\ref{thm:lower-bound-n-to-m} below. There are subtle points we have to change in the actual proof below. For instance, we cannot simply use the input $(1, \cdots, 1, -(n - 1))$ as in the case of Ishai et al. protocol because, if the $t$-marginal of $\enclocal{-(n - 1)}{\enc}$ deviates from $\enclocal{0}{\enc}$ more substantially than that of $\enclocal{1}{\enc}$, then this could affect the acceptance probability by a lot. Hence, in the actual proof, we instead pick $x^*$ that minimizes the value of such $t$ among all numbers in $\F_q$, and use the input vector $\bx = (x^*, \dots, x^*,- (n - 1)x^*)$.

\subsubsection*{Additional Notation and Observation.} To formally prove Theorem~\ref{thm:lower-bound-n-to-m} in the general form, we need to formally define the notion of $t$-marginal. For a distribution $\cD$ supported on $[\ell]^m$ and a positive integer $t \leq m$, its \emph{$t$-marginal}, denoted by $\cD|_t$, supported on $[\ell]^t$ is simply the marginal of $\cD$ on the first $t$-coordinates; more formally, for all $\by \in [\ell]^t$, we have
\begin{align*}
\Pr_{Y \sim \cD|_t}[Y = \by] = \sum_{y_{t + 1}, \dots, y_m \in [\ell]} \Pr_{Y \sim \cD}[Y = \by \circ (y_{t + 1}, \dots, y_m)].
\end{align*}

An observation that will simplify our proof is that we may assume w.l.o.g. that the distribution $\enclocal{x}{\enc}$ for every $x \in \F_q$ is permutation invariant, i.e., that for any $\pi: [m] \to [m]$ and any $\by \in [\ell]^m$, we have
\begin{align*}
\Pr_{Y \sim \enclocal{x}{\enc}}[Y = \by] = \Pr_{Y \sim \enclocal{x}{\enc}}[Y = \pi(\by)].
\end{align*}
This is because we may apply a random permutation to the encoding $\enc_x$ before sending it to the shuffler, which does not change the distribution $\encshufdist{\enc}{x}$. Notice that our observation implies that $\enclocal{x}{\enc}|_t$ is also permutation invariant.

%Finally, for $\bx \in \F_q^n$, we write $\enclocal{\bx}{\enc}$ to denote the distribution on $[\ell]^{mn}$ where the coordinates $m(i - 1) + 1, \dots, mi$ are sampled i.i.d. from $\enclocal{x_i}{\enc}$ for all $i \in \{1, \dots, n\}$. (In other words, this is the distribution of the parties' outputs when the inputs are $x_1, \dots, x_n$ before being randomly permuted by the shuffler.) We also use $\encshufdist{\bx}{\enc}$ to denote the distribution on $[\ell]^{mn}$ which results from applying a random permutation on $\by$ sampled from $\enclocal{\bx}{\enc}$. Note that this is indeed the distribution of the view of the analyzer.

\begin{proof}[Proof of Theorem~\ref{thm:lower-bound-n-to-m}]
Let $t \leq m$ be the smallest positive integer such that \\ $\max_{x \in \F_q} \sd(\enclocal{0}{\enc}|_t, \enclocal{x}{\enc}|_{t})$ is at least $\frac{1}{(10nm)^{4(m - t)}}$. Note that such $t$ always exist because the requirement holds for $t = m$, at which $\enclocal{0}{\enc}|_t = \enclocal{0}{\enc}$ and $\enclocal{1}{\enc}|_t = \enclocal{1}{\enc}$ have statistical distance 1 (as their supports are disjoint due to Lemma~\ref{lem:zerr}).

For $t$ as defined above, let $x^* = \argmax_{x \in \F_q} \sd(\enclocal{0}{\enc}|t, \enclocal{x}{\enc}|_t)$ and let us defined $H$ as the set of elements of $[\ell]^t$ whose probability under $\enclocal{0}{\enc}|_t$ is higher than under $\enclocal{x^*}{\enc}|_t$. More formally, $H = \{\by \in [\ell]^t : \enclocal{0}{\enc}|_t(\by) > \enclocal{x^*}{\enc}|_t(\by)\}$. By definition of statistical distance, we have
\begin{align} \label{eq:tvd}
\Pr_{\by \in \enclocal{0}{\enc}|_t}[\by \in H] - \Pr_{\by \in \enclocal{x^*}{\enc}|_t}[\by \in H] = SD(\enclocal{0}{\enc}|_t, \enclocal{x^*}{\enc}|_t) \geq \frac{1}{(10nm)^{4(m - t)}},
\end{align}
where the inequality follows from our choice of $t$.

Let $\bx = (x^*, \dots, x^*, -(n - 1)x^*)$; clearly, $\bx \in \sumset_0$ as desired. We next give a distinguisher for the distributions $\encshufdist{\bzero}{\enc}$ and $\encshufdist{\bx}{\enc}$. The distinguisher $\cA$ takes in the permuted output $(y_{\pi(1)}, \dots, y_{\pi(nm)})$. It returns one (i.e., ``accept'') if $(y_{\pi(1)}, \dots, y_{\pi(t)})$ belongs to $H$ and it returns zero (i.e., ``reject'') otherwise.

We will show that the probability that $\cA$ accepts on $\encshufdist{\bzero}{\enc}$ is more than the probability that it accepts on $\encshufdist{\bx}{\enc}$ by at least $\frac{1}{(10nm)^{5m}}$, which implies that the statistical distance between $\encshufdist{\bzero}{\enc}$ and $\encshufdist{\bx}{\enc}$ is also at least $\frac{1}{(10nm)^{5m}}$ as desired.

To argue about the acceptance probability of $\cA$, it is worth noting that there are two sources of randomness here: the output $\by$ (sampled from $\enclocal{\bzero}{\enc}$ or $\enclocal{\bx}{\enc}$) and the permutation $\pi$. More formally, we may write the probability that $\cA$ accepts on $\encshufdist{\bzero}{\enc}$ and that on $\encshufdist{\bx}{\enc}$ as 
\begin{align*}
\Pr_{\pi \sim \Pi_{mn}, \by \sim \enclocal{\bzero}{\enc}}[\cA(\pi(y)) = 1].
\end{align*}
and 
\begin{align*}
\Pr_{\pi \sim \Pi_{mn}, \by \sim \enclocal{\bx}{\enc}}[\cA(\pi(y)) = 1].
\end{align*}
respectively.
Hence, the difference between the probability that $\cA$ accepts on $\encshufdist{\bzero}{\enc}$ and that on $\encshufdist{\bx}{\enc}$ is
\begin{align*}
&\Pr_{\pi \sim \Pi_{mn}, \by \sim \encshufdist{\bzero}{\enc}}[\cA(\pi(y)) = 1] - \Pr_{\pi \sim \Pi_{mn}, \by \sim \encshufdist{\bx}{\enc}}[\cA(\pi(y)) = 1] \\
&= \E_{\pi \sim \Pi_{mn}}\left[\Pr_{\by \sim \encshufdist{\bzero}{\enc}}[\cA(\pi(y)) = 1] - \Pr_{\by \sim \encshufdist{\bx}{\enc}}[\cA(\pi(y)) = 1]\right].
\end{align*}

For brevity, let us define $\Delta_{\pi}$ as
\begin{align*}
\Delta_{\pi} := \Pr_{\by \sim \encshufdist{\bzero}{\enc}}[\cA(\pi(y)) = 1] - \Pr_{\by \sim \encshufdist{\bx}{\enc}}[\cA(\pi(y)) = 1].
\end{align*}
Note that the quantity we would like to lower bound is now simply $\E_\pi[\Delta_\pi]$.

For each party $i \in \{1, \dots, n\}$ and any permutation $\pi: [mn] \to [mn]$, we use $U^i_{\pi}$ to denote $\{\pi(1), \dots, \pi(t)\} \cap \{m(i - 1) + 1, \dots , mi\}$.
Furthermore, we define \emph{the largest number of messages from a single party} for a permutation $\pi$ as $C_{\pi} := \max_{i = 1, \dots, n} |U^i_{\pi}|$.

In the next part of the proof, we classify $\pi$ into three categories, as listed below. For each category, we prove either a lower or an upper bound on $\Delta_\pi$ and the probability that a random permutation falls into that category.
\begin{enumerate}[I.]
\item $C_{\pi} = t$ and $|U_\pi^n| \ne t$. In other words, all of $\{\pi(1), \dots, \pi(t)\}$ correspond to a single party and that party is not the last party.
\item $C_{\pi} = t$ and $|U_\pi^n| = t$. In other words, all of $\{\pi(1), \dots, \pi(t)\}$ correspond to the last party $n$.
\item $C_{\pi} < t$. Not all of $\pi(1), \dots, \pi(t)$ comes from the same party.
\end{enumerate}
We will show that for category I permutations, $\Delta_\pi$ is large (Lemma~\ref{lem:dist-upper}) and the probability that a random permutation belongs to this category is not too small (Lemma~\ref{lem:prob-large-marginal}). For both categories II and III, we show that $|\Delta_\pi|$ is small (Lemmas~\ref{lem:dist-last-party} and~\ref{lem:dist-upper}) and the probabilities that a random permutation belongs to each of these two categories are not too large (Lemmas~\ref{lem:prob-last-party} and~\ref{lem:prob-bound-small-marginal}).

These quantitative bounds are such that the first category dominates $\E_\pi[\Delta_\pi]$, meaning that we get a lower bound on this expectation as desired; this is done at the very end of the proof.

\paragraph{Category I: } $C_{\pi} = t$ and $|U_\pi^n| \ne t$.

We now consider the first case: when $\{\pi(1), \dots, \pi(t)\}$ corresponds to a single party $i \ne n$. In this case, $\Delta_\pi$ is exactly equal to the statistical distance between $\enclocal{0}{\enc}$ and $\enclocal{x^*}{\enc}$ (which we know from~\eqref{eq:tvd} to be large):

\begin{lemma} \label{lem:dist-lower}
For any $\pi$ such that $C_{\pi} = t$ and $|U^n_\pi| \ne t$, we have
\begin{align*}
\Delta_{\pi} = \sd(\enclocal{0}{\enc}|_t, \enclocal{x^*}{\enc}|_t).
\end{align*}
\end{lemma}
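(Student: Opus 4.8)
The plan is to compute $\Delta_\pi$ directly in the Category~I case by conditioning on the single party $i \ne n$ whose messages contribute all of $\{\pi(1),\dots,\pi(t)\}$. First I would observe that, since $C_\pi = t$, there is a unique party $i$ with $U^i_\pi = \{\pi(1),\dots,\pi(t)\}$, and by hypothesis $i \ne n$. The key point is that $(y_{\pi(1)},\dots,y_{\pi(t)})$ is then exactly a $t$-subset of the $m$ messages sent by party $i$; because we have assumed (w.l.o.g.) that each $\enclocal{x}{\enc}$ is permutation-invariant, the distribution of $(y_{\pi(1)},\dots,y_{\pi(t)})$ is precisely the $t$-marginal $\enclocal{x_i}{\enc}|_t$, independently of what the other parties send and of how $\pi$ acts on the remaining coordinates.

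Next I would plug in the two input vectors. Under $\encshufdist{\bzero}{\enc}$ we have $x_i = 0$, so the distinguisher accepts with probability $\Pr_{\by \sim \enclocal{0}{\enc}|_t}[\by \in H]$. Under $\encshufdist{\bx}{\enc}$ with $\bx = (x^*,\dots,x^*,-(n-1)x^*)$ and $i \ne n$, we have $x_i = x^*$, so it accepts with probability $\Pr_{\by \sim \enclocal{x^*}{\enc}|_t}[\by \in H]$. Subtracting, $\Delta_\pi = \Pr_{\by \sim \enclocal{0}{\enc}|_t}[\by \in H] - \Pr_{\by \sim \enclocal{x^*}{\enc}|_t}[\by \in H]$, and by the defining property of $H$ (the set where $\enclocal{0}{\enc}|_t$ exceeds $\enclocal{x^*}{\enc}|_t$) this difference equals $\sd(\enclocal{0}{\enc}|_t, \enclocal{x^*}{\enc}|_t)$, which is exactly the claimed identity~\eqref{eq:tvd}.

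The one place that needs care is justifying that the distribution of $(y_{\pi(1)},\dots,y_{\pi(t)})$ really is $\enclocal{x_i}{\enc}|_t$ rather than some other $t$-marginal: this is where permutation-invariance of $\enclocal{x_i}{\enc}$ is used, since $\{\pi(1),\dots,\pi(t)\}$ is an arbitrary (not necessarily ``first'') $t$-subset of party $i$'s message block, and invariance lets us identify its law with the marginal on the first $t$ coordinates. I expect this to be the main (and only) subtlety; once it is in place, the rest is the elementary identity relating the mass difference on $H$ to the statistical distance. The independence of the other parties' messages and of the rest of $\pi$ is immediate from the product structure of $\encdist{\bx}{\enc}$ and the fact that $\cA$ only inspects the first $t$ shuffled coordinates.
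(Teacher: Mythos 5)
Your proof is correct and follows essentially the same route as the paper's: identify the unique party $i \neq n$ contributing all of $\{\pi(1),\dots,\pi(t)\}$, use permutation invariance of $\enclocal{x}{\enc}$ to identify the law of $(y_{\pi(1)},\dots,y_{\pi(t)})$ with the $t$-marginal $\enclocal{x_i}{\enc}|_t$, and then read off the statistical distance from the mass difference on $H$. You are slightly more explicit than the paper about where permutation invariance is needed, but the structure of the argument is the same.
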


\begin{proof}
Let $i \in \{1, \dots, n\}$ be the party such that $|U^i_\pi| = C_{\pi} = t$. When $\by$ is drawn from $\encdist{\bx}{\enc}$ (respectively $\encdist{\bzero}{\enc}$), $\{\pi(1), \cdots, \pi(t)\} \subseteq \{m(i - 1) + 1, \dots, mi\}$, it is the case that $(y_{\pi(1)}, \dots, y_{\pi(t)})$ is simply distributed as $\encdist{x_i}{\enc}|_t$ (respectively $\encdist{0}{\enc}|_t$). Recall that we assume that $U^n_{\pi} \ne t$, which means that $i \ne n$ or equivalently $x_i = x^*$. Hence, we have 
\begin{align*}
\Pr_{\by \sim \encdist{\bx}{\enc}}[\cA(\pi(\by)) = 1] = \Pr_{\by' \sim \enclocal{x_i}{\enc}|_t}[\by' \in H] = \Pr_{\by' \sim \enclocal{x^*}{\enc}|_t}[\by' \in H].
\end{align*}
and 
\begin{align*}
\Pr_{\by \sim \encdist{\bzero}{\enc}}[\cA(\pi(\by)) = 1] = \Pr_{\by' \sim \enclocal{0}{\enc}|_t}[\by' \in H].
\end{align*}
Combining the above two equalities with~\eqref{eq:tvd} implies that $\Delta_{\pi} = \sd(\enclocal{0}{\enc}|_t, \enclocal{x^*}{\enc}|_t)$ as desired.
\end{proof}

The probability that $\pi$ falls into this category can be simply computed:

\begin{lemma} \label{lem:prob-large-marginal}
$\Pr_{\pi}[C_{\pi} = t \wedge U_\pi^n \ne t] = \frac{(n - 1) \cdot \binom{m}{t}}{\binom{nm}{t}}$.
\end{lemma}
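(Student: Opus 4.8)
The plan is to reduce the statement to a pure counting problem about $t$-element subsets of $[mn]$. First I would observe that the event ``$C_\pi = t \wedge |U^n_\pi| \ne t$'' depends on $\pi$ only through the unordered set $\{\pi(1), \dots, \pi(t)\}$, and that when $\pi$ is a uniformly random permutation of $[mn]$, this set is distributed uniformly among all $\binom{mn}{t}$ subsets of $[mn]$ of size $t$. Hence the probability in question equals the number of ``good'' $t$-subsets divided by $\binom{mn}{t}$.

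Next I would pin down which $t$-subsets are good. By definition $C_\pi = t$ means $\max_i |U^i_\pi| = t$, i.e.\ all of $\pi(1), \dots, \pi(t)$ fall inside a single party's block $\{m(i-1)+1, \dots, mi\}$; equivalently, the set $\{\pi(1), \dots, \pi(t)\}$ is contained in one of the $n$ blocks. The extra condition $|U^n_\pi| \ne t$ rules out the case where that block is the $n$-th one. So the good $t$-subsets are exactly those contained in one of the first $n-1$ blocks. Since $t \geq 1$ and the blocks are pairwise disjoint, a $t$-subset can lie in at most one block, so these $n-1$ families of subsets are disjoint; each block has $m$ elements and therefore contributes $\binom{m}{t}$ subsets, giving $(n-1)\binom{m}{t}$ good subsets in total.

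Dividing by $\binom{mn}{t}$ yields $\Pr_{\pi}[C_{\pi} = t \wedge |U_\pi^n| \ne t] = \frac{(n-1)\binom{m}{t}}{\binom{nm}{t}}$, as claimed. I do not anticipate any real obstacle; the only points that need a line of justification are that $\{\pi(1),\dots,\pi(t)\}$ is genuinely uniform over $t$-subsets (immediate from the symmetry of a uniformly random permutation) and that the per-block counts do not overlap (which is where $t \geq 1$ is used).
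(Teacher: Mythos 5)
Your proposal is correct and is essentially the same argument the paper gives: both identify the event with $\pi(\{1,\dots,t\})$ falling entirely inside one of the first $n-1$ blocks, note that these $n-1$ events are mutually exclusive because the blocks are disjoint, and compute the per-block probability as $\binom{m}{t}/\binom{nm}{t}$. Your phrasing in terms of uniform random $t$-subsets just makes explicit the symmetry step that the paper uses implicitly.
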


\begin{proof}
$C_{\pi} = t$ and $|U^n_\pi| \ne t$ if and only if there exists a party $i \in \{1, \dots, n - 1\}$ such that $\pi(\{1, \dots, t\}) \subseteq \{m(i - 1) + 1, \dots, mi\}$. For a fixed $i$, this happens with probability $\frac{\binom{m}{t}}{\binom{nm}{t}}$. Notice also that the event is disjoint for different $i$'s. As a result, the total probability that this event occurs for at least one $i$ is $(n - 1) \cdot \frac{\binom{m}{t}}{\binom{nm}{t}}$.
\end{proof}

\paragraph{Category II: } $C_{\pi} = t$ and $|U_\pi^n| = t$.

We now consider the second category: when $\{\pi(1), \dots, \pi(t)\}$ corresponds to the last party $n$. In this case, our choice of $x^*$ implies that $|\Delta_\pi|$ is upper bounded by the statistical distance between $\enclocal{0}{\enc}|_t$ and $\enclocal{x^*}{\enc}|_t$, as formalized below.

\begin{lemma} \label{lem:dist-last-party}
For any $\pi$ such that $C_{\pi} = t$ and $|U^n_\pi| = t$, we have
\begin{align*}
|\Delta_{\pi}| \leq \sd(\enclocal{0}{\enc}|_t, \enclocal{x^*}{\enc}|_t).
\end{align*}
\end{lemma}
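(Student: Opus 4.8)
The plan is to mirror the proof of Lemma~\ref{lem:dist-lower}, but keeping track of the fact that here the single party carrying all of $\pi(1),\dots,\pi(t)$ is forced to be party $n$, whose input is $0$ under $\bzero$ but $-(n-1)x^*$ under $\bx$. So the first step is to observe that when $\by$ is drawn from $\encdist{\bzero}{\enc}$ (resp. $\encdist{\bx}{\enc}$) and $\{\pi(1),\dots,\pi(t)\}\subseteq\{m(n-1)+1,\dots,mn\}$, the tuple $(y_{\pi(1)},\dots,y_{\pi(t)})$ is distributed exactly as $\enclocal{0}{\enc}|_t$ (resp. $\enclocal{-(n-1)x^*}{\enc}|_t$), since the last coordinate of $\bx$ equals $-(n-1)x^*$; here the permutation-invariance reduction made just before the proof guarantees that this conditional distribution does not depend on which $t$ of that party's $m$ coordinates landed in the first $t$ slots. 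Consequently
\[
\Delta_\pi = \Pr_{\by'\sim\enclocal{0}{\enc}|_t}[\by'\in H] - \Pr_{\by'\sim\enclocal{-(n-1)x^*}{\enc}|_t}[\by'\in H].
\]

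The second step is routine: the difference in the mass that a fixed event $H$ receives under two distributions is, in absolute value, at most their statistical distance, so $|\Delta_\pi|\leq \sd(\enclocal{0}{\enc}|_t,\enclocal{-(n-1)x^*}{\enc}|_t)$. The third step is the only place the choice of $x^*$ is used: by definition $x^* = \argmax_{x\in\F_q}\sd(\enclocal{0}{\enc}|_t,\enclocal{x}{\enc}|_t)$, so instantiating the max at the particular field element $x = -(n-1)x^*\in\F_q$ gives $\sd(\enclocal{0}{\enc}|_t,\enclocal{-(n-1)x^*}{\enc}|_t)\leq \sd(\enclocal{0}{\enc}|_t,\enclocal{x^*}{\enc}|_t)$. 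Chaining the last two inequalities yields $|\Delta_\pi|\leq \sd(\enclocal{0}{\enc}|_t,\enclocal{x^*}{\enc}|_t)$, as claimed.

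I do not expect a real obstacle here; the lemma is essentially a bookkeeping statement whose whole content is "the worst case over the last party's possible input is dominated by $x^*$ by construction." The only point requiring a little care is the identification in the first step of the conditional law of $(y_{\pi(1)},\dots,y_{\pi(t)})$ with the clean $t$-marginal $\enclocal{\cdot}{\enc}|_t$, which is exactly what the permutation-invariance assumption buys us; without it one would have to carry around a max or average over which $t$-subset of the party's messages is observed, which still works but is messier.
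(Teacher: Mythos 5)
Your proof is correct and follows the same route as the paper: identify the conditional law of $(y_{\pi(1)},\dots,y_{\pi(t)})$ with $\enclocal{0}{\enc}|_t$ and $\enclocal{x_n}{\enc}|_t$ (where $x_n=-(n-1)x^*$), bound $|\Delta_\pi|$ by the statistical distance of those $t$-marginals, and invoke the maximality defining $x^*$. The only cosmetic difference is that you write $-(n-1)x^*$ explicitly where the paper keeps the name $x_n$, and you spell out the role of permutation invariance, which the paper handles by the reduction stated just before the proof.
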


\begin{proof}
In this case, we have $\{\pi(1), \cdots, \pi(i)\} \subseteq \{m(n - 1) + 1, \dots, mn\}$. Thus, when $\by$ is drawn from $\encdist{\bx}{\enc}$ (respectively $\encdist{\bzero}{\enc}$), it is the case that $(y_{\pi(1)}, \dots, y_{\pi(t)})$ is simply distributed as $\enclocal{x_n}{\enc}|_t$ (respectively $\enclocal{0}{\enc}|_t$). Hence, we have 
\begin{align*}
\Pr_{\by \sim \encdist{\bx}{\enc}}[\cA(\pi(\by)) = 1] = \Pr_{\by' \sim \enclocal{x_n}{\enc}|_t}[\by' \in H]
\end{align*}
and 
\begin{align*}
\Pr_{\by \sim \encdist{\bzero}{\enc}}[\cA(\pi(\by)) = 1] = \Pr_{\by' \sim \enclocal{0}{\enc}|_t}[\by' \in H].
\end{align*}
Combining the above two equalities implies that $|\Delta_{\pi}| \leq \sd(\enclocal{0}{\enc}|_t, \enclocal{x_n}{\enc}|_t)$. Recall that $x^*$ is chosen to maximize $\sd(\enclocal{0}{\enc}|_t, \enclocal{x^*}{\enc}|_t)$, which means that $\sd(\enclocal{0}{\enc}|_t, \enclocal{x_n}{\enc}|_t)$ $\leq \sd(\enclocal{0}{\enc}|_t, \enclocal{x^*}{\enc}|_t)$. Hence, we have $|\Delta_{\pi}| \leq \sd(\enclocal{0}{\enc}|_t, \enclocal{x^*}{\enc}|_t)$ as desired.
\end{proof}

The probability that $\pi$ falls into this category can be simply computed in a similar manner as in the first case:

\begin{lemma} \label{lem:prob-last-party}
$\Pr_{\pi}[C_{\pi} = t \wedge |U_\pi^n| = t] = \frac{\binom{m}{t}}{\binom{nm}{t}}$.
\end{lemma}

\begin{proof}
$C_{\pi} = t$ and $|U^n_\pi| = t$ if and only if $\pi(\{1, \dots, t\}) \subseteq \{m(n - 1) + 1, \dots, mn\}$. This happens with probability exactly $\frac{\binom{m}{t}}{\binom{nm}{t}}$.
\end{proof}

\paragraph{Category III: } $C_\pi < t$.

Finally, we consider any permutation $\pi$ such that not all of $\{\pi(1), \dots, \pi(t)\}$ correspond to a single party. On this front, we may use our choice of $t$ to give an upper bound on $|\Delta_\pi|$ as follows.

\begin{lemma} \label{lem:dist-upper}
For any $\pi$ such that $C_{\pi} < t$, we have
\begin{align*}
|\Delta_{\pi}| < m \cdot \frac{1}{(10nm)^{4(m - C_{\pi})}}.
\end{align*}
\end{lemma}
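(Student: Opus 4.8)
The plan is to bound $|\Delta_\pi|$ by the statistical distance between two explicit \emph{product} distributions and then apply the minimality of $t$ coordinate-by-coordinate. Write $c_i := |U^i_\pi|$ for the number of the positions $\pi(1),\dots,\pi(t)$ belonging to party $i$, so that $\sum_{i=1}^n c_i = t$ and $\max_i c_i = C_\pi < t$. Since $\cA$ inspects only $(y_{\pi(1)},\dots,y_{\pi(t)})$ and accepts iff that tuple lies in the fixed set $H$, and since $|D_1(H) - D_2(H)| \le \sd(D_1,D_2)$ for any event $H$, it suffices to bound the statistical distance between the laws of $(y_{\pi(1)},\dots,y_{\pi(t)})$ under $\encdist{\bzero}{\enc}$ and under $\encdist{\bx}{\enc}$.

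The first step is to identify these two laws. For any input vector $\bw$, when $\by \sim \encdist{\bw}{\enc}$ the tuple $(y_{\pi(1)},\dots,y_{\pi(t)})$ is obtained by selecting, for each party $i$, some $c_i$ of the $m$ coordinates of the (mutually independent) encoding $\enc_{w_i}$ and then interleaving these selections in a fixed order determined by $\pi$. Using the permutation-invariance of $\enclocal{w_i}{\enc}$ noted just before the proof of Theorem~\ref{thm:lower-bound-n-to-m}, any $c_i$ coordinates of $\enc_{w_i}$ are jointly distributed as the marginal $\enclocal{w_i}{\enc}|_{c_i}$; hence, by independence across parties and since a fixed reordering of coordinates leaves statistical distance unchanged, we get
\[
|\Delta_\pi| \;\le\; \sd\!\left(\bigotimes_{i : c_i > 0} \enclocal{0}{\enc}|_{c_i},\ \bigotimes_{i : c_i > 0} \enclocal{x_i}{\enc}|_{c_i}\right) \;\le\; \sum_{i : c_i > 0} \sd\!\left(\enclocal{0}{\enc}|_{c_i},\ \enclocal{x_i}{\enc}|_{c_i}\right),
\]
where the last inequality is the standard subadditivity of statistical distance over product distributions (a hybrid argument).

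To finish, note that every $c_i$ in the sum satisfies $c_i \le C_\pi < t$, so the minimality of $t$ applies at level $c_i$: for each such $i$,
\[
\sd\!\left(\enclocal{0}{\enc}|_{c_i}, \enclocal{x_i}{\enc}|_{c_i}\right) \;\le\; \max_{x \in \F_q}\sd\!\left(\enclocal{0}{\enc}|_{c_i}, \enclocal{x}{\enc}|_{c_i}\right) \;<\; \frac{1}{(10nm)^{4(m - c_i)}} \;\le\; \frac{1}{(10nm)^{4(m - C_\pi)}},
\]
where the last step uses $c_i \le C_\pi$, and where we have used that the minimality bound holds for the \emph{arbitrary} field element $x_i$, including $x_i = -(n-1)x^*$ for the last party. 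Since $\sum_i c_i = t \le m$, there are at most $m$ parties with $c_i > 0$, so the sum is strictly less than $m \cdot (10nm)^{-4(m - C_\pi)}$, which is exactly the claimed bound.

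The hybrid argument and the final arithmetic are routine; I expect the only point requiring genuine care to be the first step — arguing that, after conditioning on the fixed permutation $\pi$, the restricted tuple $(y_{\pi(1)},\dots,y_{\pi(t)})$ decomposes (up to a fixed coordinate reordering) as a product of the per-party marginals $\enclocal{w_i}{\enc}|_{c_i}$. This is precisely where the permutation-invariance reduction and the independence of the encoders are both used, and it must be stated cleanly so that the subsequent subadditivity bound is justified.
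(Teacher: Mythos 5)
Your proof is correct and takes essentially the same approach as the paper: decompose the law of $(y_{\pi(1)},\dots,y_{\pi(t)})$ into a product of per-party marginals $\enclocal{\cdot}{\enc}|_{c_i}$ using permutation-invariance and independence across parties, apply subadditivity of statistical distance over product measures, and then invoke the minimality of $t$ at each level $c_i \le C_\pi < t$. Your write-up is if anything slightly more explicit than the paper's (in spelling out the chain $c_i \le C_\pi$, the hybrid argument, and the fact that the $\max_x$ bound absorbs the last party's input $-(n-1)x^*$), but there is no substantive difference.
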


\begin{proof}
In fact, we will show something even stronger: that the statistical distance of $(y_{\pi(1)}, \dots, y_{\pi(t)})$ when $\by$ is drawn from $\encdist{\bzero}{\enc}$ and that when $\by$ is drawn from $\encdist{\bx}{\enc}$ is at most $m \cdot \frac{1}{(10nm)^{4(m - C_{\pi})}}$. The desired bound immediately follows.

Let $I$ denote the set of all parties $i$ such that $U_i \ne \emptyset$. Observe that, when $\by$ is drawn from $\encdist{\bx}{\enc}$ (respectively $\encdist{\bzero}{\enc}$), $(y_p)_{p \in U_i}$ is simply distributed as $\enclocal{x_i}{\enc}|_{|U_i|}$ (respectively $\enclocal{0}{\enc}|_{|U_i|}$) and that these are independent for different $i$. In other words, $(y_{\pi(1)}, \dots, y_{\pi(t)})$ is (after appropriate rearrangement) just the product distribution $\prod_{i \in I} \enclocal{x_i}{\enc}|_{|U_i|}$ (respectively $\prod_{i \in I} \enclocal{0}{\enc}|_{|U_i|}$).

Recall from the definition of $C_{\pi}$ that $|U_i|$ is at most $C_{\pi}$ for all $i$. Since $C_{\pi} < t$ and from our choice of $t$, we must have $\sd(\enclocal{0}{\enc}|_{|U_i|}, \enclocal{x_i}{\enc}|_{|U_i|}) < \frac{1}{(10nm)^{4(m - C_{\pi})}}$ for all $i \in I$. Hence, we also have
\begin{align*}
SD\left(\prod_{i \in I} \enclocal{0}{\enc}|_{|U_i|}, \prod_{i \in I} \enclocal{x_i}{\enc}|_{|U_i|}\right) < |I| \cdot \frac{1}{(10nm)^{4(m - C_{\pi})}} \leq m \cdot \frac{1}{(10nm)^{4(m - C_{\pi})}},
\end{align*}
which concludes the proof.
\end{proof}

Next, we bound the probability that a random permutation $\pi$ belong to this category:

\begin{lemma} \label{lem:prob-bound-small-marginal}
For all $j < t$, we have
$\Pr_{\pi}[C_{\pi} = j] \leq \frac{n \cdot \binom{m}{t}}{\binom{nm}{t}} \cdot (nm)^{3(t - j)}$.
\end{lemma}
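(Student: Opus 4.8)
The plan is to bound $\Pr_\pi[C_\pi = j]$ by first applying a union bound over the $n$ parties and then estimating the resulting hypergeometric probability with elementary binomial inequalities. Recall that $\{\pi(1), \dots, \pi(t)\}$ is a uniformly random $t$-subset of $[mn]$ and that $|U^i_\pi|$ counts how many of these $t$ elements lie in party $i$'s block $\{m(i-1)+1, \dots, mi\}$. Since $C_\pi = j$ forces some party to contain at least $j$ of the selected elements, the union bound together with the symmetry of the parties gives
\[
\Pr_\pi[C_\pi = j] \le \Pr_\pi[C_\pi \ge j] \le n \cdot \Pr_\pi[|U^1_\pi| \ge j] = n \sum_{k=j}^{t} \frac{\binom{m}{k}\binom{(n-1)m}{t-k}}{\binom{nm}{t}},
\]
where the last identity is the standard count of $t$-subsets of $[mn]$ having exactly $k$ elements in a fixed block of size $m$ (and the sum stops at $k = t$ because $t \le m$). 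The edge case $j = 0$ is trivial since $C_\pi \ge 1$ always (the $t \ge 1$ chosen elements lie in some block), so from now on $1 \le j < t$, and in particular $t \ge 2$.

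The one estimate that needs a little care is that for all $1 \le k \le t \le m$ one has $\binom{m}{k} \le m^{t-k}\binom{m}{t}$: writing $\binom{m}{k}/\binom{m}{t} = \bigl(\prod_{i=k+1}^{t} i\bigr)/\bigl(\prod_{i=m-t+1}^{m-k} i\bigr)$, the numerator is a product of $t-k$ integers each at most $m$ and the denominator a product of $t-k$ integers each at least $1$. Combining this with the trivial bound $\binom{(n-1)m}{t-k} \le ((n-1)m)^{t-k} \le (nm)^{t-k}$ yields, for each $k$,
\[
\frac{\binom{m}{k}\binom{(n-1)m}{t-k}}{\binom{nm}{t}} \le \frac{\binom{m}{t}}{\binom{nm}{t}} \cdot \bigl(m \cdot nm\bigr)^{t-k} \le \frac{\binom{m}{t}}{\binom{nm}{t}} \cdot (nm)^{2(t-k)}.
\]

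Summing over $k$ from $j$ to $t$ and bounding the $t - j + 1$ terms by the largest one,
\[
\Pr_\pi[C_\pi = j] \le n \cdot \frac{\binom{m}{t}}{\binom{nm}{t}} \sum_{k=j}^{t} (nm)^{2(t-k)} \le n \cdot \frac{\binom{m}{t}}{\binom{nm}{t}} \cdot (t - j + 1)\,(nm)^{2(t-j)}.
\]
It then remains only to absorb the factor $t - j + 1$ into one spare power $(nm)^{t-j}$: since $1 \le j < t$ we have $t - j \ge 1$ and $t - j + 1 \le t \le m \le nm \le (nm)^{t-j}$, which gives the claimed bound $\Pr_\pi[C_\pi = j] \le \frac{n\binom{m}{t}}{\binom{nm}{t}}(nm)^{3(t-j)}$. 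I do not anticipate any genuine obstacle here; the argument is a routine union bound plus binomial bookkeeping, and the only things to double-check are the direction of the binomial ratio inequality and the degenerate cases $j = 0$ and $t = 1$.
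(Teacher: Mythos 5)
Your proof is correct. The paper reaches the same bound by a slightly different counting: rather than writing out the exact hypergeometric tail $\Pr_\pi[|U^1_\pi| \ge j]$ and bounding it term by term, the paper union-bounds over pairs $(T,i)$ with $T$ a size-$j$ subset of $\{1,\dots,t\}$ and $i$ a party, observing that $C_\pi = j$ forces $\pi(T) \subseteq \{m(i-1)+1,\dots,mi\}$ for at least one such pair; this gives $\Pr_\pi[C_\pi = j] \le n\binom{t}{j}\binom{m}{j}/\binom{nm}{j}$ in one stroke, after which a comparable amount of binomial bookkeeping (bounding $\binom{m}{j}/\binom{m}{t}$, $\binom{nm}{t}/\binom{nm}{j}$, and $\binom{t}{j}$ each by a power of $nm$) produces the same $(nm)^{3(t-j)}$ overhead. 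Your version is marginally tighter at the intermediate stage because it starts from the exact hypergeometric expression rather than an over-counting union bound, and you correctly note the degenerate case $j=0$ (vacuous since $C_\pi \ge 1$) which the paper leaves implicit; both routes then spend one spare factor $(nm)^{t-j}$ to absorb the combinatorial slack (the $t-j+1$ summands in yours, the $\binom{t}{j}$ factor in the paper's). The two arguments are interchangeable and of essentially the same difficulty.
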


\begin{proof}
If $C_{\pi} = j$, there must exist a subset $T \subseteq \{1, \dots, t\}$ of size $j$ and a party $i \in \{1, \dots, n\}$ such that $\pi(T) \subseteq \{m(i - 1) + 1, \dots, mi\}$. For a fixed $T$ and $i$, this happens with probability exactly $\frac{\binom{m}{j}}{\binom{nm}{j}}$. Hence, by union bound over all $T$ and $i$, we have
\begin{align*}
\Pr_{\pi}[C_{\pi} = j] &\leq n \cdot \binom{t}{j} \cdot \frac{\binom{m}{j}}{\binom{nm}{j}} \\
&\leq \frac{n \cdot \binom{m}{t}}{\binom{nm}{t}} \cdot \frac{\binom{t}{j} \cdot m^{t - j}}{(nm)^{j - t}} \\
&\leq \frac{n \cdot \binom{m}{t}}{\binom{nm}{t}} \cdot (nm)^{3(t - j)}. \qedhere
\end{align*}
\end{proof}
\paragraph{Putting things together.} With all the claims ready, it is now simple to finish the proof of Theorem~\ref{thm:lower-bound-n-to-m}.
The difference between the probability that $\cA$ accepts on $\encshufdist{\bzero}{\enc}$ and that on $\encshufdist{\bx}{\enc}$ is
\begin{align*}
\E_{\pi}[\Delta_{\pi}] &= \Pr_{\pi}[C_{\pi} = t \wedge |U_\pi^n| \ne t] \cdot \E_{\pi}[\Delta_{\pi} \mid C_{\pi} = t \wedge |U_\pi^n| \ne t] \\
&\qquad + \Pr_{\pi}[C_{\pi} = t \wedge |U_\pi^n| = t] \cdot \E_{\pi}[\Delta_{\pi} \mid C_{\pi} = t \wedge |U_\pi^n| = t] \\ 
&\qquad + \sum_{j=1}^{t-1} \Pr_{\pi}[C_{\pi} = j] \cdot \E_{\pi}[\Delta_{\pi} \mid C_{\pi} = j] \\
(\text{Lemmas}~\ref{lem:dist-lower},\ref{lem:prob-large-marginal},\ref{lem:dist-last-party},\ref{lem:prob-last-party}) &\geq \frac{(n - 1) \cdot \binom{m}{t}}{\binom{nm}{t}} \cdot \sd(\enclocal{0}{\enc}|_t, \enclocal{x^*}{\enc}|_t) - \frac{\binom{m}{t}}{\binom{nm}{t}} \cdot \sd(\enclocal{0}{\enc}|_t, \enclocal{x^*}{\enc}|_t)  \\
&\qquad + \sum_{j=1}^{t-1} \Pr_{\pi}[C_{\pi} = j] \cdot \E_{\pi}[\Delta_{\pi} \mid C_{\pi} = j] \\
(\text{From}~n \geq 3) &\geq \frac{n \cdot \binom{m}{t}}{3 \binom{nm}{t}} \cdot \sd(\enclocal{0}{\enc}|_t, \enclocal{x^*}{\enc}|_t) + \sum_{j=1}^{t-1} \Pr_{\pi}[C_{\pi} = j] \cdot \E_{\pi}[\Delta_{\pi} \mid C_{\pi} = j] \\
(\eqref{eq:tvd}~\text{and Lemma}~\ref{lem:dist-upper}) &\geq \frac{n \cdot \binom{m}{t}}{3 \binom{nm}{t}} \cdot \frac{1}{(10nm)^{4(m - t)}} - \sum_{j=1}^{t-1} \frac{\Pr_{\pi}[C_{\pi} = j] \cdot m}{(10nm)^{4(m - j)}} \\
(\text{Lemma}~\ref{lem:prob-bound-small-marginal}) &\geq \frac{n \cdot \binom{m}{t}}{\binom{nm}{t}} \cdot \left(\frac{1}{3} \cdot \frac{1}{(10nm)^{4(m - t)}} - \sum_{j=1}^{t-1} \frac{(nm)^{3(t - j)} m}{(10nm)^{4(m - j)}}\right) \\
&\geq \frac{n \cdot \binom{m}{t}}{\binom{nm}{t}} \cdot \left(\frac{1}{3} - \sum_{j=1}^{t-1} \frac{1}{10^{t - j}}\right) \cdot \frac{1}{(10nm)^{4(m - t)}} \\
&\geq \frac{n \cdot \binom{m}{t}}{\binom{nm}{t}} \cdot \frac{1}{10} \cdot \frac{1}{(10nm)^{4(m - t)}} \\
&\geq \frac{1}{(nm)^t} \cdot \frac{1}{10} \cdot \frac{1}{(10nm)^{4(m - t)}} \\
&\geq \frac{1}{(10nm)^{5m}}. \qedhere
\end{align*}
\end{proof}

\section{Conclusion and Open Questions}
\label{sec:conclusion}

In this work, we provide an improved analysis for the split and mix protocol of Ishai et al.~\cite{balle_privacy_2019} in the shuffled model. Our analysis reduces the number of messages required by the protocol by a logarithmic factor. Moreover, for a large range of parameters, we give an asymptotically tight lower bound in terms of the number of messages that each party needs to send for \emph{any} protocol for secure summation. 

Although our lower bound is tight in terms of the number of messages, it does not immediately imply any communication lower bound beyond the trivial $\log q$ bound. For instance, when $q = n^{\log n}$ and $\sigma$ is a constant, then the number of messages needed by Ishai et al.'s protocol is $O\left(\frac{\log q}{\log n}\right) = O(\log n)$ but each message is also of length $O(\log q)$. However, our lower bound does not preclude a protocol with the same number of messages but of length only $O(\log n)$ bits. It remains an interesting open question to close this gap.

Another interesting open question is whether we can give a lower bound for $(\varepsilon, \delta)$-differentially private summation protocols when $\varepsilon$ is a constant. Currently, our lower bound does not give anything in this regime. In fact, to the best of our knowledge, it remains possible that an $(\varepsilon, 0)$-differentially private summation protocol exists where each party sends only $O_{\varepsilon}(\log n)$ bits. Coming up with such a protocol, or proving that one does not exists, would be a significant step in understanding the power of differential private algorithms in the shuffled model.

%
% ---- Bibliography ----
%
% BibTeX users should specify bibliography style 'splncs04'.
% References will then be sorted and formatted in the correct style.
%
\bibliographystyle{splncs04}
\bibliography{constant_multi_message_summation}
%
% \begin{thebibliography}{8}
% \bibitem{ref_article1}
% Author, F.: Article title. Journal \textbf{2}(5), 99--110 (2016)

% \bibitem{ref_lncs1}
% Author, F., Author, S.: Title of a proceedings paper. In: Editor,
% F., Editor, S. (eds.) CONFERENCE 2016, LNCS, vol. 9999, pp. 1--13.
% Springer, Heidelberg (2016). \doi{10.10007/1234567890}

% \bibitem{ref_book1}
% Author, F., Author, S., Author, T.: Book title. 2nd edn. Publisher,
% Location (1999)

% \bibitem{ref_proc1}
% Author, A.-B.: Contribution title. In: 9th International Proceedings
% on Proceedings, pp. 1--2. Publisher, Location (2010)

% \bibitem{ref_url1}
% LNCS Homepage, \url{http://www.springer.com/lncs}. Last accessed 4
% Oct 2017
% \end{thebibliography}

\appendix

\section{Proofs of Bounds for Multinomial Coefficients}\label{sec:multinomial_bds_pfs}

Below we prove Facts~\ref{fact:multichoose-additive} and~\ref{fact:multichoose-ineq} from Section~\ref{sec:ub_pf}.

\begin{proof}[Proof of Fact~\ref{fact:multichoose-additive}]
Let $U = [a_1 + a'_1 + \cdots + a_k + a'_k], A = [a_1 + \cdots + a_k]$ and $B = U \setminus A$.

Consider the following process of generating a partition $S_1 \sqcup \cdots \sqcup S_k = U$. First, take a partition $T_1 \sqcup \cdots \sqcup T_k = A$ and a partition $T'_1 \sqcup \cdots \sqcup T'_k = B$. Then, let $S_i = T_i \cup T'_i$ for all $i \in [k]$.

Notice that each pair of $T_1 \sqcup \cdots \sqcup T_k$ with $|T_i| = a_i$ and $T'_1 \sqcup \cdots \sqcup T'_k$ with $|P_i| = a'_i$ produces different $S_1 \sqcup \cdots \sqcup S_k = U$ with $|S_i| = a_i + a'_i$. Since the number of such pairs $T_1 \sqcup \cdots \sqcup T_k$ and $T'_1 \sqcup \cdots \sqcup T'_k$ is $\binom{a_1 + \cdots + a_k}{a_1, \dots, a_k} \cdot \binom{a'_1 + \cdots + a'_k}{a'_1, \dots, a'_k}$ and the number of $S_1 \sqcup \cdots \sqcup S_k = U$ with $|S_i| = a_i + a'_i$ is only $\binom{a_1 + \cdots + a_k + a'_1 + \cdots + a'_k}{a_1 + a'_1, \dots, a_k + a'_k}$, we have
\begin{align*}
\binom{a_1 + \cdots + a_k + a'_1 + \cdots + a'_k}{a_1 + a'_1, \dots, a_k + a'_k} \geq \binom{a_1 + \cdots + a_k}{a_1, \dots, a_k} \cdot \binom{a'_1 + \cdots + a'_k}{a'_1, \dots, a'_k}
\end{align*}
as desired.
\end{proof}

\begin{proof}[Proof of Fact~\ref{fact:multichoose-ineq}]
Assume w.l.o.g. that $a_1 \leq a_2 \leq \cdots \leq a_k$. We have
\begin{align*}
\binom{a_1 + \cdots + a_k}{a_1, \dots, a_k} = \prod_{i=1}^k \binom{a_i + \cdots + a_k}{a_i}
&\geq \prod_{i=1}^{\lfloor k/2 \rfloor} \binom{a_i + \cdots + a_k}{a_i} \\
&\geq \prod_{i=1}^{\lfloor k/2 \rfloor} (a_i + \cdots + a_k) \\
&\geq \left(\frac{a_1 + \cdots + a_k}{2}\right)^{\lfloor k/2 \rfloor},
\end{align*}
where the last inequality uses the fact that $a_1 \leq \cdots \leq a_k$.
\end{proof}

\section{Proof of Corollary~\ref{cor:up_bd_DP}}\label{sec:cor_pf}

Corollary~\ref{cor:up_bd_DP} follows from our main theorem (Theorem~\ref{th:up_bd_sec}) and the connection between secure summation protocols and differentially private summation protocols due to Balle et al. \cite{BBGN19}. We recall the latter below.

\begin{lemma}[Lemma 4.1 of \cite{BBGN19}]\label{le:red_sec_to_priv}
Given a $\sigma$-secure protocol in the anonymized setting for $n$-party summation over the domain $\F_q$, where each party sends $f(q,n,\sigma)$ messages each of $g(q, n, \sigma)$ bits, there exists an $(\epsilon, (1+e^{\epsilon}) 2^{-\sigma-1})$-differentially private protocol in the shuffled model for real summation with absolute error $O(1+1/\epsilon)$ where each party sends $f(O(n^{3/2}),n,\sigma)$ messages each of $g(O(n^{3/2}),n,\sigma)$ bits.
\end{lemma}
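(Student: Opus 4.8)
The plan is to make the reduction's protocol explicit and then verify, separately, its accuracy, its privacy (where $\sigma$-security enters purely as a black box), and the message/communication bookkeeping that produces the stated $O(n^{3/2})$ domain. Fix $k = \Theta(\sqrt{n})$ and let $q = \Theta(n^{3/2})$ be large enough that $q/2$ exceeds $nk$ by a sufficiently large factor; this is the only place the domain size inflates to $O(n^{3/2})$. Each party $i$ holding $x_i \in [0,1]$: (i) performs randomized rounding, setting $\bar x_i = \lfloor k x_i \rfloor + B_i$ with $B_i \sim \mathrm{Bernoulli}(k x_i - \lfloor k x_i\rfloor)$, so $\bar x_i \in \{0,\dots,k\}$ and $\E[\bar x_i] = k x_i$; (ii) samples a noise share $z_i = G_i - G'_i$ with $G_i, G'_i$ i.i.d.\ negative binomial of parameters $(1/n, \tau)$, $\tau = e^{-\epsilon/k}$, so that $\sum_i z_i$ has the discrete Laplace law $\mathrm{DLap}(\tau)$ on $\mathbb{Z}$ (using infinite divisibility of the geometric, hence of $\mathrm{DLap}$); (iii) feeds $w_i := \bar x_i + z_i \bmod q$ to the given $\sigma$-secure summation protocol. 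The analyzer runs that protocol's analyzer to obtain $W = \sum_i w_i \bmod q$, lifts $W$ to the representative in $(-q/2,\,q/2]$, and outputs $W/k$.

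\textbf{Accuracy.} Condition on the no-overflow event $\big|\sum_i \bar x_i + \sum_i z_i\big| < q/2$ over $\mathbb{Z}$. Since $\sum_i \bar x_i \le nk$ and $|\mathrm{DLap}(\tau)| \le O((k/\epsilon)\log n)$ except with probability $1/\poly(n)$, this event holds except with probability $1/\poly(n)$ by the choice of $q$, and the complement contributes only $o(1)$ to the expected error because the output magnitude is at most $O(q/k)$. On the no-overflow event $W = \sum_i \bar x_i + \sum_i z_i$ exactly, so $W/k - \sum_i x_i = \tfrac1k\sum_i (\bar x_i - k x_i) + \tfrac1k\,\mathrm{DLap}(\tau)$; the first summand is mean $0$ with standard deviation $O(\sqrt{n}/k) = O(1)$, and the second has expected absolute value $O((k/\epsilon)/k) = O(1/\epsilon)$. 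Hence the expected absolute error is $O(1 + 1/\epsilon)$.

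\textbf{Privacy.} Fix neighboring inputs $\bx, \bx'$ differing only at party $i^*$, and let $M(\cdot)$ be the shuffled transcript. Couple the randomized roundings and noise shares of all parties $j \ne i^*$ to be identical in the two worlds; then the noised vectors $\mathbf{w},\mathbf{w}'$ agree off coordinate $i^*$ and $\sum_j w_j - \sum_j w'_j = \bar x_{i^*} - \bar x'_{i^*} =: v$ with $|v|\le k$ always. Now invoke $\sigma$-security as a black box: for every realization of $\mathbf{w}$, the transcript $\encshufdist{\mathbf{w}}{\enc}$ is within statistical distance $2^{-\sigma}$ of the canonical distribution $\mathcal C_s := \encshufdist{(s,0,\dots,0)}{\enc}$, which depends on the data only through $s = \sum_j w_j \bmod q$; hence $M(\bx)$ is $2^{-\sigma}$-close (statistically) to the mixture $\mathcal C_W$ driven by the law of $W := \sum_j w_j \bmod q$, and likewise $M(\bx')$ to $\mathcal C_{W'}$. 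Finally, conditioning on $v$, $W$ is distributed as $W' + v \pmod q$ with the remaining randomness an independent $\mathrm{DLap}(\tau)$ term; since $|v|\le k$ and $\tau = e^{-\epsilon/k}$, the discrete Laplace mechanism is $(\epsilon,0)$-DP and reduction mod $q$ is post-processing, so $\Pr[W \in T] \le e^{\epsilon}\Pr[W' \in T]$ for all $T$, which transfers to the mixtures $\mathcal C_{(\cdot)}$ by linearity. Chaining the three estimates with the $(\epsilon,\delta)$ triangle inequality — and splitting the statistical slack as in the proof of Theorem~\ref{thm:main-central} to replace $2^{-\sigma}$ by $2^{-\sigma-1}$, following the accounting of~\cite{BBGN19} — yields $(\epsilon,(1+e^{\epsilon})2^{-\sigma-1})$-differential privacy.

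\textbf{Bookkeeping and main obstacle.} Each party submits a single $\F_q$ element to the secure protocol, hence sends $f(O(n^{3/2}),n,\sigma)$ messages of $g(O(n^{3/2}),n,\sigma)$ bits, matching the statement. I expect the privacy step to be the crux: one must (a) identify a per-sum canonical distribution so that purely black-box $\sigma$-security already gives closeness to something depending on the data only through $\sum_j w_j$, (b) establish pure $\epsilon$-DP of the noised modular sum via the worst-case coupling of the randomized roundings together with infinitely-divisible noise calibrated to sensitivity $k$, and (c) compose a statistical-distance bound with a multiplicative $e^{\epsilon}$ bound correctly; the only subtlety in accuracy is controlling modular wraparound, which is exactly what forces $q = \Theta(n^{3/2})$.
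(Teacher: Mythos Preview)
The paper does not prove this lemma; it is quoted as Lemma~4.1 of~\cite{BBGN19} and used purely as a black box (the only addition being the remark that one may take $q$ to be a prime in $\Theta(n^{3/2})$ so that $\F_q$ is a field). Your proposal is therefore not comparable to anything in the present paper, but it is a faithful reconstruction of the reduction in~\cite{BBGN19}: randomized rounding at precision $k=\Theta(\sqrt n)$, infinitely-divisible discrete-Laplace noise split across the $n$ parties, a single invocation of the secure-summation primitive over $\F_q$ with $q=\Theta(n^{3/2})$ to avoid wraparound, and decoding by centered lift and rescaling. The accuracy analysis and the message/bit bookkeeping are correct.

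One genuine (if minor) gap is in your derivation of the exact constant $\delta=(1+e^{\epsilon})2^{-\sigma-1}$. From black-box $\sigma$-security you only get that $\encshufdist{\mathbf w}{\enc}$ is within statistical distance $2^{-\sigma}$ of any fixed canonical $\mathcal C_{\sum_j w_j}$ (e.g., your choice $\encshufdist{(s,0,\dots,0)}{\enc}$), and chaining two such bounds around a pure-$\epsilon$ step yields $\delta=(1+e^{\epsilon})2^{-\sigma}$, not $(1+e^{\epsilon})2^{-\sigma-1}$. Your appeal to ``splitting the statistical slack as in the proof of Theorem~\ref{thm:main-central}'' is misplaced: that theorem is specific to the split-and-mix encoder (it proves closeness $\gamma/2$ to the \emph{uniform} distribution conditioned on the sum), whereas the present lemma is stated for an \emph{arbitrary} $\sigma$-secure protocol, for which no canonical distribution at distance $2^{-\sigma-1}$ is guaranteed a priori. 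This factor of two is immaterial for the downstream application in Corollary~\ref{cor:up_bd_DP} (it is absorbed into the choice of $\sigma$), but to match the stated constant you must follow the accounting in~\cite{BBGN19} rather than invoke Theorem~\ref{thm:main-central}.
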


Corollary~\ref{cor:up_bd_DP} now follows immediately by applying Lemma~\ref{le:red_sec_to_priv} and Theorem~\ref{th:up_bd_sec} with $\sigma = 1 + \log\left(\frac{1 + e^{\varepsilon}}{\delta}\right) = O\left(1 + \varepsilon + \log(1/\delta)\right)$.

We remark here that Lemma~\ref{le:red_sec_to_priv} as stated above is slightly different from Lemma 4.1 of~\cite{BBGN19}. In particular, in~\cite{BBGN19}, the statement requires the secure summation protocol to works for any $\mathbb{Z}_q$ even when $q$ is not a prime power. On the other hand, our analysis in this paper (which uses rank of matrices) only applies to when $q$ is a prime power (i.e., $\F_q$ is a field). However, it turns out that this does not affect the connection too much: instead of picking $q = 2\lceil n^{3/2} \rceil$ as in~\cite{BBGN19}, we may pick $q$ to be the smallest prime larger than $2n^{3/2}$. In this case, $q$ remains $O(n^{3/2})$ and the remaining argument of~\cite{BBGN19} remains exactly the same.

\end{document}